\newtheorem{theorem}{Theorem}
\newtheorem*{theorem*}{Theorem}
\newtheorem{lemma}{Lemma}
\newtheorem{claim}{Claim}
\newtheorem{example}[theorem]{Example}
\theoremstyle{definition}
\newtheorem{definition}{Definition}
\newcommand{\hide}[1]{}
\newcommand{\s}[1]{\mathsf{#1}}
\newcommand{\mpb}{\mathsf{MPB}}
\newcommand{\A}{\mathcal{A}}
\newcommand{\x}{\mathbf{x}}
\newcommand{\y}{\mathbf{y}}
\newcommand{\z}{\mathbf{z}}
\newcommand{\p}{\mathbf{p}}
\newcommand{\N}{\mathbb N}
\newcommand{\R}{\mathbb R}
\newcommand{\poly}[1]{\mathsf{poly}(#1)}
\newcommand{\wps}{\mathsf{WPS}}
\title{Weighted EF1 and PO Allocations with \\Few Types of Agents or Chores\thanks{Work supported by NSF Grants CCF-1942321 and CCF-2334461.}} 
\author{Jugal Garg\footnote{University of Illinois at Urbana-Champaign, USA} \\
\texttt{\small jugal@illinois.edu} 
\and
Aniket Murhekar\footnote{University of Illinois at Urbana-Champaign, USA}\\
\texttt{\small aniket2@illinois.edu}
\and
John Qin\footnote{University of Illinois at Urbana-Champaign, USA}\\
\texttt{\small johnqin2@illinois.edu}
}
\date{}
\begin{document}
\renewcommand{\arraystretch}{1.2}
\maketitle

\begin{abstract}
We investigate the existence of fair and efficient allocations of indivisible chores to asymmetric agents who have unequal entitlements or weights. We consider the fairness notion of weighted envy-freeness up to one chore (wEF1) and the efficiency notion of Pareto-optimality (PO). The existence of EF1 and PO allocations of chores to symmetric agents is a major open problem in discrete fair division, and positive results are known only for certain structured instances. In this paper, we study this problem for a more general setting of asymmetric agents and show that an allocation that is wEF1 and PO exists and can be computed in polynomial time for instances with:
\begin{itemize}[leftmargin=*]
\item Three types of agents, where agents with the same type have identical preferences but can have different weights. 
\item Two types of chores, where the chores can be partitioned into two sets, each containing copies of the same chore.
\end{itemize}
For symmetric agents, our results establish that EF1 and PO allocations exist for three types of agents and also generalize known results for three agents \cite{GMQ23chores}, two types of agents \cite{GMQ23chores}, and two types of chores \cite{aziz2022twotypes}. Our algorithms use a weighted picking sequence algorithm as a subroutine; we expect this idea and our analysis to be of independent interest. 
\end{abstract}

\section{Introduction}
Fair division is a ubiquitous problem in many disciplines, such as computer science, economics, social choice, and multi-agent systems. While the formal study began with the cake-cutting problem proposed by Steinhaus \cite{steinhaus}, which concerned the fair division of a divisible good, the fair division of indivisible items has received considerable attention in recent times (see excellent surveys \cite{aziz2022survey,amanatidis2023survey}). Although fairness of an allocation is inherently subjective, envy-freeness (EF) is a quintessential and well-established notion of fairness \cite{foleyEF}. Envy-freeness requires that every agent (weakly-)prefers the items allocated to her to those allocated to others. Unfortunately, EF allocations need not exist when items are indivisible, as can be seen from the simple example of assigning one task to two agents. Hence, relaxations of EF have been defined to qualify fairness in the discrete case, with envy-free up one item (EF1) being one such popular relaxation. When agent preferences are monotone, EF1 allocations always exist and can be computed in polynomial time \cite{Lipton04onapproximately,bhaskar2020chores}.

\begin{table*}[t]
\centering
\begin{tabular}{|c||c|c|}\hline
Instance & EF1 + fPO & wEF1 + fPO \\\hhline{|=#=|=|}
General Additive & ? & ? \\\hline
Two agents & \checkmark\cite{aziz2019chores} & \checkmark\cite{wu2023wef1} \\\hline
Three agents & \checkmark\cite{GMQ23chores} & \cellcolor{blue!25} \checkmark\cref{thm:three-agent-types} \\\hline
Two-agent-types & \checkmark\cite{GMQ23chores} & \cellcolor{blue!25} \checkmark\cref{thm:three-agent-types} \\\hline
Three-agent-types & \cellcolor{blue!25} \checkmark\cref{thm:three-agent-types} & \cellcolor{blue!25} \checkmark\cref{thm:three-agent-types} \\\hline
Two-chore-types & \checkmark\cite{aziz2022twotypes} & \cellcolor{blue!25} \checkmark\cref{thm:two-chore-types} \\\hline
Bivalued & \checkmark\cite{Garg_Murhekar_Qin_2022,ebadian2021bivaluedchores} & \checkmark\cite{wu2023wef1} \\\hline
\end{tabular}
\caption{State-of-the-art for EF1/wEF1+fPO allocation of indivisible chores. \checkmark~denotes existence/polynomial-time algorithm, ? denotes (non-)existence is unknown. Colored cells highlight our results.}\label{tab:results}
\end{table*}

A fair allocation can be quite sub-optimal in terms of overall efficiency: a set of tasks could be allocated so that every agent is assigned tasks they are bad at; as a result, agents may not envy each other by much, but the allocation is inefficient. It is, therefore, natural to seek allocations that are both fair as well as efficient. The classic notion of economic efficiency is Pareto-optimality (PO): an allocation is PO if there is no re-allocation such that every agent weakly prefers the re-allocation while some agents strictly prefer it. Allocations that are simultaneously EF1 and PO are thus highly desirable, and many works have investigated the existence and fast computation of such allocations in various settings, e.g., \cite{caragiannis16nsw-ef1,Barman18FFEA,GargM21,ebadian2021bivaluedchores,Garg_Murhekar_Qin_2022,GMQ23chores,aziz2019chores,aziz2022twotypes}. A common assumption in these works, including ours, is that agent preferences are additive, i.e., the value to an agent from a set of items is the sum of the values of items in the set. Under additive preferences, merely checking if an allocation is PO is a co-NP-hard problem. This adds to the challenge of investigating the existence and computation of EF1 and PO allocations.

Further, the difficulty of the problem is significantly influenced by the nature of the items, as is the definition of EF1. When items are goods and provide value to the agents receiving them, in an EF1 allocation, every agent prefers her bundle to that of another after the removal of one good from the other agent's bundle. For goods, an EF1 and PO allocation is known to exist and can be computed in pseudo-polynomial time. In a pivotal paper, \cite{caragiannis16nsw-ef1} proved that an allocation with the highest Nash welfare — product of the agents' utilities — is both EF1 and PO. This approach does not lead to fast computation since computing an allocation with maximum Nash welfare is APX-hard. Remarkably, \cite{Barman18FFEA} designed a pseudo-polynomial time algorithm for this problem using the idea of a competitive equilibrium. In this approach, agents are endowed with a fictitious amount of money, the goods are assigned prices, and each agent is allocated goods that give them ‘maximum value-for-money’. The latter ensures that the allocation is fractionally PO (fPO), an efficiency property stronger than PO. While the allocation is not EF1, they transfer goods between agents to reduce envy and make appropriate price changes to maintain PO. Using involved potential function arguments, they prove their algorithm terminates with an EF1 and PO allocation. Later, \cite{GargM21} showed an EF1 and fPO allocation can be computed in pseudo-polynomial time and in polynomial time when agents have a constant number of different values for the goods. Despite these results, designing a polynomial time algorithm remains an open problem. 

On the other hand, when items are chores and impose a cost on agents receiving them, in an EF1 allocation every agent prefers her bundle to that of another after the removal of one chore from her bundle. For chores, even the existence of an EF1 and PO allocation is unclear, let alone computation. At first sight, it may seem that the case of goods and chores are similar, and techniques from the goods setting should be directly adaptable to chores. However, this does not seem to be the case. Firstly, a welfare function like Nash welfare guaranteeing EF1 is not known for chores. Secondly, while the competitive equilibrium approach is promising since it guarantees PO, showing that algorithms using this approach terminate has proved to be challenging. As a result, the existence and computation of an EF1 and PO allocation of chores remains a challenging open problem in discrete fair division. To understand the ‘source of hardness’ of this problem, a series of works have focused on identifying structured instances where the problem becomes tractable, i.e., where EF1 and PO allocations exist. These classes include instances with (i) identical agents (folklore), (ii) two agents \cite{aziz2019chores}, (iii) bivalued disutilities \cite{Garg_Murhekar_Qin_2022,ebadian2021bivaluedchores}, (iv) three agents \cite{GMQ23chores}, (v) two types of agents \cite{GMQ23chores}, and (vi) two types of chores \cite{aziz2022twotypes}. While the computation of EF1 and PO allocations is fairly straightforward for (i) and (ii) using ideas for goods, that for classes (iii) - (vi) is non-trivial. These results follow from carefully designed algorithms, all of which use the competitive equilibrium framework but require involved potential function arguments to prove termination. Table \ref{tab:results} lists these results.

All of the above works assume that agents have equal entitlements, capacities, or stakes. This is a limiting assumption, as practical scenarios may involve agents with different capacities for handling workload. For example, Alice, a teaching faculty at a university, may agree to teach twice as many courses as Bob, a research faculty. Thus, Alice will only envy Bob if she feels the workload from the courses assigned to her is more than twice that of Bob. Likewise, the dissolution of a partnership poses the problem of fairly dividing liabilities; in this context, it is only natural that they be divided according to the entitlements/shares of the partners. These scenarios motivate the definition of weighted envy-freeness (wEF) and its relaxation weighted envy-freeness up to one item (wEF1) in the discrete case. Naturally, like the unweighted case, one seeks fair and efficient allocations, which leads us to the existence and computation of a wEF1 and PO allocation of chores. This question is interesting not only because it generalizes an important open problem in discrete fair division but also because it naturally models practical chore division scenarios. 

For goods, \cite{chak2020wef1} showed that a wEF1 (without PO) allocation can be computed via a \textit{weighted picking sequence algorithm}. This algorithm proceeds in $m$ rounds until all $m$ goods are allocated. In each round, a particular agent is chosen who picks her favorite good among the remaining goods. For chores, the existence of wEF1 allocations was only recently shown \cite{wu2023wef1} via a modification of the weighted picking sequence algorithm, and develops a novel analysis technique. \cite{chak2020wef1} showed that wEF1 and PO allocation exists and can be computed in pseudo-polynomial time for goods by adapting the algorithm of \cite{Barman18FFEA} to the weighted case. For chores, \cite{wu2023wef1} also showed that wEF1 and PO allocation can be computed for two agents and for bivalued chores (classes (ii), (iii) above). The existence of wEF1 and PO allocations for chores remained an open problem, including for the subclasses (iv)-(vi).

\subsection{Our Contributions}
In this work, we study the existence and computation of wEF1 and PO allocations of chores to agents with unequal entitlements. We prove that a wEF1 and fPO allocation exists and can be computed in polynomial time for instances with
\begin{itemize}[leftmargin=*]
\item Three types of agents (Theorem 1). In a \textit{three-agent-type} instance, the disutility function of an agent is one of three given functions. 
\item Two types of chores (Theorem 2). In a \textit{two-chore-type} instance, the chores can be partitioned into two sets, each containing copies of the same chore. 
\end{itemize}

\paragraph{Significance.} Theorem 1 establishes another class for which EF1 and PO allocations exist, namely three-agent-type instances. Theorem 1 also subsumes previous works showing the existence of EF1 and PO allocations for (i) two agents \cite{aziz2019chores}, (ii) three agents \cite{GMQ23chores}, and (iii) two-agent-types \cite{GMQ23chores}. We note that improving the existence from three agents to three types of agents in the symmetric setting is itself significant, e.g., for another popular fairness notion of \emph{maximin share (MMS)}, MMS allocations exist for two agents but not for two types of agents~\cite{FeigeST21}. Theorem 2 subsumes previous work showing EF1 and PO allocations exist for two-chore-type instances \cite{aziz2022twotypes}, and also provides an alternative algorithm for the symmetric case. Moreover, our ideas can also be applied to the goods setting to show that wEF1 and PO allocations can be computed in polynomial time for these classes. We record our results in Table \ref{tab:results}.

Our results can also be of practical significance in certain settings. For instance, a cluster may have machines with three types of processing power, e.g., CPUs, GPUs, and TPUs (three types of agents). Likewise, allocation problems may involve jobs that are either heavy or light (two types of chores).

\paragraph{Techniques.} We first remark that simply replacing an agent with as many copies of the agent as their weight, computing an EF1 and PO allocation, and combining the allocation of all the copies does not guarantee a wEF1 and PO allocation; see Appendix \ref{app:examples} for an example. Our algorithms use the competitive equilibrium framework to ensure fPO (hence PO). To obtain wEF1, chores are transferred from one set of agents to another while performing appropriate changes to the chore payments\footnote{Chores have attached payments while goods have prices.} to maintain a competitive allocation. However, these transfers and payment changes are carefully and selectively performed so that we can guarantee the termination of our algorithms. 

We first design \cref{alg:three-agent-types} to compute a wEF1 and fPO allocation for three-agent-type instances using a novel combination of weighted picking sequence and competitive equilibrium framework. We group agents according to their type and allocate chores to agents in a group using a ‘weighted picking sequence’ algorithm. 
We begin by allocating all chores to one group and transferring chores away from this group while ensuring that agents in the other two groups do not wEF1-envy each other. For fPO, we carefully maintain the allocation at a competitive equilibrium throughout the algorithm.

We next design \cref{alg:two-chore-types} to compute a wEF1 and fPO allocation for two-chore-type instances. We first observe that fPO allocations in two-chore-type instances follow a certain ‘ordered’ structure \cite{aziz2022twotypes}. Leveraging this idea, we initially allocate all chores to one agent called the pivot and repeatedly transfer a chore away from the pivot. These transfers respect the ordered structure ensuring fPO and are performed until we eventually obtain a wEF1 allocation, or conclude that the initial choice of the pivot is incorrect. We argue that there exists some choice of the pivot for which the algorithm results in a wEF1 and fPO allocation.

The correctness and termination of our algorithms rely on several involved and novel potential function arguments. In particular, both our results crucially utilize properties of a \textit{weighted picking sequence} algorithm for chores (stated and proved in Lemmas \ref{lem:wps-transfer-1}, \ref{lem:wps-transfer-2}, \ref{lem:two-chore-type-wps}). We believe this may be of independent interest, and may find use in algorithm design for fair division to asymmetric agents in the future. 

\section{Preliminaries}\label{sec:prelim}
An instance $(N, W, M, D)$ of the fair division problem with chores consists of a set $N = [n]$ of $n$ agents, a list $W = \{w_i\}_{i\in N}$ with $w_i > 0$ denoting the weight of agent $i$, a set $M = [m]$ of $m$ indivisible chores, and a list $D = \{d_i\}_{i \in N}$, where $d_i : 2^M \rightarrow \R_{\ge 0}$ is agent $i$'s \textit{disutility} function over the chores. We let $d_i(j)$ denote the disutility of chore $j$ for agent $i$. We assume disutility functions are additive, so that for every $i \in N$ and $S \subseteq M$, $d_i(S) = \sum_{j \in S} d_i(j)$. We consider the following structured classes. An instance $(N,W,M,D)$ is said to be a:
\begin{itemize}[leftmargin=*]
\item \textit{$k$-agent-type} instance if there are $k$ \textit{types} of agents. That is, there is a set $C = \{c_1,\dots,c_k\}$ of $k\in \N$ disutility functions  s.t. for all $i\in N$, $d_i\in C$.
\item \textit{$k$-chore-type} instance if there are $k$ \textit{types} of chores. That is, the set of chores can be partitioned as $M = \bigcup_{\ell\in[k]} M_\ell$, where each set $M_\ell$ consists of copies of the same chore.
\end{itemize}

An \textit{integral allocation} $\x = (\x_1, \x_2, \ldots, \x_n)$ is a partition of the chores into $n$ bundles, where agent $i$ receives bundle $\x_i \subseteq M$ and gets disutility $d_i(\x_i)$. In a \textit{fractional allocation} $\x \in [0, 1]^{n \times m}$, chores are divisible and $x_{ij} \in [0, 1]$ denotes the fraction of chore $j$ given to agent $i$. Here $d_i(\x_i) = \sum_{j \in M} d_i(j) \cdot x_{ij}$. We will assume that allocations are integral unless explicitly stated otherwise.

\paragraph{Fairness and efficiency notions.}
An allocation $\x$ satisfies:
\begin{enumerate}[leftmargin=*]
\item Envy-free up to one chore (EF1) for symmetric agents if for all $i,h\in N$, $d_i(\x_i \setminus j) \leq d_i(\x_h)$ for some $j\in \x_i$. 
\item Weighted envy-free up to one chore (wEF1) if for all $i,h\in N$, $\frac{d_i(\x_i \setminus j)}{w_i} \leq \frac{d_i(\x_h)}{w_h}$ for some $j\in \x_i$. 
\item Pareto-optimal if there is no allocation $\y$ that dominates $\x$. An allocation $\y$ dominates an allocation $\x$ if for all $i \in N$, $d_i(\y_i) \leq d_i(\x_i)$, and there exists $h \in N$ such that $d_h(\y_h) < d_h(\x_h)$.
\item Fractionally Pareto-optimal if there is no fractional allocation that dominates $\x$. An fPO allocation is clearly PO, but not vice-versa.
\end{enumerate}

\paragraph{Competitive equilibrium of chores.} In the Fisher market model for chores, we associate payments $\p = (p_1, \ldots, p_m)$ with the chores. Each agent $i$ aims to earn a desired \textit{minimum payment} of $e_i \geq 0$ by performing chores in exchange for payment. In a (fractional) allocation $\x$ with payments $\p$, the \textit{earning} of agent $i$ is $\p(\x_i) = \sum_{j \in M} p_j \cdot x_{ij}$. For each agent $i$, we define the \textit{pain-per-buck} ratio $\alpha_{ij}$ of chore $j$ as $\alpha_{ij} = d_i(j) / p_j$ and the \textit{minimum-pain-per-buck} (MPB) ratio as $\alpha_i = \min_{j \in M} \alpha_{ij}$. Further, we let $\mpb_i = \{j \in M \mid d_i(j) / p_j = \alpha_i\}$ denote the set of chores which are MPB for agent $i$ under payments $\p$. 

We say that $(\x, \p)$ is a \textit{competitive equilibrium} (CE) if (i) for all $j \in M$, $\sum_{i \in N} x_{ij} = 1$, i.e., all chores are completely allocated, (ii) for all $i \in N$, $\p(\x_i) = e_i$, i.e., each agent receives her minimum payment, and (iii) for all $i \in N$, $\x_i \subseteq \mpb_i$, i.e., agents receive only chores which are MPB for them. The First Welfare Theorem \cite{mas1995microeconomic} shows that competitive equilibria are efficient, i.e., for a CE $(\x, \p)$, the allocation $\x$ is fPO.

For a CE $(\x, \p)$ where $\x$ is integral, we let $\p_{-1}(\x_i) := \min_{j \in \x_i} \p(\x_i \setminus j)$ denote the payment agent $i$ receives from $\x_i$ excluding her highest paying chore. 

\begin{definition}(Weighted payment EF1)\label{def:wpef1}
An allocation $(\x, \p)$ is said to be \textit{weighted payment envy-free up to one chore} (wpEF1) if for all $i, h \in N$ we have $\frac{\p_{-1}(\x_i)}{w_i} \leq \frac{\p(\x_h)}{w_h}$. Agent $i$ \textit{wpEF1-envies} $h$ if $\frac{\p_{-1}(\x_i)}{w_i} > \frac{\p(\x_h)}{w_h}$.  
\end{definition}

The following lemma shows a sufficient condition for computing a wEF1 and PO allocation.
\begin{lemma}\label{lem:pEF1impliesEF1}
Let $(\x,\p)$ be a CE with $\x$ integral. If $(\x,\p)$ is wpEF1, then $\x$ is wEF1 and fPO.
\end{lemma}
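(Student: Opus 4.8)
The plan is to establish the two conclusions separately. Fractional Pareto-optimality is immediate: since $(\x,\p)$ is a competitive equilibrium, the First Welfare Theorem stated in the preliminaries guarantees that $\x$ is fPO, so no further work is needed there. All the content lies in deriving wEF1 from wpEF1, and the bridge between the two notions is the MPB condition of a CE, which lets me convert statements about payments into statements about disutilities.

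The key observation I would use is that in a CE every agent receives only chores that are MPB for her. Concretely, for any $j \in \x_i$ we have $d_i(j) = \alpha_i\, p_j$, hence for any subset $S \subseteq \x_i$, $d_i(S) = \alpha_i\,\p(S)$; in particular the disutility of $i$'s own bundle scales linearly with its payment. For chores in another agent's bundle I only get an inequality: since $\alpha_i = \min_j \alpha_{ij}$, every $j \in \x_h$ satisfies $d_i(j) \ge \alpha_i\, p_j$, and summing gives $d_i(\x_h) \ge \alpha_i\,\p(\x_h)$. These two facts---an equality on $\x_i$ and a one-sided inequality on $\x_h$---are exactly what I need, and it is worth noting that the inequality points in the favorable direction (a lower bound on $d_i(\x_h)$, which sits on the larger side of the desired wEF1 inequality).

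With these in hand I would fix agents $i,h$ and let $j^\star$ be the highest-paying chore in $\x_i$, i.e.\ the chore removed to realize $\p_{-1}(\x_i) = \p(\x_i \setminus j^\star)$. Since disutility and payment are proportional on $\x_i$ (with factor $\alpha_i \ge 0$), $j^\star$ is also a highest-disutility chore, and $d_i(\x_i\setminus j^\star) = \alpha_i\,\p_{-1}(\x_i)$. Choosing this $j^\star$ as the witness for the wEF1 condition, I would then chain
\[
\frac{d_i(\x_i\setminus j^\star)}{w_i} = \alpha_i\cdot\frac{\p_{-1}(\x_i)}{w_i} \le \alpha_i\cdot\frac{\p(\x_h)}{w_h} \le \frac{d_i(\x_h)}{w_h},
\]
where the first inequality is the wpEF1 hypothesis multiplied by the nonnegative factor $\alpha_i$, and the second is the bound $\alpha_i\,\p(\x_h)\le d_i(\x_h)$ divided by $w_h>0$. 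This is precisely wEF1 with witness chore $j^\star$.

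I do not expect a serious obstacle; the argument is essentially a dictionary translation through the MPB ratio. The only points requiring care are the boundary cases and the directions of inequalities: if $\x_i=\emptyset$, or more generally $\alpha_i=0$ so that every chore in $\x_i$ has zero disutility for $i$, the left-hand side is $0$ and wEF1 holds trivially since disutilities are nonnegative; and one must confirm that multiplying the wpEF1 inequality by $\alpha_i$ preserves its direction, which holds because $\alpha_i\ge 0$. The single step where proportionality on $\x_i$ is genuinely used is verifying that $j^\star$ simultaneously maximizes payment and disutility on $\x_i$, so that removing it is legitimate on both sides of the computation.
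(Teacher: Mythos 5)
Your proposal is correct and follows essentially the same route as the paper: both derive fPO directly from the First Welfare Theorem and obtain wEF1 by multiplying the wpEF1 inequality by the MPB ratio $\alpha_i$, using $d_i(\x_i\setminus j^\star) = \alpha_i\,\p_{-1}(\x_i)$ on the own bundle and $\alpha_i\,\p(\x_h)\le d_i(\x_h)$ on the other bundle. Your write-up merely makes explicit what the paper leaves implicit, namely that the highest-paying chore is also a highest-disutility chore under MPB proportionality, along with the harmless boundary cases $\x_i=\emptyset$ and $\alpha_i=0$.
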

\begin{proof}
Let $\alpha_i$ be the MPB ratio of agent $i$ in $(\x,\p)$. Consider any pair of agents $i,h\in N$. We have:
\begin{equation*}
\min_{j\in\x_i}\frac{ d_i(\x_i\setminus j)}{w_i} = \frac{\alpha_i\cdot\p_{-1}(\x_i)}{w_i} \le \frac{\alpha_i\cdot \p(\x_h)}{w_h} \le \frac{d_i(\x_h)}{w_h},
\end{equation*}
where the first and last transitions use that $(\x,\p)$ is on MPB, and the middle inequality uses the \cref{def:wpef1}. This shows that $\x$ is wEF1. Moreover, the First Welfare Theorem \cite{mas1995microeconomic} implies that the allocation $\x$ is fPO since $(\x, \p)$ is a CE.
\end{proof}

An agent $\ell$ is called a \textit{weighted least earner} (wLE) among agent set $A$ if $\ell \in \s{argmin}_{i \in A} \frac{\p(\x_i)}{w_i}$. An agent $b$ is called a \textit{weighted big earner} (wBE) among agent set $A$ if $b \in \s{argmax}_{i \in A} \frac{\p_{-1}(\x_i)}{w_i}$. The next lemma shows the importance of the wBE and wLE agents.

\begin{lemma}\label{lem:BEenvyLE}
An integral CE $(\x,\p)$ is wpEF1 if and only if a wBE $b$ does not wpEF1-envy a wLE $\ell$.
\end{lemma}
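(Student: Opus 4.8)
The plan is to reduce the universally-quantified wpEF1 condition to a single extremal pair by exploiting the fact that the wBE and wLE are exactly the agents attaining the relevant extrema. Observe first that, unwinding \cref{def:wpef1}, the allocation $(\x,\p)$ is wpEF1 precisely when
\begin{equation*}
\max_{i\in N}\frac{\p_{-1}(\x_i)}{w_i}\ \le\ \min_{h\in N}\frac{\p(\x_h)}{w_h}.
\end{equation*}
By definition, a wBE $b$ (taken over the full agent set $A=N$) lies in $\s{argmax}_{i\in N}\frac{\p_{-1}(\x_i)}{w_i}$, and a wLE $\ell$ lies in $\s{argmin}_{h\in N}\frac{\p(\x_h)}{w_h}$. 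Thus the left-hand maximum equals $\frac{\p_{-1}(\x_b)}{w_b}$ and the right-hand minimum equals $\frac{\p(\x_\ell)}{w_\ell}$, so wpEF1 is equivalent to the single inequality $\frac{\p_{-1}(\x_b)}{w_b}\le\frac{\p(\x_\ell)}{w_\ell}$, which is exactly the statement that $b$ does not wpEF1-envy $\ell$.

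For the forward direction, I would simply note that if $(\x,\p)$ is wpEF1, then the defining inequality holds for every pair, in particular for the pair $(i,h)=(b,\ell)$, giving $\frac{\p_{-1}(\x_b)}{w_b}\le\frac{\p(\x_\ell)}{w_\ell}$; hence $b$ does not wpEF1-envy $\ell$.

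For the converse, I would assume $\frac{\p_{-1}(\x_b)}{w_b}\le\frac{\p(\x_\ell)}{w_\ell}$ and verify the wpEF1 condition for an arbitrary pair $i,h\in N$ by chaining the extremal properties of $b$ and $\ell$:
\begin{equation*}
\frac{\p_{-1}(\x_i)}{w_i}\ \le\ \frac{\p_{-1}(\x_b)}{w_b}\ \le\ \frac{\p(\x_\ell)}{w_\ell}\ \le\ \frac{\p(\x_h)}{w_h},
\end{equation*}
where the first inequality uses that $b$ maximizes $\frac{\p_{-1}(\x_\cdot)}{w_\cdot}$, the middle inequality is the hypothesis, and the last uses that $\ell$ minimizes $\frac{\p(\x_\cdot)}{w_\cdot}$. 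Since $i,h$ were arbitrary, $(\x,\p)$ is wpEF1.

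There is no substantive obstacle here: the only point requiring care is to confirm that the wBE and wLE are taken over the entire agent set $N$ (rather than a proper subset $A$), so that their extremal guarantees apply to every agent appearing in the wpEF1 quantifiers. Once that is fixed, both directions follow immediately from the characterization of wpEF1 as a max-over-$i$ versus min-over-$h$ comparison.
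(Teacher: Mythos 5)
Your proposal is correct and matches the paper's own proof essentially verbatim: the forward direction is immediate from the definition, and the converse uses the same three-step chain $\frac{\p_{-1}(\x_i)}{w_i} \le \frac{\p_{-1}(\x_b)}{w_b} \le \frac{\p(\x_\ell)}{w_\ell} \le \frac{\p(\x_h)}{w_h}$ via the extremal properties of the wBE and wLE. Your added remark about taking the extrema over all of $N$ is a fine clarification but does not change the argument.
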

\begin{proof}
$(\Rightarrow)$ If $(\x, \p)$ is wpEF1 then clearly $b$ does not wpEF1-envy $\ell$. 

\noindent $(\Leftarrow)$ Suppose $b$ does not wpEF1-envy $\ell$. Then the following shows that no agent $i$ wpEF1-envies any agent $h$, implying that $(\x,\p)$ must be wpEF1.
\[ 
\frac{\p_{-1}(\x_i)}{w_i} \le \frac{\p_{-1}(\x_b)}{w_b} \le \frac{\p(\x_\ell)}{w_\ell} \le \frac{\p(\x_h)}{w_h}, 
\]
where the first and last inequalities use the definitions of wBE and wLE, and the middle inequality uses that $b$ does not wpEF1-envy $\ell$.
\end{proof}

\section{wEF1 and fPO Allocations in Three-Agent-Types Instances}
We now study the existence and computation of wEF1 and fPO allocations in instance with three agent types. We devise a polynomial time algorithm, \cref{alg:three-agent-types}, and show that it computes a wEF1 and fPO allocation for such instances. We therefore have the following theorem.
\begin{theorem}\label{thm:three-agent-types}
In any three-agent-types instance, a wEF1 and fPO allocation exists, and can be computed in polynomial time.
\end{theorem}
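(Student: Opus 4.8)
The plan is to reduce the theorem, via \cref{lem:pEF1impliesEF1} and \cref{lem:BEenvyLE}, to the single task of computing an integral competitive equilibrium $(\x,\p)$ in which a weighted big earner does not wpEF1-envy a weighted least earner. Keeping $(\x,\p)$ a CE throughout guarantees fPO for free (First Welfare Theorem), so the entire effort goes into driving down weighted payment-envy while preserving the MPB conditions. I would exploit the three-type structure by partitioning $N$ into three groups $G_1,G_2,G_3$, one per type, so that within a group all agents share a disutility function, and hence a common $\mpb$ set, while across groups the three distinct MPB ratios induce a partition of the chores. The central device is a weighted picking sequence run separately inside each group: because same-type agents agree on which chores are MPB, any set of chores assigned to a group can be split among its members by the picking sequence so that no two agents of the same group wpEF1-envy one another. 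This collapses the problem from controlling envy among all $n$ agents to controlling envy among the three groups.

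The algorithm initializes a CE by placing every chore on a single source group and setting its payments equal to that group's disutilities, so that all chores are MPB for it, with the intra-group split produced by the weighted picking sequence. I would then transfer chores out of the source group toward the other two (recipient) groups one at a time, each time rescaling the chore payments so that the moved chore becomes MPB for its recipient and $(\x,\p)$ remains a CE, and rerunning the weighted picking sequence inside the affected groups. The invariant I would maintain is that the two recipient groups never wpEF1-envy one another; transfers continue until the source group is also in balance, i.e., until the wBE no longer wpEF1-envies the wLE, at which point \cref{lem:BEenvyLE} certifies wpEF1 and \cref{lem:pEF1impliesEF1} delivers wEF1 and fPO. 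The transfer properties of the weighted picking sequence are exactly what make this step work: they certify that moving a single chore out of a group changes its members' earnings in a controlled, monotone way and that intra-group wpEF1 can always be re-established after a transfer.

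I expect the principal difficulty to be termination together with the preservation of the recipient-pair invariant, rather than correctness at the endpoint. The delicate point is that, at any configuration that is not yet globally wpEF1, one must show a \emph{progress-making} transfer always exists that (i) reduces the source group's relative over-earning, (ii) can be realized by a payment rescaling keeping $(\x,\p)$ a CE, and (iii) does not cause the two recipient groups to start wpEF1-envying each other; naive transfers and payment changes can otherwise cycle. I would handle this by making transfers \emph{selectively}, only out of the wBE group, only of a carefully chosen chore, and only with the minimal payment update needed to keep the equilibrium, and then exhibiting a potential function that strictly decreases per transfer. A natural choice combines the weighted earning of the source group with a lexicographic tiebreaker over the payment configuration, so that each transfer either lowers the source's weighted earning or makes measurable progress toward equalizing the groups' weighted earnings. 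Restricting to three groups is what keeps this tractable: a transfer out of the wBE group can only feed the at most two remaining groups, so the case analysis for where the wBE's envy can move stays bounded, and the fact that chores leave the source essentially monotonically caps the number of transfers.

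Finally, for the polynomial-time claim I would bound the number of transfer-and-rescaling iterations by a polynomial in $n$ and $m$ using the potential and the (near-)monotone departure of chores from the source, and observe that each iteration—identifying the wBE and wLE groups, selecting the chore and the payment scaling, and rerunning the weighted picking sequence inside the affected groups—runs in polynomial time. Combining these yields a polynomial-time procedure that outputs an integral wpEF1 CE, which by \cref{lem:pEF1impliesEF1} is wEF1 and fPO, completing the proof of \cref{thm:three-agent-types}.
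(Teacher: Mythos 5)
Your architecture matches the paper's (\cref{alg:three-agent-types}): groups by type, WPS within groups, all chores initially on one source group with payments equal to its disutilities, CE maintained throughout, and envy reduced to a three-group problem via \cref{lem:pEF1impliesEF1} and \cref{lem:BEenvyLE}. But the proposal has two genuine gaps exactly where the paper's work lies. First, your rule ``transfer only out of the wBE group, with minimal payment rescaling'' is not enough to make the invariants hold, and your potential function is only gestured at. The paper's algorithm must also perform \emph{gated} transfers from the middle group to the least-earner group (the check in Line 16 of \cref{alg:three-agent-types}), and its payment drops rescale the payments of \emph{entire group bundles} (one or two groups at once), not the single moved chore --- rescaling only the transferred chore would break MPB for the group currently holding it. The gating condition is precisely what yields \cref{lem:le-earning} (the weighted least earner's weighted earning never decreases under any chore transfer), and that monotonicity is the backbone of every subsequent correctness lemma (\cref{lem:le-to-be,lem:le-change,lem:be-becomes-le,lem:be-change}); without it, an ungated transfer into the least-earner group can strictly lower the global minimum weighted earning and your ``lexicographic'' potential has no reason to decrease. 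Termination in the paper does not come from a single potential but from the chain: the source group $N_1$ never gains chores, \cref{lem:be-loses-chore} shows it loses a chore every $\poly{m}$ steps while it remains the wBE group, and \cref{lem:be-change} caps what happens afterwards.

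Second, your endpoint claim --- ``transfers continue until the wBE no longer wpEF1-envies the wLE, at which point \cref{lem:BEenvyLE} certifies wpEF1'' --- is not what can be proved, and this is the subtlest point of the paper. Once the wBE leaves the source group, the unilateral-transfer monotone-progress argument collapses and nothing prevents cycling if you keep running until wpEF1. The paper instead proves (\cref{lem:be-change}, Case 2) that within at most one further transfer the allocation is \emph{wEF1 directly, not wpEF1}: the certificate compares earnings across two different time-steps $t_3 < t_4$ and crucially exploits that $N_1$ never participates in payment drops, so its MPB ratio stays equal to $1$ and payment inequalities at $t_3$ can be converted into disutility inequalities at $t_4$. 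Your plan, which insists on reaching a wpEF1 CE, would need a termination argument for the post-$N_1$ regime that you have not supplied and that the paper deliberately avoids by weakening the final guarantee from wpEF1 to wEF1. You would also need the specific one-chore transfer comparisons of \cref{lem:wps-transfer-1,lem:wps-transfer-2} (which you correctly anticipate qualitatively) to carry out these case analyses; they are the quantitative content behind ``controlled, monotone'' in your sketch.
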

We remark that our result generalizes the following previously known results regarding the polynomial time computability of allocations that are: (i) EF1+fPO for three unweighted agents \cite{GMQ23chores}, (ii) EF1+fPO for two types of unweighted agents \cite{GMQ23chores}, (iii) wEF1+fPO for two agents \cite{wu2023wef1}.

Let $N = N_1 \sqcup N_2 \sqcup N_3$ be a partition of the set of agents into three sets, called agent \textit{groups}, each containing agents of the same type. Let $d_i(\cdot)$ be the disutility function of agents in group $N_i$, for $i\in\{1,2,3\}$. Note that agents in the same group can have different entitlements.

\begin{algorithm}[t]
\caption{Weighted Picking Sequence ($\wps$) Algorithm}\label{alg:wps}
\textbf{Input:} Agents $N$ with identical disutility function $d(\cdot)$, set of chores $M$ s.t. $d_1\ge \dots \ge d_m$\\
\textbf{Output:} An wEF1 allocation $\x$
\begin{algorithmic}[1]
\State For each $i\in N$, $\x_i \gets \emptyset$, $s_i \gets 0$ 
\For{$j=1$ to $m$}
\State $i \gets \arg\min_{k\in N} \frac{s_k}{w_k}$; ties broken in favour of smaller index agent
\State $\x_i \gets \x_i \cup \{j\}$, $s_i\gets s_i+1$
\EndFor
\State \Return $\x$
\end{algorithmic}
\end{algorithm}

\subsection{Algorithm Description} 
Through the execution of \cref{alg:three-agent-types} we let $M_i$ denote the set of chores allocated to group $N_i$. Initially, all chores are allocated to a single group, $N_1$, i.e., $M_1 = M$, and $M_2 = M_3 = \emptyset$. At each point, the set of chores $M_i$ is allocated to the group $N_i$ by using the \textit{weighted picking sequence} (WPS) procedure, \cref{alg:wps}, described below. 

\paragraph{Weighted Picking Sequence (WPS) Algorithm.} The WPS algorithm (\cref{alg:wps}) takes as input a set of agents $N$ of the same type, i.e., with \textit{identical} disutility function $d(\cdot)$, and a set of chores $M$. The algorithm first sorts and re-labels the $m$ chores in non-increasing order of disutility, i.e., $d_1 \ge d_2 \ge \dots \ge d_m$. The algorithm then performs $m$ iterations, allocating one chore in each iteration. In iteration $j$, chore $j$ is allocated to an agent with the least value of $\frac{s_i}{w_i}$, where $s_i$ denotes the number of chores allocated to agent $i$ so far. It is known that the WPS procedure returns a wEF1 allocation for identical agents \cite{chak2020wef1,wu2023wef1}. Recently, \cite{wu2023wef1} showed that by allocating the chores in reverse order with each agent picking their least disutility chore in their turn results in a wEF1 allocation, even for non-identical agents. Since our algorithm uses the WPS algorithm to assign chores to agents of the same type, we do not require this generalization. Using the WPS algorithm to allocate chores $M_i$ to group $N_i$ ensures that agents within a group do not ever wEF1-envy each other, and any wEF1-envy is between agents belonging to different groups. To reduce wEF1-envy across groups, we transfer chores from one group to another. After each chore transfer the set of chores $M_i$ gets updated and is re-allocated to the group $N_i$ by using the WPS algorithm. We denote by $\wps(N_1, M_1) \cup \wps(N_2, M_2) \cup \wps(N_3, M_3)$ the allocation resulting from allocating $M_i$ to $N_i$ using WPS for each $i\in[3]$. 

\paragraph{Chore transfers and Payment drops: High-level Ideas.} To ensure that the resulting allocation is fPO, we attach payments to the chores and maintain that all allocations in the run of \cref{alg:three-agent-types} are on MPB, i.e., are competitive. \cref{alg:three-agent-types} performs two kinds of \textit{steps}: (i) \textit{chore transfers} and (ii) \textit{payment drops}. Chore transfers involve the transfer of a chore from one group to another, and payment drops involve decreasing the payments of all chores belonging to one or two agent groups.

Initially we set the payment of chore $j$ as $p_j = d_1(j)$. At some point in the algorithm let the \textit{weighted big earner group} $N_\beta$ be the group containing the weighted big earner (wBE) at the time. Likewise, let the \textit{weighted least earner group} $N_\lambda$ be the group containing the weighted least earner (wLE) at the time. We will call the group $N_\mu$ which contains neither as the \textit{weighted middle earner (wME) group}. If $\beta = \lambda$ then the allocation must already be wpEF1 by \cref{lem:BEenvyLE}. Initially $N_\beta = N_1$. We maintain that for the majority of the algorithm $N_\beta = N_1$, i.e., the wBE is in $N_1$, and transfer chores unilaterally from $N_1$ to $N_2$ and $N_3$. We also aim to maintain that agents in $N_2$ and $N_3$ do not wpEF1-envy each other, and perform chore transfers between $N_2$ and $N_3$ to eliminate any arising wpEF1-envy. 

\begin{algorithm}[!t]
\caption{wEF1+fPO for three-agent-types instances}\label{alg:three-agent-types}
\textbf{Input:} Three-agent-types instance $(N,W,M,D)$ \\
\textbf{Output:} A wEF1 and fPO integral allocation $\x$
\begin{algorithmic}[1]
\State Let $N = N_1 \sqcup N_2 \sqcup N_3$ be the partition of the agents according to their type
\State Let $d_i(\cdot)$ denote the disutility function of agents in group $N_i$, for $i\in\{1,2,3\}$
\State $M_1 \gets M, M_2 \gets \emptyset, M_3 \gets \emptyset$
\State $\x \gets \wps(N_1, M_1) \cup \wps(N_2, M_2) \cup \wps(N_3, M_3)$
\State For each $j \in M$, set $p_j \gets d_1(j)$
\While{$\x$ is not wpEF1}
    \State $b \gets \s{argmax}_{k \in N} \p_{-1}(\x_k)/w_k$ \Comment{Weighted big earner}
    \State $\ell \gets \s{argmin}_{k \in N} \p(\x_k)/w_k$\Comment{Weighted least earner}
    \State $\beta \gets i \in [3]$ s.t. $b \in N_i$, 
    \Comment{Index of wBE group}
    \State $\lambda \gets i \in [3]$ s.t. $\ell \in N_i$ \Comment{Index of wLE group}
    \State $\mu \gets i \in [3] \setminus \{\beta, \lambda\}$ \Comment{Index of wME group}
    \Statex \quad\; {$\triangleright$ Check for $N_\beta$ to $N_\lambda$ chore transfer} 
    \If{$\exists j \in M_\beta \cap \mpb_{\lambda}$} 
        \State $M_\beta \gets M_\beta \setminus j$, $M_{\lambda} \gets M_{\lambda} \cup j$
        \State $\x \gets \bigcup_{i\in[3]}\wps(N_i, M_i)$
    \Statex \quad\; {$\triangleright$ Check for potential $N_\mu$ to $N_\lambda$ chore transfer}
    \ElsIf{$|M_\mu \cap \mpb_{\lambda}| > 0$} 
        \Statex \quad\quad\quad $\triangleright$ Check if $N_\mu$ to $N_\lambda$ chore transfer is allowed
        \If{$\exists j\in M_\mu \cap \mpb_{\lambda}$ s.t. $\min\limits_{k \in N_\mu} \frac{\p(\y_k)}{w_k} > \frac{\p(\x_\ell)}{w_\ell}$ for $\y = \wps(N_\mu, M_\mu \setminus j)$}
            \State $M_\mu \gets M_\mu \setminus j$, $M_{\lambda} \gets M_{\lambda} \cup j$
            \State $\x \gets \bigcup_{i\in[3]}\wps(N_i, M_i)$
        \Statex \quad\quad\quad{$\triangleright$ $N_\beta$ to $N_\mu$ chore transfer}
        \ElsIf{$\exists j' \in M_\beta \cap \mpb_{\mu}$}
            \State $M_\beta \gets M_\beta \setminus j'$, $M_\mu \gets M_\mu \cup j'$
            \State $\x \gets \bigcup_{i\in[3]}\wps(N_i, M_i)$
        \Else \Comment{No chore can be transferred from $N_\beta$}
            \Statex \quad\quad\quad $\triangleright$ Lower payments of chores in $M_\mu \cup M_\lambda$
            \State $\gamma \gets \max_{i \in \{\lambda, \mu\}, j \in M_\beta} \frac{\alpha_i}{d_i(j)/p_{j}}$; $\alpha_i$ is the MPB ratio  of an agent in group $N_i$
            \For{$j \in M_\mu \cup M_\lambda$} 
                \State $p_j \gets \gamma\cdot p_j$ 
            \EndFor
        \EndIf
    \Else \Comment{No chore can be transferred from $N_\beta$ or $N_\mu$}
    \Statex \quad\quad\quad $\triangleright$ Lower payments of chores in $M_\lambda$
        \State $\gamma \gets \max_{j \in M_\beta \cup M_\mu} \frac{\alpha_{\ell}}{d_{\ell}(j)/p_j}$
            \For{$j \in M_\lambda$}
                \State $p_j \gets \gamma \cdot p_j$ 
            \EndFor        
    \EndIf
\EndWhile
\State \Return $\x$
\end{algorithmic}
\end{algorithm}

Since $N_1$ only loses chores, in at most $m$ chore transfers from $N_1$ it must be that either the allocation becomes wpEF1 (hence wEF1), or the wBE ceases to be in $N_1$. In the latter case we show that in one additional chore transfer step the allocation is wEF1. To ensure fPO and facilitate chore transfers, we perform payment drops of chores when appropriate, and maintain the MPB condition during chore transfers. That is, a chore $j$ is transferred from group $N_i$ to $N_h$ only if $j$ belongs to the MPB set $\mpb_h$ of agents in $N_h$, i.e., $j\in M_i \cap \mpb_h$.

\paragraph{Chore transfers and Payment drops: Details.} We perform payment drops and chore transfers across groups in a careful and specific manner, as described below. 
\begin{itemize}[leftmargin=*]
\item[](Lines 12-14) Since we desire a wpEF1 allocation if possible we first check if there exists a chore $j\in M_\beta \cap \mpb_\lambda$ which can be transferred directly from the wBE group to the wLE group. If so we make this transfer.
\item[](Lines 15-18) If no chore can be transferred from $N_\beta$ to $N_\lambda$, we check if a chore in $M_\mu \cap \mpb_\lambda$ can be potentially transferred from $N_\mu$ to $N_\lambda$. If there is a chore $j\in M_\mu \cap \mpb_\lambda$ s.t. after losing $j$ the least weighted earning of an agent in $N_\mu$ is strictly larger than the weighted earning of the current wLE, then we transfer $j$ from $N_\mu$ to $N_\lambda$ (Line 17-18). Line 16 performs this check. If not, an $N_\mu$ to $N_\lambda$ transfer is \textit{not allowed}.
\item[](Lines 19-21) If an $N_\mu$ to $N_\lambda$ transfer is not allowed, we check if an $N_\beta$ to $N_\mu$ transfer is possible, i.e., there is a chore $j'\in M_\beta \cap \mpb_\mu$. If so, we make such a transfer. 
\item[](Lines 22-25) Otherwise, no chore can be transferred from $N_\beta$ to either $N_\mu$ or $N_\lambda$. In this case we lower the payments of chores in $M_\mu$ and $M_\lambda$ until a chore in $M_\beta$ becomes MPB for agents in $N_\mu$ or $N_\lambda$.
\item[](Lines 26-29) Finally if there is no chore that can be transferred from $N_\beta$ or $N_\mu$ to $N_\lambda$, we lower the payments of chores in $M_\lambda$ until a chore in $M_\beta$ or $M_\mu$ becomes MPB for agents in $N_\lambda$.
\end{itemize}

\subsection{Analysis of \cref{alg:three-agent-types}: Overview}
We begin the analysis of \cref{alg:three-agent-types} by proving some important properties of the WPS algorithm (Alg. \ref{alg:wps}). The following lemmas compare the disutilities of agents before and after a chore transfer.

\begin{restatable}{lemma}{lemwpstransferone}\label{lem:wps-transfer-1}
Let $N$ be a set of agents with the same disutility function $\p(\cdot)$ and $M$ be a set of chores. Let $\x = \wps(N, M)$ be an allocation of $M$ to $N$ using the WPS algorithm, and $\y = \wps(N, M\setminus j)$, for a chore $j\in M$. Then we have:
\begin{enumerate}
\item[(i)] For all $i\in N$, $\p(\y_i) \le \p(\x_i)$.
\item[(ii)] For all $i\in N$, $\p_{-1}(\y_i) \le \p_{-1}(\x_i)$.
\item[(iii)] For all $i, h\in N$, $\frac{\p(\y_i)}{w_i} \ge \frac{\p_{-1}(\x_h)}{w_h}$. In particular, $\frac{\p(\y_\ell)}{w_\ell} \ge \frac{\p_{-1}(\x_b)}{w_b}$, where agent $\ell$ has the least weighted disutility in $\y$ and agent $b$ has the biggest weighted disutility up to one chore in $\x$.  
\end{enumerate}
\end{restatable}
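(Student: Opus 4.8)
The plan is to reduce all three statements to a single structural observation about the weighted picking sequence, and then to dispatch (i),(ii) by a direct term-by-term comparison and (iii) by a more delicate counting argument.

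First I would record two facts. \emph{(a) The picking order is payment-independent.} In \cref{alg:wps} the agent chosen in iteration $t$ is $\arg\min_k s_k/w_k$, which depends only on the current counts $(s_k)$ and the weights, not on the payments $\p$. Hence the sequence $\sigma(1),\sigma(2),\dots$ of agents (with the smaller-index tie-break) is determined entirely by the weights and the number of chores, and in particular the first $m-1$ picks for $M$ and for $M\setminus j$ coincide: $\y$ uses the same prefix $\sigma(1),\dots,\sigma(m-1)$ as $\x$. \emph{(b) Order-statistic monotonicity.} If $p_{(1)}\ge\cdots\ge p_{(m)}$ are the sorted payments of $M$ and $\tilde p_{(1)}\ge\cdots\ge\tilde p_{(m-1)}$ those of $M\setminus j$, then $p_{(t+1)}\le \tilde p_{(t)}\le p_{(t)}$ for every $t$, since deleting one element shifts each order statistic by at most one rank. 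Writing $T_i=\{t:\sigma(t)=i\}$ for the pick-positions of agent $i$, fact (a) says that in both $\x$ and $\y$ agent $i$ receives exactly the chores sitting at positions $T_i$ (intersected with $\{1,\dots,m-1\}$ for $\y$), and its highest-paying chore sits at $\min T_i$ in both.

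For (i): at each shared position $t\in T_i$ with $t\le m-1$, agent $i$'s payment in $\y$ is $\tilde p_{(t)}\le p_{(t)}$ by (b), and if $m\in T_i$ it additionally loses the position-$m$ chore; summing over $T_i$ gives $\p(\y_i)\le\p(\x_i)$. For (ii): the highest-paying chore is discarded from both sides at the same position $\min T_i$, and the same term-by-term comparison over the remaining positions (again $\tilde p_{(t)}\le p_{(t)}$, with the possible loss of position $m$) yields $\p_{-1}(\y_i)\le\p_{-1}(\x_i)$. Along the way I would also extract the complementary bound $\p(\y_i)\ge\p_{-1}(\x_i)$ by instead using $\tilde p_{(t)}\ge p_{(t+1)}$ and matching $i$'s $q$-th pick against its $(q{+}1)$-th pick in $\x$; this already settles (iii) in the case where the least earner of $\y$ and the ``big earner up to one chore'' of $\x$ coincide.

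The crux is (iii) in general, since it compares two \emph{different} agents $\ell$ (the wLE of $\y$) and $b$ (the maximiser of $\p_{-1}(\x_\cdot)/w_\cdot$), for which the within-$\x$ wEF1 guarantee and parts (i)--(ii) push the inequalities in the wrong direction (e.g. wEF1 of $\y$ gives $\frac{\p(\y_\ell)}{w_\ell}\ge\frac{\p_{-1}(\y_b)}{w_b}$, but (ii) only gives $\p_{-1}(\y_b)\le\p_{-1}(\x_b)$). Here I would use the ``token'' description of the picking order: picks occur in increasing order of the deadline $\frac{q-1}{w_k}$ attached to agent $k$'s $q$-th pick, which yields the position estimates $1+\frac{(q-1)\sum_{k}w_k}{w_i}\le t_i(q)\le n+\frac{(q-1)\sum_{k}w_k}{w_i}$. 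The idea is to charge $\ell$'s $q$-th pick in $\y$ the payment $p_{(t_\ell(q)+1)}$ (legitimate by $\tilde p_{(t)}\ge p_{(t+1)}$) and to pair it, after normalising by weights, against the terms of $\p_{-1}(\x_b)=\sum_{q\ge 2}p_{(t_b(q))}$; the two unit shifts — one from deleting chore $j$ and one from discarding $b$'s largest chore — align the positions so that payment-monotonicity gives each matched inequality, and summing with the $1/w_\ell,1/w_b$ normalisation yields $\frac{\p(\y_\ell)}{w_\ell}\ge\frac{\p_{-1}(\x_b)}{w_b}$.

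I expect the main obstacle to be exactly this matching in part (iii): because $\ell$ and $b$ generally have different weights they contribute a different number of terms, so the pairing must be carried out fractionally (equivalently, as a prefix-sum/area comparison of the sorted payment profile), and the $O(n)$ slack in the position bounds together with the weight scaling must be absorbed precisely by the two unit shifts. This is where the ``up to one chore'' relaxation is essential — without discarding $b$'s largest chore the comparison is false — so the bookkeeping of which pick is charged to which, and the verification that payment-monotonicity applies at every matched pair, is the delicate heart of the argument.
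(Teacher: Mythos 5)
Your parts (i) and (ii) are correct and are essentially the paper's own argument: the paper likewise observes that the picking sequence is payment-independent (its \cref{eq:wps-shift}), formalizes your ``same prefix'' fact (a) with a zero-payment dummy item so that $|\x_i|=|\y_i|$, and your order-statistic bound $p_{(t+1)}\le\tilde p_{(t)}\le p_{(t)}$ is exactly its comparison $p_{b_t}\le p_{a_t}$. Your side remark that the same matching gives $\p(\y_i)\ge\p_{-1}(\x_i)$ is sound, and your diagnosis that the naive route to (iii) via wEF1 of $\y$ plus part (ii) points the wrong way is accurate.

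The gap is in your mechanism for the cross-agent case of (iii). Your deadline-based position estimates $1+(q-1)\sum_k w_k/w_i \le t_i(q)\le n+(q-1)\sum_k w_k/w_i$ carry $O(n)$ slack, and that slack cannot be ``absorbed by the two unit shifts'': sorted payments at ranks differing by up to $n$ are incomparable in the direction you need (the payment profile can collapse arbitrarily between consecutive ranks), so any matching built on these bounds fails whenever the slack lands the wrong way. What rescues the argument --- and what the paper actually uses --- is that the \emph{relative} order of any two picks is known exactly, with zero slack: at the iteration where agent $h$ makes her $z$-th pick, the algorithm's invariant gives $s_h\le s_i$ pointwise, so $h$'s $z$-th pick (deadline $(z-1)/w_h$) occurs strictly after $i$'s $z'$-th pick whenever $(z'-1)/w_i<(z-1)/w_h$. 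The paper packages this as the pointwise inequality $\phi(\alpha)\le\varphi'\bigl(\alpha-\tfrac{1}{w_h}\bigr)$ on the weighted-consumption axis (\cref{clm:wps-two}), where the only rank shift ever invoked is the \emph{single} shift coming from deleting $j$ (your fact (b)), and the $1/w_h$ shift on the axis is precisely the ``up to one chore'' discard; your fractional/area-comparison intuition is the right shape, but it must rest on this exact pairwise order, not on the $O(n)$-slack position bounds. You are also missing the endpoint of the area comparison: one must separately show that $i$'s profile is long enough, i.e.\ $\frac{\ell-1}{w_h}\le\frac{k}{w_i}$ (the paper's \cref{clm:wps-one}, again from $s_h\le s_i$ at $h$'s last pick); without it the domination fails at the right end even with correct pointwise inequalities. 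Finally, note that the lemma asserts (iii) for all pairs $i,h$, not just for the wLE/wBE pair $(\ell,b)$; the exact-order argument yields this generality for free, so it should be proved in that form.
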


\begin{restatable}{lemma}{lemwpstransfertwo}\label{lem:wps-transfer-2}
Let $N$ be a set of agents with the same disutility function $\p(\cdot)$ and $M$ be a set of chores. Let $\x = \wps(N, M)$ be an allocation of $M$ to $N$ using the WPS algorithm, and $\y = \wps(N, M\cup j)$, for a chore $j\notin M$. Then we have:
\begin{enumerate}
\item[(i)] For all $i\in N$, $\p(\y_i) \ge \p(\x_i)$. 
\item[(ii)] For all $i\in N$, $\p_{-1}(\y_i) \ge \p_{-1}(\x_i)$.
\item[(iii)] For all $i, h\in N$, $\frac{\p_{-1}(\y_h)}{w_h} \le \frac{\p(\x_i)}{w_i}$. In particular, $\frac{\p_{-1}(\y_b)}{w_b} \le \frac{\p(\x_\ell)}{w_\ell}$, where agent $b$ has the biggest weighted disutility up to one chore in $\y$ and agent $\ell$ has the least weighted disutility in $\x$.  
\end{enumerate}
\end{restatable}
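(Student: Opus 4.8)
The plan is to derive \cref{lem:wps-transfer-2} directly from \cref{lem:wps-transfer-1} by exploiting the exact symmetry between adding and removing a chore. The only structural fact I need is that the output of the WPS algorithm is a deterministic function of the pair (agent set, chore set): it sorts the given chores and allocates them greedily to minimize the weighted count $s_k/w_k$, with no dependence on any history. Consequently $\wps(N, M \cup j)$ is literally WPS applied to the chore set $\hat M := M \cup \{j\}$, and since $j \notin M$ we have $\wps(N, \hat M \setminus j) = \wps(N, M)$.

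First I would apply \cref{lem:wps-transfer-1} to the instance with chore set $\hat M$ and removed chore $j \in \hat M$. Under that application, the ``$\x$'' of \cref{lem:wps-transfer-1} is $\wps(N, \hat M) = \wps(N, M \cup j)$, which is precisely the $\y$ of the present lemma, while the ``$\y$'' of \cref{lem:wps-transfer-1} is $\wps(N, \hat M \setminus j) = \wps(N, M)$, which is precisely the $\x$ of the present lemma. That is, the two lemmas assign the labels $\x,\y$ to the larger/smaller allocation in opposite orders, so invoking \cref{lem:wps-transfer-1} with the roles of $\x$ and $\y$ interchanged yields each claim verbatim. Part (i), after this identification, reads $\p(\x_i)\le\p(\y_i)$ for all $i$, which is exactly (i) here. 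Part (ii) becomes $\p_{-1}(\x_i)\le\p_{-1}(\y_i)$, which is (ii). Part (iii) becomes $\frac{\p_{-1}(\y_h)}{w_h}\le\frac{\p(\x_i)}{w_i}$ for all $i,h$, which is (iii); the ``in particular'' clause follows because the least weighted earner of the smaller allocation and the weighted big earner (up to one chore) of the larger allocation are the same agents under either labeling.

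I therefore expect no obstacle specific to \cref{lem:wps-transfer-2}: all of the genuine work sits in \cref{lem:wps-transfer-1}, which I am permitted to assume. If instead a self-contained argument were wanted, I would couple the two WPS runs on $M$ and on $M\cup\{j\}$: insert $j$ into its sorted position, observe that the prefix of chores strictly larger than $j$ is allocated identically in both runs, and then track how the single extra pick propagates through the remaining weighted round-robin assignments. The main technical difficulty in such a direct argument --- and the crux underlying \cref{lem:wps-transfer-1} --- is controlling the per-agent bundles past the insertion point, where the two runs become ``offset by one,'' so as to show simultaneously that the totals $\p(\cdot)$ and the truncated totals $\p_{-1}(\cdot)$ both move monotonically and that the cross-agent bound in (iii) holds. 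This is exactly the balancedness/exchange property of weighted picking sequences, and invoking \cref{lem:wps-transfer-1} lets me sidestep redoing it.
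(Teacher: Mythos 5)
Your proposal is correct and matches the paper's own proof essentially verbatim: the paper likewise sets $M' = M\cup j$, notes $\y = \wps(N,M')$ and $\x = \wps(N, M'\setminus j)$, and invokes \cref{lem:wps-transfer-1} with the labels $\x$ and $\y$ swapped. Your observation that this hinges only on WPS being a deterministic function of the (agent set, chore set) pair is exactly the implicit justification the paper relies on.
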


When a set of chores $M$ with associated payments $\p$ is on MPB for an group $N$ containing agents of the same type, the disutility of a chore is proportional to its payment. Therefore, the claims in the above lemmas are also true when $\p$ represents the payment vector (which is proportional to the disutility vector) and $\p(\x_i)$ represents the earning of agent $i$ in $(\x,\p)$ (which is proportional to the disutility of $i$ in $\x$). We next show:

\begin{restatable}{lemma}{leearning}\label{lem:le-earning}
Throughout the execution of \cref{alg:three-agent-types}, no chore transfer decreases the weighted earning of the weighted least earner.
\end{restatable}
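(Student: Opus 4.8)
The plan is to show that for every chore transfer, writing $L := \min_{k\in N}\frac{\p(\x_k)}{w_k}$ for the weighted earning of the weighted least earner before the transfer and $L'$ for the same quantity after (with $\x'$ the post-transfer allocation, so $L' = \min_{k\in N}\frac{\p(\x'_k)}{w_k}$), we always have $L'\ge L$. I would first record the standing hypothesis: the transfer is executed inside the while loop, so $(\x,\p)$ is not wpEF1, and by \cref{lem:BEenvyLE} the wBE $b$ (which lies in $N_\beta$) strictly wpEF1-envies the wLE $\ell$ (which lies in $N_\lambda$), giving the key inequality
\[
\frac{\p_{-1}(\x_b)}{w_b} \;>\; \frac{\p(\x_\ell)}{w_\ell} \;=\; L .
\]
In particular $\beta\ne\lambda$, so the three roles are held by distinct groups. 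Since within each group the chores are kept on MPB, payments are proportional to disutilities, so (per the remark following \cref{lem:wps-transfer-2}) Lemmas~\ref{lem:wps-transfer-1} and~\ref{lem:wps-transfer-2} apply with $\p(\cdot)$ read as the earning. The strategy is to lower-bound the post-transfer minimum weighted earning inside each of the three groups and then take the overall minimum.

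Two building blocks drive the argument. First, for any group that \emph{gains} a chore, \cref{lem:wps-transfer-2}(i) gives $\p(\x'_i)\ge \p(\x_i)$ for every agent $i$ in that group, so its minimum weighted earning weakly increases and stays $\ge L$; a group left untouched trivially keeps its minimum, which is $\ge L$ because $L$ is the global minimum. Second, for the group $N_\beta$ when it \emph{loses} a chore, \cref{lem:wps-transfer-1}(iii) applied with $h=b$ gives, for every agent $i\in N_\beta$,
\[
\frac{\p(\x'_i)}{w_i} \;\ge\; \frac{\p_{-1}(\x_b)}{w_b} \;>\; L,
\]
so the minimum weighted earning of $N_\beta$ after the transfer is strictly above $L$.

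With these in hand I would dispatch the three transfer types. In an $N_\beta\to N_\lambda$ transfer (Lines 12--14), $N_\beta$ loses a chore (minimum $>L$), $N_\lambda$ gains one (minimum $\ge L$), and $N_\mu$ is untouched, so $L'\ge L$. An $N_\beta\to N_\mu$ transfer (Lines 19--21) is identical, with the ``gains'' and ``untouched'' roles of $\mu$ and $\lambda$ swapped. The remaining case, an $N_\mu\to N_\lambda$ transfer (Lines 15--18), is the delicate one: the group losing a chore is now $N_\mu$, not $N_\beta$, and the second building block does not apply, since \cref{lem:wps-transfer-1}(iii) only lower-bounds the losing group's earnings by its own weighted-big-earner-up-to-one-chore value, which need not exceed $L$. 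This is exactly why the algorithm guards this transfer with the test on Line 16, which only permits transferring a chore $j$ for which $\min_{k\in N_\mu}\frac{\p(\y_k)}{w_k} > \frac{\p(\x_\ell)}{w_\ell}=L$, where $\y=\wps(N_\mu,M_\mu\setminus j)$; since $\x'$ restricted to $N_\mu$ equals $\y$, the post-transfer minimum of $N_\mu$ exceeds $L$, while $N_\lambda$ gains and $N_\beta$ is untouched, again giving $L'\ge L$.

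The main obstacle is precisely this middle-group case: the clean \cref{lem:wps-transfer-1}(iii) bound is available only for the group holding the wBE, so one must instead invoke the explicit Line-16 safeguard to control the earnings of $N_\mu$ after it sheds a chore. Everything else reduces to the monotonicity facts of \cref{lem:wps-transfer-1} and \cref{lem:wps-transfer-2} together with the while-loop hypothesis that the wBE strictly wpEF1-envies the wLE. Finally, payment drops are not chore transfers and therefore lie outside the scope of this statement.
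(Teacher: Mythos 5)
Your proof is correct and takes essentially the same approach as the paper's: both rest on \cref{lem:wps-transfer-1}(iii) applied to $N_\beta$ together with the while-loop hypothesis $\frac{\p_{-1}(\x_b)}{w_b} > \frac{\p(\x_\ell)}{w_\ell}$ for transfers out of the wBE group, the monotonicity of \cref{lem:wps-transfer-2}(i) for the gaining group, and the Line-16 safeguard for $N_\mu\to N_\lambda$ transfers. The only difference is organizational — the paper first isolates a claim that the new wLE must lie in the losing group and then splits into two cases, whereas you bound the minimum weighted earning group-by-group across the three transfer types — which is an equivalent presentation of the same argument.
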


The following two lemmas analyze steps of \cref{alg:three-agent-types} which cause a group $N_\lambda$ to cease being the weighted least earner group. 
\begin{restatable}{lemma}{letobe}\label{lem:le-to-be}
If a step of \cref{alg:three-agent-types} results in the weighted least earner group becoming the weighted big earner group, then the resulting allocation must be wpEF1.
\end{restatable}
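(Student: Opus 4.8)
The plan is to reduce to \cref{lem:BEenvyLE}: writing $\x'$ for the allocation produced by the step, it suffices to show that the new weighted big earner $b'$ does not wpEF1-envy the new weighted least earner $\ell'$. By hypothesis $b'\in N_\lambda$, the group that was the weighted least earner group immediately before the step, and which therefore contains the old weighted least earner $\ell$. I would organize the argument by the type of step: first ruling out payment drops, then treating the two kinds of chore transfer.

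First I would show that a payment drop (Lines 22--25 or 26--29) can never make $N_\lambda$ the weighted big earner group, so the triggering step must be a chore transfer. In either drop the payments of $M_\beta$ are left untouched and the allocation is not re-run, so the old big earner $b\in N_\beta$ keeps its ratio $\p_{-1}(\x_b)/w_b$, whereas every chore held by $N_\lambda$ is scaled by a factor $\gamma<1$ (each chore defining $\gamma$ fails to be MPB for the relevant group, so its ratio is below one). Hence for every $i\in N_\lambda$, $\p_{-1}(\x'_i)/w_i=\gamma\,\p_{-1}(\x_i)/w_i\le\gamma\,\p_{-1}(\x_b)/w_b<\p_{-1}(\x_b)/w_b=\p_{-1}(\x'_b)/w_b$, so $b$ still strictly dominates and $N_\lambda$ does not become the big earner group.

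For a transfer \emph{into} $N_\lambda$ (Lines 12--14 or 17--18) the group $N_\lambda$ gains a chore $j$ while payments are unchanged, so I would apply \cref{lem:wps-transfer-2}(iii) to $N_\lambda$ with $\x_{N_\lambda}=\wps(N_\lambda,M_\lambda)$ and $\x'_{N_\lambda}=\wps(N_\lambda,M_\lambda\cup j)$. Since both $b'$ and $\ell$ lie in $N_\lambda$, this gives $\p_{-1}(\x'_{b'})/w_{b'}\le\p(\x_\ell)/w_\ell$; and because $\ell$ is the old \emph{global} least earner, $\p(\x_\ell)/w_\ell$ is the old minimum weighted earning, which by \cref{lem:le-earning} cannot decrease, so $\p(\x'_{\ell'})/w_{\ell'}\ge\p(\x_\ell)/w_\ell$. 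Chaining these shows $b'$ does not wpEF1-envy $\ell'$. For a transfer $N_\beta\to N_\mu$ (Lines 19--21), $N_\lambda$ is untouched, so $\ell$ retains its earning and, by \cref{lem:le-earning}, remains a global least earner; then the within-group wpEF1 guarantee of $\wps(N_\lambda,M_\lambda)$ (WPS is wEF1, hence wpEF1 on MPB) yields $\p_{-1}(\x'_{b'})/w_{b'}=\p_{-1}(\x_{b'})/w_{b'}\le\p(\x_\ell)/w_\ell=\p(\x'_{\ell'})/w_{\ell'}$. In every case \cref{lem:BEenvyLE} certifies that $\x'$ is wpEF1.

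I expect the main obstacle to be the transfer-into-$N_\lambda$ case. Once $N_\lambda$ absorbs a chore its earnings rise, so the new global least earner may migrate to another group with strictly smaller earning, and within-group (WPS) reasoning no longer bounds $b'$'s envy against the \emph{global} least earner. The resolving idea is to route the comparison through the old least earner $\ell\in N_\lambda$: \cref{lem:wps-transfer-2}(iii) bounds $b'$'s up-to-one payment by $\ell$'s old earning, while \cref{lem:le-earning} bounds the new least earner's earning from below by that same quantity. A smaller point needing care is verifying that payment drops cannot trigger the transition, which hinges on $M_\beta$'s payments being frozen during every drop.
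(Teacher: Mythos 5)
Your proposal is correct and takes essentially the same route as the paper's proof: rule out payment drops, handle transfers into $N_\lambda$ by chaining \cref{lem:wps-transfer-2}(iii) with \cref{lem:le-earning}, handle $N_\beta\to N_\mu$ transfers by observing the wLE remains in $N_\lambda$ so the within-group WPS guarantee applies, and conclude via \cref{lem:BEenvyLE}. The only differences are cosmetic: you spell out the payment-drop case that the paper dismisses with ``Clearly'' (where your strict inequality $\gamma\,\p_{-1}(\x_b)/w_b<\p_{-1}(\x_b)/w_b$ implicitly uses $\p_{-1}(\x_b)>0$, which holds because the allocation is not wpEF1).
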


\begin{restatable}{lemma}{lechange}\label{lem:le-change}
If a step of \cref{alg:three-agent-types} results in the weighted least earner group $N_\lambda$ becoming the weighted middle earning group, then agents in $N_\lambda$ do not wpEF1-envy agents in the new weighted least earner group in the resulting allocation.
\end{restatable}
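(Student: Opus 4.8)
The plan is to case on the five kinds of step that \cref{alg:three-agent-types} can perform and observe that $N_\lambda$ can rise in weighted earning---and hence possibly cease to be a least earner group---only when a chore is transferred \emph{into} it, which happens in exactly two steps: the $N_\beta\to N_\lambda$ transfer (Lines 12--14) and the $N_\mu\to N_\lambda$ transfer (Lines 17--18). For the remaining three steps I would argue that $N_\lambda$ stays a least earner group, so the hypothesis of the lemma is never triggered. In the $N_\beta\to N_\mu$ transfer (Lines 20--21) the group $N_\lambda$ is untouched; by \cref{lem:wps-transfer-1}(iii) and the loop invariant that the wBE $b$ wpEF1-envies the wLE $\ell$ (\cref{lem:BEenvyLE}), $N_\beta$ keeps weighted minimum earning $\ge \p_{-1}(\x_b)/w_b>\p(\x_\ell)/w_\ell$, while $N_\mu$ only gains a chore and so (\cref{lem:wps-transfer-2}(i)) also stays $\ge \p(\x_\ell)/w_\ell$. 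In the two payment drops (Lines 22--25 and 27--29) the earnings of $N_\lambda$ are scaled by a factor at most that of the other affected group, so $N_\lambda$'s weighted minimum stays no larger than every other group's; \cref{lem:le-earning} is the underlying reason transfers cannot push $N_\lambda$ below itself. Ties are resolved by the convention used to define the wLE group, keeping $N_\lambda$ classified as a least earner.

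For the two transfer steps, write $\x$ for the allocation before the step and $\x'$ for the allocation after, and let $\ell,b$ be the wLE and wBE of $\x$. In both steps a single chore $j$ is added to $M_\lambda$, so $\x'_\lambda=\wps(N_\lambda,M_\lambda\cup j)$. Applying \cref{lem:wps-transfer-2}(iii) to $N_\lambda$ with $i=\ell$ (the least earner within $N_\lambda$) gives the uniform upper bound
\[
\max_{i\in N_\lambda}\frac{\p_{-1}(\x'_i)}{w_i}\le \frac{\p(\x_\ell)}{w_\ell}.
\]
It then remains to show that every agent $h$ in the \emph{new} least earner group satisfies $\p(\x'_h)/w_h\ge \p(\x_\ell)/w_\ell$; combined with the display this yields $\p_{-1}(\x'_i)/w_i\le \p(\x'_h)/w_h$ for all $i\in N_\lambda$ and all $h$ in the new least group, which is exactly the claimed absence of wpEF1-envy.

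I would establish this lower bound by subcase, the new least group being one of the two groups other than $N_\lambda$. If it is a group left untouched by the step ($N_\mu$ in the $N_\beta\to N_\lambda$ transfer, or $N_\beta$ in the $N_\mu\to N_\lambda$ transfer), then $\p(\x'_h)/w_h=\p(\x_h)/w_h\ge \p(\x_\ell)/w_\ell$ since $\ell$ was the overall wLE of $\x$. If it is $N_\beta$ after it shed a chore (only in the $N_\beta\to N_\lambda$ transfer), then \cref{lem:wps-transfer-1}(iii) gives $\p(\x'_h)/w_h\ge \p_{-1}(\x_b)/w_b$ for every $h\in N_\beta$, and because the algorithm is inside the while loop $b$ wpEF1-envies $\ell$ (\cref{lem:BEenvyLE}), so $\p_{-1}(\x_b)/w_b>\p(\x_\ell)/w_\ell$. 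Finally, if it is $N_\mu$ after it shed a chore (only in the $N_\mu\to N_\lambda$ transfer), the guard on Line 16 that licenses this transfer guarantees precisely $\min_{k\in N_\mu}\p(\x'_k)/w_k>\p(\x_\ell)/w_\ell$.

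The hard part will be the lower bound $\p(\x'_h)/w_h\ge \p(\x_\ell)/w_\ell$ on the new least group in the subcases where that group has just \emph{lost} a chore, since naively shedding a chore lowers earnings. This is exactly where the ``no-overshoot'' property of WPS (\cref{lem:wps-transfer-1}(iii)), the loop invariant that the wBE strictly wpEF1-envies the wLE, and the carefully chosen guard on Line 16 must be combined. A secondary subtlety is making the classification of the least earner group robust to ties, so that the dichotomy ``$N_\lambda$ becomes the big earner group (handled by \cref{lem:le-to-be}) versus $N_\lambda$ becomes the middle earner group'' is exhaustive.
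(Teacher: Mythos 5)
Your proof is correct and takes essentially the same route as the paper: reduce to the steps that transfer a chore into $N_\lambda$, bound $\max_{i\in N_\lambda}\p_{-1}(\x'_i)/w_i$ by $\p(\x_\ell)/w_\ell$ via \cref{lem:wps-transfer-2}(iii), and lower-bound the new wLE group's weighted earning by $\p(\x_\ell)/w_\ell$. The only difference is cosmetic: where the paper simply invokes \cref{lem:le-earning} for that lower bound, you re-derive its content inline by the same case analysis used in that lemma's proof (untouched group; $N_\beta$ shedding a chore via \cref{lem:wps-transfer-1}(iii) plus the non-wpEF1 invariant; $N_\mu$ shedding a chore via the Line~16 guard).
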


To summarize, if a group $N_\lambda$ ceases to be the weighted least earner group, then either resulting allocation is wpEF1 or $N_\lambda$ becomes the weighted middle earning group and agents in $N_\lambda$ do not wpEF1-envy agents in the new weighted least earner group. These observations are important for proving the lemmas that follow. 

We next consider steps of \cref{alg:three-agent-types} which cause a group $N_\beta$ cease being the weighted big earner group. We first show that:
\begin{restatable}{lemma}{bebecomesle}\label{lem:be-becomes-le}
If a step of \cref{alg:three-agent-types} results in the weighted big earner group becoming the weighted least earner group, then the resulting allocation must be wpEF1.
\end{restatable}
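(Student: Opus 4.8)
The plan is to invoke \cref{lem:BEenvyLE}: the allocation $(\x',\p)$ produced by the step is still an integral CE (the algorithm only transfers a chore into a group for which it is MPB, so the MPB property is maintained), hence it is wpEF1 the moment the new weighted big earner $b'$ does not wpEF1-envy the new weighted least earner $\ell'$. So the whole argument reduces to establishing the single inequality $\frac{\p_{-1}(\x'_{b'})}{w_{b'}} \le \frac{\p(\x'_{\ell'})}{w_{\ell'}}$.

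Before proving that, I would first pin down which step can actually turn $N_\beta$ into the weighted least earner group. Write $(\x,\p)$ for the allocation before the step and $(\x',\p)$ for the one after; let $b\in N_\beta$ be the old wBE and $\ell\in N_\lambda$ the old wLE. Since the \textbf{while} loop is entered, \cref{lem:BEenvyLE} supplies the key invariant $\frac{\p_{-1}(\x_b)}{w_b} > \frac{\p(\x_\ell)}{w_\ell}$. Payment drops (Lines 22--25 and 26--29) only lower earnings in $N_\mu$ and/or $N_\lambda$ and leave $N_\beta$ untouched, so they cannot make $N_\beta$ the least earner; the $N_\mu\to N_\lambda$ transfer (Lines 15--18) likewise does not touch $N_\beta$. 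Hence only a transfer of one chore out of $N_\beta$ is relevant. For an $N_\beta\to N_\mu$ transfer (Lines 19--21), $N_\lambda$ is unchanged, so $\ell$ still witnesses weighted earning $\frac{\p(\x_\ell)}{w_\ell}$; applying \cref{lem:wps-transfer-1}(iii) to $N_\beta$ losing its chore (with $h=b$) gives $\frac{\p(\x'_i)}{w_i}\ge \frac{\p_{-1}(\x_b)}{w_b} > \frac{\p(\x_\ell)}{w_\ell}$ for every $i\in N_\beta$, so $N_\beta$'s minimum weighted earning stays strictly above $\ell$'s and $N_\beta$ cannot become the least earner group. Therefore the only step to analyze is the $N_\beta\to N_\lambda$ transfer of Lines 12--14.

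For that step I would combine the two WPS transfer lemmas. Since $N_\beta$ loses one chore, \cref{lem:wps-transfer-1}(iii) (again with $h=b$) gives, for the new wLE $\ell'\in N_\beta$, the lower bound $\frac{\p(\x'_{\ell'})}{w_{\ell'}} \ge \frac{\p_{-1}(\x_b)}{w_b}$. To upper-bound the new wBE $b'$ I would bound each group separately: for $N_\beta$, \cref{lem:wps-transfer-1}(ii) gives $\frac{\p_{-1}(\x'_i)}{w_i}\le \frac{\p_{-1}(\x_i)}{w_i}\le \frac{\p_{-1}(\x_b)}{w_b}$; for $N_\lambda$, which gains the chore, \cref{lem:wps-transfer-2}(iii) (using that $\ell$ is the least earner of $N_\lambda$) gives $\frac{\p_{-1}(\x'_h)}{w_h}\le \frac{\p(\x_\ell)}{w_\ell} < \frac{\p_{-1}(\x_b)}{w_b}$; and $N_\mu$ is untouched, so its $\p_{-1}/w$ values stay $\le \frac{\p_{-1}(\x_b)}{w_b}$. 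Taking the maximum over the three groups yields $\frac{\p_{-1}(\x'_{b'})}{w_{b'}}\le \frac{\p_{-1}(\x_b)}{w_b}\le \frac{\p(\x'_{\ell'})}{w_{\ell'}}$, exactly the inequality needed, so \cref{lem:BEenvyLE} closes the argument.

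I expect the main obstacle to be not the final chain of inequalities but the case analysis that isolates Lines 12--14 as the only relevant step --- in particular, showing that an $N_\beta\to N_\mu$ transfer can never demote $N_\beta$ to least earner. This is where the loop invariant $\frac{\p_{-1}(\x_b)}{w_b} > \frac{\p(\x_\ell)}{w_\ell}$ and the ``flattening'' bound of \cref{lem:wps-transfer-1}(iii) must be used together, and one must be careful that inserting the transferred chore into $N_\lambda$ does not create a new high-$\p_{-1}/w$ earner there, which is precisely what \cref{lem:wps-transfer-2}(iii) rules out.
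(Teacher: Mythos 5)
Your proposal is correct in its core and uses essentially the paper's machinery: the reduction via \cref{lem:BEenvyLE}, and \cref{lem:wps-transfer-1}/\cref{lem:wps-transfer-2} applied group-by-group. Your handling of the $N_\beta\to N_\mu$ transfer is exactly the paper's Case 1, phrased directly instead of by contradiction. In the main case ($N_\beta\to N_\lambda$ transfer), your organization is a mild streamlining: the paper sub-cases on the identity of the \emph{new} wBE group, invoking \cref{lem:le-to-be} when it is $N_\lambda$ and arguing directly when it is $N_\mu$, whereas you bound $\p_{-1}/w$ uniformly over all three groups by $\p_{-1}(\x_b)/w_b$, with \cref{lem:wps-transfer-2}(iii) plus the non-wpEF1 invariant replacing the appeal to \cref{lem:le-to-be}. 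This buys a small simplification and also explicitly covers the sub-case where the new wBE remains in $N_\beta$, which the paper leaves implicit; the ingredients are the same.

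One step is overstated, however: your claim that an $N_\mu\to N_\lambda$ transfer ``does not touch $N_\beta$'' and therefore ``cannot make $N_\beta$ the least earner'' does not follow. Line 16 only guarantees that the post-transfer minimum weighted earning in $N_\mu$ exceeds the \emph{old} wLE's weighted earning, not $N_\beta$'s minimum, and the minimum in $N_\lambda$ can jump well above $N_\beta$'s minimum when the transferred chore lands on the old wLE. So both other groups' minima can end up strictly above the (unchanged) minimum of $N_\beta$, and the argmin in Line 8 then falls in $N_\beta$: the wBE group \emph{does} become the wLE group without losing a chore. The lemma survives because in that event the wBE also stays in $N_\beta$ (agents in $N_\mu$ only lose $\p_{-1}$ by \cref{lem:wps-transfer-1}(ii), and \cref{lem:wps-transfer-2}(iii) caps the new $\p_{-1}/w$ values in $N_\lambda$ by the old wLE's weighted earning, which is strictly below $\p_{-1}(\x_b)/w_b$), so the wBE and wLE share a group and within-group WPS wEF1 yields wpEF1 immediately via \cref{lem:BEenvyLE}. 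The paper's own proof likewise restricts attention to transfers out of the wBE group (its framing is that only such a step can make $N_1$ \emph{stop being the wBE group}, with the coincidence case $\beta=\lambda$ handled by the remark in the algorithm description), but your write-up asserts the false elimination explicitly; patch it with the one-line coincidence argument above and the rest of your proof goes through.
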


Recall that we initially assigned all the chores to group $N_1$. Hence $N_\beta = N_1$ was the initial weighted big earner group. We prove that this is the case almost throughout the execution of the algorithm. To this end, we show that $N_\beta$ (i.e. $N_1$) loses a chore in every $\poly{m}$-many steps. 
\begin{restatable}{lemma}{beloseschore}\label{lem:be-loses-chore}
While the weighted big earner belongs to the group $N_\beta$ and the allocation is not wpEF1, $N_\beta$ must lose a chore in $\poly{m}$ steps.
\end{restatable}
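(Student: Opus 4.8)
The plan is to analyze a single \emph{phase}: a maximal run of consecutive iterations of the while-loop during which the weighted big earner stays in $N_\beta$ and $N_\beta$ does not lose a chore. I want to bound the length of such a phase by $\poly{m}$. The first step is to record the structural invariants of a phase. The only transfers that remove a chore from $N_\beta$ are the $N_\beta\to N_\lambda$ and $N_\beta\to N_\mu$ transfers (Lines 12--14 and 19--21), and no step ever adds a chore to $N_\beta$; hence the set $M_\beta$ is frozen throughout the phase. Moreover the payment drops only rescale payments inside $M_\mu\cup M_\lambda$ or inside $M_\lambda$, so the payments of $M_\beta$ are frozen as well. Consequently $\wps(N_\beta,M_\beta)$ is fixed, and the weighted big earner's value $B^\star:=\p_{-1}(\x_b)/w_b$ is constant during the phase; since the allocation is not wpEF1, \cref{lem:BEenvyLE} gives $B^\star>\p(\x_\ell)/w_\ell$ throughout. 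A second invariant I will use is that a payment drop changes only the payments (not the allocation), while a chore transfer changes only the allocation (not the payments). In particular, after a drop the identities of the wBE, wME and wLE groups are preserved, since the affected groups' weighted earnings are uniformly scaled by the same $\gamma<1$ while $N_\beta$ is untouched.

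I would then drive the argument with two monotone quantities. First, because the payments of chores in $M_\mu\cup M_\lambda$ only ever decrease, the MPB ratios $\alpha_\lambda,\alpha_\mu$ of the two non-$\beta$ groups are non-decreasing across the whole phase: they are unchanged by transfers and strictly increased by the drops in Lines 22--25 and 26--29, which are calibrated (via the choice of $\gamma$) so that some chore of $M_\beta$, respectively of $M_\beta\cup M_\mu$, becomes MPB. Crucially, each such ratio is bounded above by the \emph{fixed} anchor $T_i=\min_{j\in M_\beta} d_i(j)/p_j$ determined by the frozen $M_\beta$; the instant $\alpha_i$ reaches $T_i$ a chore of $M_\beta$ is MPB for group $N_i$, and the next iteration performs a Line 12 or Line 19 transfer, ending the phase. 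This bounds the number of payment drops. Second, a short case analysis of the branches shows that at most two payment drops occur consecutively before a chore transfer or the end of the phase: a Line 22--25 drop makes a chore of $M_\beta$ MPB for $N_\mu$ or $N_\lambda$, and a Line 26--29 drop makes a chore of $M_\beta\cup M_\mu$ MPB for $N_\lambda$, so after at most one further drop the guard of Line 12, 16 or 19 is satisfied and a transfer fires. Hence the total number of steps in a phase is within a constant factor of the number of chore transfers.

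It then remains to bound the chore transfers, which inside a phase are exactly the $N_\mu\to N_\lambda$ transfers of Lines 17--18 (the others end the phase). Here I would use \cref{lem:le-earning}: along any stretch of transfers with fixed payments the weighted earning of the weighted least earner, $V:=\p(\x_\ell)/w_\ell$, is non-decreasing. The guard on Line 16 is what prevents cycling, since it permits the transfer only when $\min_{k\in N_\mu}\p(\y_k)/w_k>V$ for the post-transfer allocation $\y=\wps(N_\mu,M_\mu\setminus j)$. I would argue from this that whenever a transfer makes $N_\lambda$ and $N_\mu$ swap roles, $V$ strictly increases (after the transfer both groups' minimum weighted earnings exceed the old $V$), whereas while the roles stay fixed every transfer adds a chore to the recipient group $N_\lambda$ (by \cref{lem:wps-transfer-2} it goes to a least earner), so the recipient's chore count strictly increases and there can be at most $m$ such transfers between consecutive role-swaps. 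Combining the strict increase of $V$ at each swap with the $\poly{m}$ bound on drops (each capped by the fixed $M_\beta$-anchors) through a potential-function argument then yields a $\poly{m}$ bound on the number of transfers, and therefore on the phase length.

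The main obstacle is precisely this last counting of $N_\mu\to N_\lambda$ transfers. Because the wLE and wME groups may repeatedly swap, a naive accounting admits cycles, and because payment drops rescale the non-$\beta$ earnings the potential $V$ is not globally monotone across the whole phase. The resolution I would pursue is the combination described above: use \cref{lem:le-earning} together with the Line 16 guard to force strict progress of $V$ on every role-swap and discrete progress (a growing recipient bundle) otherwise, and use the monotone non-$\beta$ MPB ratios, capped by the fixed $M_\beta$-anchors, to bound the interleaved drops. Together these two engines certify that $N_\beta$ must shed a chore within $\poly{m}$ steps, as claimed.
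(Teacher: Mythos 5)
Your reduction of the problem matches the paper's opening moves: within a phase the set $M_\beta$ and its payments are frozen, payment drops cannot occur more than twice consecutively before some transfer fires, and the only transfers that do not end the phase are wME-to-wLE transfers, of which there are at most $m$ between consecutive role swaps since each moves a chore out of the (finite) bundle $M_\mu$. One small slip along the way: your ``anchor'' argument does not by itself bound the number of payment drops --- the MPB ratios $\alpha_\lambda,\alpha_\mu$ being non-decreasing and capped by the fixed quantities $T_i=\min_{j\in M_\beta}d_i(j)/p_j$ is consistent with arbitrarily many strict increases, since nothing quantifies the increments. What actually bounds the drops is the observation (which you also make) that a Lines 23--25 drop forces a transfer from $N_\beta$ at the very next iteration, and a Lines 27--29 drop is followed within one more step by a transfer or a Lines 23--25 drop; so the step count is $O(1)$ times the transfer count. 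This part is recoverable.

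The genuine gap is exactly where you flag it: bounding the number of role swaps between $N_\mu$ and $N_\lambda$. Your proposed engine --- the wLE value $V=\p(\x_\ell)/w_\ell$ strictly increases at each swap --- cannot bound the number of swaps, because $V$ is a real quantity that can strictly increase unboundedly often, and as you concede, $V$ is deflated by the intervening payment drops, so no global monotonicity survives. The ``potential-function argument'' that is supposed to combine the two engines is asserted but never constructed, and the cycling scenario it must exclude (transfer $\mu\to\lambda$, swap, drop on the new wLE group's chores re-enabling the Line 16 guard, transfer back, swap back, \dots) is precisely what remains open. The paper closes this with a sharper, one-shot structural fact: there is \emph{at most one swap per phase}. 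Let the transfer of chore $j$ from $(\x,\p)$ to $(\x',\p)$ cause the swap (so by \cref{lem:le-to-be,lem:le-change}, either $(\x',\p)$ is already wpEF1 or $N_\lambda$ is the new wME and $N_\mu$ the new wLE). When the algorithm evaluates a back-transfer of $j$ in $(\x',\p)$, the Line 16 allocation is $\y=\wps\bigl(N_\lambda,\bigl(\bigcup_{i\in N_\lambda}\x'_i\bigr)\setminus j\bigr)$, and since removing $j$ recovers exactly the pre-transfer chore set of $N_\lambda$, $\y$ equals $\x$ restricted to $N_\lambda$; hence $\min_{i\in N_\lambda}\p(\y_i)/w_i=\min_{i\in N_\lambda}\p(\x_i)/w_i\le\min_{k\in N_\mu}\p(\x'_k)/w_k$ by \cref{lem:le-earning}, so the guard's strict inequality fails and no $N_\lambda\to N_\mu$ transfer occurs. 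The next step is then forced to be a chore transfer from $N_\beta$, or a Lines 23--25 drop immediately followed by one, ending the phase in $O(1)$ further steps. This identity-plus-\cref{lem:le-earning} argument is the missing idea; without it (or an explicit substitute potential) your proof does not go through, whereas with it the count is simply at most $m$ transfers before the single swap plus $O(1)$ steps after, with at most two drops between consecutive transfers.
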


Note that $N_1$ has $m$ chores to begin with (i.e. always $|M_1|\le m$), and \cref{alg:three-agent-types} never transfers a chore to $N_1$. \cref{lem:be-loses-chore} therefore implies that in $\poly{m}$ steps either the allocation is wpEF1 or $N_1$ ceases to be the weighted big earner group. In the latter scenario, we show that we arrive at a wEF1 allocation in at most one more chore transfer step.

\begin{restatable}{lemma}{bechange}\label{lem:be-change}
After $N_1$ stops being the weighted big earner group for the last time, \cref{alg:three-agent-types} terminates with a wEF1 and fPO allocation after performing at most one subsequent chore transfer.
\end{restatable}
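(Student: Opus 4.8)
The plan is to follow a single chore out of $N_1$ and to control the weighted earnings with the WPS transfer lemmas. First I would argue that $N_1$ can lose its weighted-big-earner (wBE) status only at a chore transfer out of $N_1$: a payment drop (Lines 22--29) rescales only the payments of $M_\mu\cup M_\lambda$ or of $M_\lambda$, leaving $M_1$'s payments and every WPS bundle unchanged, so it cannot raise any other group's value $\p_{-1}(\x_k)/w_k$ relative to $N_1$'s. Hence the last step after which $N_1$ is no longer the wBE group is either an $N_\beta\to N_\lambda$ transfer (Lines 12--14) or an $N_\beta\to N_\mu$ transfer (Lines 19--21). Write $L_0$ and $L_1$ for the weighted-least-earner value just before and just after this step, so $L_0\le L_1$ by \cref{lem:le-earning}. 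In the resulting allocation $N_1$ is the weighted-middle (wME) or weighted-least (wLE) group, and by the ``last time'' hypothesis $N_1$ is never the wBE group again.

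If $N_1$ has become the wLE group, \cref{lem:be-becomes-le} already gives wpEF1 and we are done with no further transfer. Otherwise $N_1$ is the wME group and the new wBE and wLE groups both lie in $\{N_\mu,N_\lambda\}$. Consider first an $N_\beta\to N_\lambda$ trigger. The recipient $N_\lambda$ only gains, so by \cref{lem:wps-transfer-2}(iii) every $h\in N_\lambda$ has $\p_{-1}(\x_h)/w_h\le L_0$; and the untouched group $N_\mu$ did not wpEF1-envy $N_\lambda$ beforehand (the running invariant that agents of $N_2,N_3$ never wpEF1-envy one another), so every $i\in N_\mu$ has $\p_{-1}(\x_i)/w_i\le L_0$ as well. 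Since $N_1$ is not the wBE group, the overall wBE value is at most $L_0\le L_1$, which equals the new wLE value, and \cref{lem:BEenvyLE} yields wpEF1 with no subsequent transfer.

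The remaining, and main, case is an $N_\beta\to N_\mu$ trigger after which the allocation is not yet wpEF1. Using \cref{lem:le-to-be} to exclude the new wBE group being $N_\lambda$, and \cref{lem:be-becomes-le} to exclude the new wLE group being $N_1$, the only surviving configuration is: $N_\mu$ --- which just received a chore $j'$ from $N_1$ --- is the wBE group, $N_\lambda$ is still the wLE group, $N_1$ is wME, and $N_\mu$ wpEF1-envies $N_\lambda$. I claim the algorithm now performs exactly one more transfer. Entering this branch required $M_\mu\cap\mpb_\lambda\ne\emptyset$; since a transfer leaves payments (hence $\mpb$ sets) unchanged and $M_\mu$ only grew, $M_\mu\cap\mpb_\lambda\ne\emptyset$ still holds, so the next iteration takes Lines 12--14 and transfers some $j\in M_\mu\cap\mpb_\lambda$ from $N_\mu$ to $N_\lambda$. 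Moreover $j\ne j'$, since the triggering step fell through Lines 12--14, so $M_\beta\cap\mpb_\lambda=\emptyset$ held and thus $j'\notin\mpb_\lambda$.

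The crux --- and the step I expect to be the main obstacle --- is to show this single $N_\mu\to N_\lambda$ transfer produces a wpEF1 allocation, so the algorithm halts. The idea is to bound $N_\mu$'s top earner after the ``feed-then-shed'' it underwent. Writing $M_\mu^{0}$ for $N_\mu$'s chore set before it was fed $j'$, its bundle after shedding $j$ is $\x_{N_\mu}=\wps(N_\mu,(M_\mu^{0}\setminus j)\cup\{j'\})$, which is the hypothetical allocation $\z=\wps(N_\mu,M_\mu^{0}\setminus j)$ with $j'$ added; \cref{lem:wps-transfer-2}(iii) then gives $\max_{i\in N_\mu}\p_{-1}(\x_i)/w_i\le \min_{i\in N_\mu}\p(\z_i)/w_i$. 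But the negation of the Line-16 test --- the ``$N_\mu\to N_\lambda$ not allowed'' condition that forced the $N_\beta\to N_\mu$ branch, where the threshold was exactly the wLE value $L_0$ --- says precisely that $\min_{i\in N_\mu}\p(\z_i)/w_i\le L_0$ (this is where $j\ne j'$ is needed, so that $j\in M_\mu^{0}\cap\mpb_\lambda$ and the test applies to it). Hence $\max_{i\in N_\mu}\p_{-1}(\x_i)/w_i\le L_0$ after the transfer, while \cref{lem:wps-transfer-2}(iii) applied to $N_\lambda$'s gain gives $\max_{h\in N_\lambda}\p_{-1}(\x_h)/w_h\le L_1$. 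Letting $L_2\ge L_1$ denote the final wLE value (\cref{lem:le-earning}), and using the ``last time'' hypothesis to keep $N_1$ out of the wBE role, the overall wBE lies in $N_\mu\cup N_\lambda$ and its value $\max_k \p_{-1}(\x_k)/w_k$ is at most $L_1\le L_2$; \cref{lem:BEenvyLE} then certifies wpEF1. I expect the delicate points to be exactly this composition of the two WPS monotonicity applications with the branch condition, and the bookkeeping ensuring the shed chore is one to which the Line-16 inequality applies.
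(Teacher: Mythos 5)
Your handling of the $N_1\to N_\mu$ trigger is correct and essentially reproduces the paper's Case~1: the persistence of $M_\mu\cap\mpb_\lambda\neq\emptyset$ across the transfer (including your observation that $j'\notin\mpb_\lambda$, which guarantees the shed chore $j$ is one to which the Line-16 test applied), the comparison of $\wps(N_\mu,(M_\mu\setminus j)\cup j')$ against $\y=\wps(N_\mu,M_\mu\setminus j)$ via \cref{lem:wps-transfer-2}(iii), and the negated Line-16 bound $\min_{k\in N_\mu}\p(\y_k)/w_k\le L_0$ combine exactly as in the paper, with \cref{lem:le-earning} and the ``last time'' hypothesis closing the case.

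The genuine gap is your treatment of the $N_1\to N_\lambda$ trigger (the paper's Case~2). You dismiss it in a few lines by invoking a ``running invariant that agents of $N_2,N_3$ never wpEF1-envy one another.'' No such invariant is ever established --- the paper only states it as an \emph{aim} --- and it is precisely what can fail at this point: an earlier $N_1\to N_\mu$ transfer (Lines 19--21) happens only when the Line-16 test fails, i.e., when shedding any chore in $M_\mu\cap\mpb_\lambda$ would push $N_\mu$'s minimum weighted earning to or below the wLE level; since $N_\mu$ simultaneously gains a chore, agents in $N_\mu$ may wpEF1-envy agents in $N_\lambda$ and the algorithm cannot immediately repair this. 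The paper's proof spends the bulk of its length (Case~2, parts (iv) and (v)) on exactly this scenario, tracking the execution history through time-steps $t_1$ (when $N_\lambda$ last became the wLE group), $t_2$ (the last $N_1\to N_\mu$ transfer), $t_3$ (the first subsequent $N_\mu\to N_\lambda$ transfer), and $t_4$, using the fact that $M_\mu$ and $M_\lambda$ undergo payment drops \emph{together} in $[t_2,t_3]$, and, for agents in $N_1$, that their MPB ratio stays equal to one so that payment bounds convert into disutility bounds. Crucially, the conclusion obtained there is only that $\x'$ is \emph{wEF1}, not wpEF1; your claim that \cref{lem:BEenvyLE} certifies wpEF1 with no further transfer in this branch is therefore stronger than what the paper proves and rests on a circular premise. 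To repair the proposal you would need either to prove the quoted invariant (which, as stated for wpEF1, is false after Lines 19--21 transfers) or to carry out the historical argument with payment-drop bookkeeping, settling for wEF1 in this case as the paper does.
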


The above discussion leads us to conclude that \cref{alg:three-agent-types} computes a wEF1 and fPO allocation in polynomial time for three-agent-types instances. This proves \cref{thm:three-agent-types}.

\subsection{Analysis of \cref{alg:three-agent-types}: Proofs}
This section proves the lemmas stated in the previous section.

\lemwpstransferone*
\begin{proof}
We use the function $\sigma:[m]\rightarrow[n]$ to represent the $m$-length picking sequence used by the WPS algorithm on an instance $(N,M,\p)$. Thus if $\sigma(c) = i$ for a chore $c$ and agent $i$, then in iteration $c$ of the WPS algorithm (Alg.~\ref{alg:wps}) agent $i\in\arg\min_k s_k/w_k$ and agent $i$ is assigned chore $c$. We now consider the allocation $\y$ produced by the WPS algorithm on the instance $M\setminus j$ for some chore $j\in [m]$. For conceptual convenience, we also add a dummy item $(m+1)$ with $p_{m+1} = 0$. This ensures that for all $i$, $|\x_i| = |\y_i|$. The addition of the dummy item has no effect on either the values of disutility or the disutility minus the worst chore for any agent. 
Moreover the dummy item will necessarily be the last item to be allocated in the picking sequence $\sigma':[m+1]\setminus \{j\}\rightarrow[n]$ generating $\y$. They key observation is that since the WPS algorithm only depends on the number of items $s_i$ allocated to an agent $i$, the sequence $\sigma'$ is the same as $\sigma$ up to item $(j-1)$, after which it is `left-shifted'. That is, for $j'\in [m+1]\setminus\{j\}$, we have:
\begin{equation}\label{eq:wps-shift}
\sigma'(j') = 
\begin{cases}
    \sigma(j'), \text{ if } j' < j, \\
    \sigma(j'-1), \text{ if } j' > j.
\end{cases}    
\end{equation}

For an agent $i$, let $\x_i = \{a_1,\dots,a_k\}$ be the items in $M = [m]$ assigned to $i$ in $\x$ in the order in which they are picked by $i$. Thus, $p_{a_1} \ge p_{a_2} \ge \dots \ge p_{a_k}$. Let $\y_i = \{b_1,\dots,b_k\}$ be the items assigned to $i$ in $\y$ in the order in which they are picked by $i$. \cref{eq:wps-shift} implies that for $t\in[k]$:
\begin{equation}\label{eq:wps-shift-2}
b_t = 
\begin{cases}
    a_t, \text{ if } a_t < j \\
    a_t + 1, \text{ if } a_t \ge j
\end{cases}        
\end{equation}
Since $p_{a_t}\ge p_{a_t+1}$, \cref{eq:wps-shift-2} immediately implies that $p_{b_t} \le p_{a_t}$ for all $t\in[k]$. Thus we have: 
\[
\p(\y_i) = \sum_{t=1}^k p_{b_t} \le \sum_{t=1}^k p_{a_t} = \p(\x_i), \text{ and }
\p_{-1}(\y_i) = \sum_{t=2}^k p_{b_t} \le \sum_{t=2}^k p_{a_t} = \p_{-1}(\x_i), 
\]
which proves parts (i) and (ii).

To prove (iii), we use the continuous interpretation of the WPS algorithm \cite{wu2023wef1}. We imagine $s_i:[0,m]\rightarrow [0,m]$ as a non-decreasing function that grows uniformly at a rate of $1/w_i$ during time $(j-1, j]$ if $i$ was chosen in iteration $j$, and is unchanged during other time intervals. Define a function $\varphi: (0, k/w_i] \rightarrow \R^+$ such that $\varphi(\alpha)$ is the disutility of the item being picked by agent $i$ when $s_i(t) = \alpha$ in allocation $\x$. In other words, for $z\in[k]$ and $\alpha\in \big(\frac{z-1}{w_i},\frac{z}{w_i}\big]$, define $\varphi(\alpha) = p_{a_z}$. Similarly define $\varphi':(0, k/w_i] \rightarrow \R^+$ such that $\varphi'(\alpha)$ is the disutility of the item being picked by agent $i$ when $s_i(t) = \alpha$ in allocation $\y$. In other words, for $z\in[k]$ and $\alpha\in \big(\frac{z-1}{w_i},\frac{z}{w_i}\big]$, define $\varphi'(\alpha) = p_{b_z}$. These definitions imply that:
\begin{equation}\label{eq:y-earning}
\frac{\p(\y_i)}{w_i} = \int_0^{\frac{k}{w_i}} \varphi'(\alpha) \: d\alpha.
\end{equation}

Consider another agent $h$ and let $\x_h = \{e_1,\dots,e_\ell\}$. As before, define $\phi: (0, \ell/w_h] \rightarrow \R^+$ such that $\phi(\alpha)$ is the disutility of the item being picked by agent $h$ when $s_h(t) = \alpha$ in allocation $\x$. In other words, for $z\in[\ell]$ and $\alpha\in \big(\frac{z-1}{w_h},\frac{z}{w_h}\big]$, define $\phi(\alpha) = p_{e_z}$. This implies:
\begin{equation}\label{eq:x-earning}
\frac{\p_{-1}(\x_h)}{w_h} = \int_{\frac{1}{w_h}}^{\frac{\ell}{w_h}} \phi(\alpha) \: d\alpha.
\end{equation}

We now prove the following claims.
\begin{claim}\label{clm:wps-one}
$\frac{\ell-1}{w_h} \le \frac{k}{w_i}$.
\end{claim}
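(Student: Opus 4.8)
The plan is to read off the desired inequality directly from the greedy selection rule of the WPS algorithm (Alg.~\ref{alg:wps}), which always hands the current chore to an agent minimizing $s_k/w_k$, where $s_k$ is the number of chores assigned to $k$ so far. The key moment to examine is the iteration in which agent $h$ receives its \emph{last} chore $e_\ell$ in the allocation $\x = \wps(N,M)$. First I would dispose of the degenerate case $\ell = 0$: then $\frac{\ell-1}{w_h} = \frac{-1}{w_h} < 0 \le \frac{k}{w_i}$ since $k\ge 0$, so the claim holds trivially, and I may assume $\ell \ge 1$.

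For $\ell \ge 1$, fix the iteration $j$ at which chore $e_\ell$ is assigned to $h$. Since $e_1,\dots,e_{\ell-1}$ are exactly the chores $h$ picked before $e_\ell$, the counter of $h$ just before this assignment equals $s_h = \ell-1$. Because $h$ is the agent chosen in iteration $j$, the selection rule guarantees $\frac{s_h}{w_h} \le \frac{s_k}{w_k}$ for every agent $k$ (ties being broken only by index, which still yields a weak inequality); applying this to $k=i$ gives $\frac{\ell-1}{w_h} \le \frac{s_i}{w_i}$, where $s_i$ denotes the number of chores held by $i$ at iteration $j$. Finally, since the counters are nondecreasing and agent $i$ ends the run with $|\x_i| = k$ chores, we have $s_i \le k$, and hence
\[
\frac{\ell-1}{w_h} \le \frac{s_i}{w_i} \le \frac{k}{w_i},
\]
which is exactly the claim.

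I do not expect a genuine obstacle here: the statement is essentially a one-line consequence of the minimizing rule, and the only points needing care are the bookkeeping that $s_h = \ell-1$ at the decisive iteration (not $\ell$) and the observation that $i$'s intermediate count is bounded by its final count $k$. The tie-breaking convention and the empty-bundle case $\ell=0$ are the only edge cases, and both are handled as above. The subtlety worth flagging is purely notational: the lemma statement writes the common disutility function as $\p(\cdot)$, but the claim itself depends only on the cardinalities $k=|\x_i|$ and $\ell=|\x_h|$ and on the weights, so the payments/disutilities play no role in this particular step.
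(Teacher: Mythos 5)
Your proof is correct and follows essentially the same route as the paper's: both arguments examine the iteration at which agent $h$ picks its last chore $e_\ell$, use the selection rule to get $\frac{\ell-1}{w_h} \le \frac{s_i}{w_i}$, and conclude via monotonicity of $i$'s counter up to its final value $k$ (the paper phrases this with the continuous-time functions $s_i(t)$, you with the discrete counters, but the content is identical). Your explicit handling of the $\ell=0$ edge case and the tie-breaking remark are minor additions the paper leaves implicit.
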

\begin{proof}
At the time $t$ when agent $h$ picks her last item $e_\ell$, it must be that $s_h(t) \le s_i(t)$, since $h$ was chosen at time $t$. The claim then follows by noting $s_h(t) = (\ell-1)/w_h$, and $s_i(t) \le s_i(m) = k/w_i$ since $s_i$ is non-decreasing. 
\end{proof}

\begin{claim}\label{clm:wps-two}
For all $\alpha\in\big(\frac{1}{w_h}, \frac{\ell}{w_h}\big]$, $\phi(\alpha) \le \varphi'(\alpha-\frac{1}{w_h})$.
\end{claim}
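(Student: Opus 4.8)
The plan is to prove \cref{clm:wps-two} by converting the inequality between payments into an inequality between item indices. Fix $\alpha\in(\frac{1}{w_h},\frac{\ell}{w_h}]$ and let $z\in\{2,\dots,\ell\}$ be the index with $\alpha\in(\frac{z-1}{w_h},\frac{z}{w_h}]$, so that $\phi(\alpha)=p_{e_z}$. Write $\beta:=\alpha-\frac{1}{w_h}$ and let $z'=\lceil\beta w_i\rceil$, so that $\beta\in(\frac{z'-1}{w_i},\frac{z'}{w_i}]$ and $\varphi'(\beta)=p_{b_{z'}}$; note $\beta>\frac{z-2}{w_h}\ge 0$, and \cref{clm:wps-one} gives $\beta\le\frac{z-1}{w_h}\le\frac{\ell-1}{w_h}\le\frac{k}{w_i}$, so $\beta$ lies in the domain of $\varphi'$ and $b_{z'}$ is well-defined. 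Since payments are non-increasing in the item index ($p_1\ge\cdots\ge p_m$), it suffices to prove the index inequality $b_{z'}\le e_z$, which immediately yields $p_{e_z}\le p_{b_{z'}}$, i.e. $\phi(\alpha)\le\varphi'(\alpha-\frac{1}{w_h})$.

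The key observation is that the \emph{picking pattern is shared} by the two runs. As noted for \cref{eq:wps-shift}, the agent-sequence $\sigma$ of WPS depends only on the counts $s_k/w_k$, hence only on the weights and the (fixed) tie-breaking rule; the run producing $\y$ uses the same sequence, merely re-indexed by the left shift. Read at the level of \emph{slots} $r=1,\dots,m$ (iteration numbers), both the agent $\sigma(r)$ filling slot $r$ and the weighted level $\lambda(r):=(\text{number of prior picks of }\sigma(r))/w_{\sigma(r)}$ at which it is filled are therefore identical in the two runs; only the item placed in each slot changes. Moreover $\lambda$ is non-decreasing in $r$, since after each pick exactly one agent's level strictly increases from the current minimum, so the minimum never decreases. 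In the run for $\x$, item $r$ occupies slot $r$, so $e_z$ sits in slot $e_z$ with $\lambda(e_z)=\frac{z-1}{w_h}$ (it is $h$'s $z$-th pick). In the run for $\y$, if $r^\ast$ denotes the slot of $i$'s $z'$-th pick then $\lambda(r^\ast)=\frac{z'-1}{w_i}$, and by the left shift of \cref{eq:wps-shift} the item in slot $r^\ast$ satisfies $b_{z'}\in\{r^\ast,r^\ast+1\}$, so $b_{z'}\le r^\ast+1$.

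It then remains to compare the two slots. From $\alpha\le\frac{z}{w_h}$ we get $\beta=\alpha-\frac{1}{w_h}\le\frac{z-1}{w_h}$, and hence $\lambda(r^\ast)=\frac{z'-1}{w_i}<\beta\le\frac{z-1}{w_h}=\lambda(e_z)$. Since $\lambda$ is non-decreasing, $\lambda(r^\ast)<\lambda(e_z)$ forces $r^\ast<e_z$, i.e. $r^\ast\le e_z-1$. Combining with $b_{z'}\le r^\ast+1$ gives $b_{z'}\le e_z$, which is exactly the index inequality we wanted.

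The step I expect to be most delicate is making the shared-pattern picture fully rigorous in the presence of the dummy item introduced for \cref{lem:wps-transfer-1}(i)--(ii) and of ties in the weighted counts; these only affect how slots are labelled and how the $\arg\min$ is resolved, and because the tie-breaking rule is fixed, $\sigma$ and the levels $\lambda(r)$ stay genuinely common to both runs, so the argument is unaffected. I also want to stress that the shift by $\frac{1}{w_h}$ is not cosmetic: it is exactly what converts the weaker bound $\lambda(r^\ast)<\frac{z}{w_h}$ (which would only give $r^\ast\le e_z$) into the strict bound $\lambda(r^\ast)<\frac{z-1}{w_h}=\lambda(e_z)$, and this extra slot of slack absorbs the $+1$ coming from the left shift $b_{z'}\le r^\ast+1$. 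Finally, once \cref{clm:wps-two} is established, integrating it over $\alpha\in(\frac{1}{w_h},\frac{\ell}{w_h}]$ and substituting $\alpha\mapsto\alpha-\frac{1}{w_h}$, together with \cref{clm:wps-one} to extend the integral up to $\frac{k}{w_i}$ and the non-negativity of $\varphi'$, turns \cref{eq:x-earning} and \cref{eq:y-earning} into $\frac{\p_{-1}(\x_h)}{w_h}\le\frac{\p(\y_i)}{w_i}$, completing part (iii).
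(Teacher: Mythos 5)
Your proof is correct and is essentially the paper's argument in discrete form: your slot-level comparison $\lambda(r^\ast)<\lambda(e_z)\Rightarrow r^\ast<e_z$ is the exact counterpart of the paper's continuous-time step $s_i(t_1)\le s_h(t^\ast)\le s_i(t^\ast)\Rightarrow t_1\le t^\ast<t_2$, and your left-shift fact $b_{z'}\in\{r^\ast,r^\ast+1\}$ is precisely the paper's $b\in\{a,a+1\}$ from \cref{eq:wps-shift-2}, with the sorted payments absorbing the $+1$ in both cases. The reduction to the index inequality $b_{z'}\le e_z$, the use of \cref{clm:wps-one} for well-definedness, and the final integration step all match the paper's proof, so there is nothing to fix.
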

\begin{proof}
Let $z\in [\ell]$ be such that $\alpha\in\big( \frac{z-1}{w_h}, \frac{z}{w_h}\big]$. Then $\phi(\alpha) = p_{e_z}$ by definition of $\phi$. Let $t_2$ be the time instant s.t. $s_h(t_2) = \alpha$ for the first time. Let $t^*$ be the largest integer strictly smaller than $t_2$, i.e., $t_2\in (t^*, t^*+1]$. In this interval $s_h(t)$ goes from $\frac{z-1}{w_h}$ to $\frac{z}{w_h}$.  Now let $t_1$ be the earliest time when $s_i(t_1) = \alpha-\frac{1}{w_h}$. Note that $t_1$ is well defined, since $\alpha-\frac{1}{w_h} > 0$, as $\alpha \in (\frac{1}{w_h}, \frac{\ell}{w_h}]$. Thus we have $s_i(t_1) = \alpha - \frac{1}{w_h} \le \frac{z-1}{w_h} = s_h(t^*)$. At $t^*$, since agent $h$ is the agent chosen to pick an chore, hence it must be that $s_h(t^*) \le s_i(t^*)$. Therefore $s_i(t_1) \le s_i(t^*)$. Since $s_i(\cdot)$ is non-decreasing, we conclude $t_1\le t^*$. Since $t_2 \in (t^*, t^*+1]$, we have $t_1 < t_2$.

At time $t_1$, let $a$ (resp. $b$ be the chore being consumed by agent $i$ in allocation $\x$ (resp. $\y)$. From \cref{eq:wps-shift-2}, we know that either $b = a$ or $b = a+1$. Recall that $e_z$ is the chore being consumed by agent $h$ at time $t_2$ in allocation $\x$. Since $t_1 < t_2$, and since chores are allocated in non-increasing order of disutility, it must be that the disutility of $e_z$ does not exceed the disutility of chore $(a+1)$, since $(a+1)$ is the chore with the highest disutility available after chore $a$ gets consumed. Thus $p_{e_z} \le p_{a+1}$ and hence $p_{e_z} \le p_b$. The claim then follows by noting that $\phi(\alpha) = p_{e_z}$ and $p_b = \varphi'(\alpha-\frac{1}{w_h})$. The latter equality holds from the definition of $\varphi'$, as $b$ is the chore being consumed by $h$ at the time $t_1$ when $s_h(t_1) = \alpha-\frac{1}{w_h}$.
\end{proof}

Armed with Claims~\ref{clm:wps-one} and \ref{clm:wps-two}, we observe that:

\begin{equation*}
\begin{aligned}
\frac{\p(\y_i)}{w_i} &= \int_0^{\frac{k}{w_i}} \varphi'(\alpha) \: d\alpha &&\text{ (\cref{eq:y-earning}) }\\
&\ge \int_0^{\frac{\ell-1}{w_h}} \varphi'(\alpha) \: d\alpha &&\text{ (\cref{clm:wps-one}) }\\
&= \int_{\frac{1}{w_\ell}}^{\frac{\ell}{w_h}} \varphi'(\alpha+1/w_h) \: d\alpha &&\text{ (change of variables) }\\
&\ge \int_{\frac{1}{w_\ell}}^{\frac{\ell}{w_h}} \phi(\alpha) \: d\alpha &&\text{ (\cref{clm:wps-two}) }\\
&= \frac{\p_{-1}(\x_h)}{w_h}, &&\text{ (\cref{eq:x-earning}) } 
\end{aligned} 
\end{equation*} 
thus proving part (iii) of the lemma.
\end{proof}

\lemwpstransfertwo*
\begin{proof}
Define $M' = M\cup j$. Then $\y = \wps(N, M')$ and $\x = \wps(N, M'\setminus j)$. \cref{lem:wps-transfer-2} then directly follows by invoking \cref{lem:wps-transfer-1} on allocations $\y$ and $\x$ (with labels $\x$ and $\y$ swapped).    
\end{proof}

\leearning*
\begin{proof}
Consider a transfer of a chore which strictly reduces the weighted earning of the weighted least earner (wLE). Let $\x$ (resp. $\y$) be the allocation before (resp. after) the transfer and $\p$ the payment vector. Note that $(\x,\p)$ is not wpEF1, otherwise \cref{alg:three-agent-types} would have halted before the transfer. 

\begin{claim}\label{clm:new-wle-group}
The new wLE in $(\y,\p)$ belongs to the group losing the chore.
\end{claim}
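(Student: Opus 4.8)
The plan is to exploit the monotonicity of the WPS allocation under chore removal and addition, as established in \cref{lem:wps-transfer-1} and \cref{lem:wps-transfer-2}. Since the step under consideration is a \emph{chore transfer} and not a payment drop, the payment vector $\p$ is identical in $\x$ and $\y$; only the assignment of chores to groups changes. Let $A$ be the group that loses the transferred chore, $B$ the group that gains it, and $C$ the third, uninvolved group. The allocation of $C$ is untouched, so $\p(\y_k) = \p(\x_k)$ for every agent $k\in C$.

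First I would record how the weighted earnings move within each of the involved groups. The group $B$ receives one extra chore and is re-allocated by WPS, so \cref{lem:wps-transfer-2}(i) gives $\p(\y_k)\ge\p(\x_k)$ for all $k\in B$; the group $A$ loses one chore and is re-allocated by WPS, so \cref{lem:wps-transfer-1}(i) gives $\p(\y_k)\le\p(\x_k)$ for all $k\in A$. As noted after those lemmas, since the bundles are on MPB within each group, $\p$ plays the role of the (proportional) disutility vector, so these inequalities apply directly to earnings. Thus the only agents whose weighted earning can strictly drop are those in $A$.

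Next I would combine these facts with the hypothesis that the transfer strictly reduces the weighted earning of the wLE. Let $v = \min_{k} \p(\x_k)/w_k$ be the weighted earning of the wLE before the transfer. For every $k\in B\cup C$ we have $\p(\y_k)/w_k \ge \p(\x_k)/w_k \ge v$, so no agent outside $A$ has weighted earning below $v$ in $\y$. Since by assumption $\min_k \p(\y_k)/w_k < v$, the new minimum must be attained by some agent of $A$. Hence the new wLE lies in the group $A$ losing the chore, as claimed.

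The argument is short, and there is no serious obstacle once the correct monotonicity property is applied to each group separately. The only point requiring care is to confirm that the step is indeed a chore transfer with $\p$ held fixed, so that the comparison between $\x$ and $\y$ is governed purely by \cref{lem:wps-transfer-1} and \cref{lem:wps-transfer-2} (and not by any simultaneous change of payments), and to be careful that $v$ is the \emph{global} minimum over all agents rather than a per-group minimum, which is what lets us rule out both $B$ and $C$.
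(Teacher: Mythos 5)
Your proof is correct and follows essentially the same route as the paper's: the paper likewise eliminates the receiving group via WPS monotonicity (\cref{lem:wps-transfer-1}(i), equivalently your use of \cref{lem:wps-transfer-2}(i)) and the uninvolved group via unchanged earnings, forcing the new wLE into the losing group. Your write-up merely makes explicit two points the paper leaves implicit --- that payments are fixed during a chore transfer and that the hypothesis of a \emph{strict} drop in the wLE's weighted earning is what rules out the other two groups --- which is a faithful, slightly more careful rendering of the same argument.
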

\begin{proof}
By \cref{lem:wps-transfer-1} (i), the new wLE cannot be in group receiving the chore. Moreover the new wLE cannot be in the group which does not participate in the transfer, since the earning of agents in that group is unchanged. Therefore the new wLE must belong to the group losing the chore.    
\end{proof}

Let $N_\beta$, $N_\mu$, and $N_\lambda$ be the wBE, wME, and wLE groups in $(\x, \p)$. There are two possibilities: 
\begin{itemize}
\item Chore transfer from $N_\beta$ to $N_\lambda$ or $N_\mu$ . Let $\ell$ and $b$ be the wLE and wBE in $(\x,\p)$ respectively, and $\ell'$ the new wLE in $(\y,\p)$. By definition we have $\ell\in N_\lambda$, $b\in N_\beta$, and by \cref{clm:new-wle-group}, $\ell'\in N_\beta$. Observe that:
\[ 
\frac{\p(\y_{\ell'})}{w_{\ell'}} \ge \frac{\p_{-1}(\x_b)}{w_b} > \frac{\p(\x_\ell)}{w_\ell},
\]
where the first inequality is due to \cref{lem:wps-transfer-1} applied to $N_\beta$ and the second inequality is because $(\x,\p)$ is not wpEF1.
\item Chore transfer from $N_\mu$ to $N_\lambda$. \cref{clm:new-wle-group} implies the new wLE in the allocation $(\y,\p)$ must be in $N_\mu$. By the design of \cref{alg:three-agent-types} a chore transfer from $N_\mu$ to $N_\lambda$ is allowed only if after the transfer the weighted earning of the wLE in $N_\mu$ is strictly larger than the weighted earning of the current wLE (Line 16). 
\end{itemize}
In either case the weighted earning of the wLE does not decrease due to the chore transfer.
\end{proof}

\letobe*
\begin{proof}
Clearly a payment drop step cannot cause the wLE group to change. Let us therefore consider a chore transfer step which causes the wLE group $N_\lambda$ in an allocation $(\x,\p)$ to become the wBE group in the subsequent allocation $(\y,\p)$. If a chore is not transferred to $N_\lambda$, the wLE in $(\y,\p)$ must continue to lie in $N_\lambda$. Then $(\y,\p)$ must be wpEF1 since both the wBE and wLE lie in the same group in $(\y,\p)$. Therefore suppose a chore is transferred to $N_\lambda$. Let $\ell$ be the wLE in $(\x,\p)$. Let $\ell'$ and $b'$ be the wLE and wBE in $(\y, \p)$ respectively. We know $\ell,b'\in N_\lambda$. Observe that:
\[ 
\frac{\p_{-1}(\y_{b'})}{w_{b'}} \le \frac{\p(\x_\ell)}{w_\ell} \le \frac{\p(\x_{\ell'})}{w_{\ell'}},
\]
where the first inequality uses \cref{lem:wps-transfer-2} applied to $N_\lambda$ and the second inequality uses \cref{lem:le-earning}. The above equation implies that the wBE does not wpEF1-envy the wLE in $(\y,\p)$, implying that $(\y,\p)$ is wpEF1 by invoking \cref{lem:BEenvyLE}.
\end{proof}

\lechange*
\begin{proof}
Clearly a payment drop step cannot cause the wLE group to change. Let us therefore consider a chore transfer step which causes the wLE group $N_\lambda$ in an allocation $(\x,\p)$ to become the wME group in the subsequent allocation $(\y,\p)$. If a chore is not transferred to $N_\lambda$, the wLE in $(\y,\p)$ must continue to lie in $N_\lambda$ and $N_\lambda$ cannot be the wME group in $(\y,\p)$. Therefore suppose a chore is transferred to $N_\lambda$. Let $\ell\in N_\lambda$ be the wLE in $(\x,\p)$, $b'$ be the wBE in $N_\lambda$ in $(\y,\p)$, and $\ell'$ the wLE in $(\y, \p)$. Observe that:
\[ 
\frac{\p_{-1}(\y_{b'})}{w_{b'}} \le \frac{\p(\x_\ell)}{w_\ell} \le \frac{\p(\x_{\ell'})}{w_{\ell'}},
\]
where the first inequality uses \cref{lem:wps-transfer-2} applied to $N_\lambda$ and the second inequality uses \cref{lem:le-earning}. The above equation implies that the wBE in $N_\lambda$ does not wpEF1-envy the wLE in $(\y,\p)$, implying that agents in $N_\lambda$ (the new wME group) do not wpEF1-envy agents in the new wLE group.
\end{proof}

\bebecomesle*
\begin{proof}
Since $N_1$ does not participate in any payment drops, only a chore transfer step can cause $N_1$ to stop being the wBE group. Let $j'\in M_1$ be the chore transferred from $N_1$ in the allocation $(\x,\p)$ which results in the allocation $(\x',\p)$ in which $N_1$ is the wLE group. We show $(\x',\p)$ is wpEF1. We have two possibilities for the transfer of $j'$.

\paragraph{Case 1.} Suppose $j'$ is transferred from $N_1$ to $N_\mu$ in $(\x,\p)$. We argue that $N_1$ cannot be the wLE group in $(\x',\p)$. For sake of contradiction assume this is the case. Let $\ell\in N_\lambda$ and $b\in N_1$ be the wLE and wBE in $(\x,\p)$ respectively, and let $\ell'\in N_1$ be the wLE agent in group $N_1$ in $(\x', \p)$. Observe that:
\[ 
\frac{\p(\x'_{\ell'})}{w_{\ell'}} \ge \frac{\p_{-1}(\x_b)}{w_b} > \frac{\p(\x_\ell)}{w_\ell} = \frac{\p(\x'_\ell)}{w_\ell},
\]
where the first inequality uses \cref{lem:wps-transfer-1} applied to group $N_1$, the second inequality uses the fact that $(\x,\p)$ is not wpEF1, and the equality holds because $\x_\ell = \x'_\ell$. This inequality implies that in $(\x',\p)$, the weighted earning of the wLE agent $\ell'$ in $N_1$ exceeds the weighted earning of the wLE agent $\ell$ in $N_\lambda$, which contradicts the assumption that $N_1$ is the wLE group in $(\x',\p)$.

\paragraph{Case 2.} Suppose $j'$ is transferred from $N_1$ to $N_\lambda$ in $(\x,\p)$. If $N_\lambda$ is the wBE group in $(\x',\p)$ then \cref{lem:le-to-be} implies $(\x',\p)$ is wpEF1. Let us therefore assume $N_\lambda$ is the wME group and $N_\mu$ is the wBE group in $(\x',\p)$. Let $b\in N_1$ be the wBE agent in $(\x,\p)$, and let $\ell'\in N_1$ and $b'\in N_\mu$ be the wLE and wBE agents in $(\x', \p)$ respectively. Observe that:
\[ 
\frac{\p_{-1}(\x'_{b'})}{w_{b'}} = \frac{\p_{-1}(\x_{b'})}{w_{b'}} \le \frac{\p_{-1}(\x_b)}{w_b} \le \frac{\p(\x'_{\ell'})}{w_{\ell'}},
\]
where the equality uses $\x'_i = \x_i$ for all $i\in N_\mu$, the first inequality uses the fact that $b$ is the wBE agent in $\x$, and the second inequality uses \cref{lem:wps-transfer-1} applied to group $N_1$. This shows that if $N_1$ is the wLE group in $(\x',\p)$, then it is wpEF1.
\end{proof}

\beloseschore*
\begin{proof}
We first note that there can be at most two payment drops (Lines 23-25 and Lines 27-29) after which a chore transfer from the wBE group $N_\beta$ must occur. We therefore only need to bound the number of chore transfers taking place between the other two groups. Clearly there can only be $m$ chore transfers from the wME group $N_\mu$ to the wLE group $N_\lambda$, after which $N_\lambda$ stops being the wLE group. Consider the last transfer of a chore $j\in M_\mu \cap \mpb_\lambda$ in an allocation $(\x,\p)$ which results in an allocation $\x'$ and causes $N_\lambda$ to stop being the wLE group. Using Lemmas \ref{lem:le-to-be} and \ref{lem:le-change}, if $(\x', \p)$ is not wpEF1 then in $(\x', \p)$, $N_\lambda$ is the new wME group and $N_\mu$ is the new wLE group and no agent in $N_\lambda$ wpEF1-envies an agent in $N_\mu$.

We claim that after the last transfer of the chore $j$ from $N_\mu$ to $N_\lambda$, \cref{alg:three-agent-types} will not perform any chore transfer from $N_\lambda$ to $N_\mu$ in $(\x', \p)$. Recall the \cref{alg:three-agent-types} allows transfers from the wME group to the wLE group only under certain conditions (Lines 16-18). We argue that these conditions are not met at $(\x',\p)$ and hence a transfer from $N_\lambda$ to $N_\mu$ cannot take place. In $(\x',\p)$, we have $j\in M_\lambda \cap \mpb_\mu$. Consider the allocation $\y = \wps(N_\lambda, \big(\bigcup_{i\in N_\lambda} \x'_i\big)\setminus j)$ as computed in Line 16 when deciding whether or not to allow a transfer from $N_\lambda$ to $N_\mu$ in $(\x',\p)$. In the allocation $\x'$, the group $N_\lambda$ is assigned all chores allocated to $N_\lambda$ including the transferred chore $j$, i.e., $\big(\bigcup_{i\in N_\lambda} \x'_i\big)\setminus j) = \big(\bigcup_{i\in N_\lambda} \x_i\big))$. Thus the allocation $\y$ is exactly the allocation $\x$ restricted to the group $N_\lambda$, i.e., for all $i\in N_\lambda$, $\y_i = \x_i$. We therefore have:
\[ 
\min_{i\in N_\lambda} \frac{\p(\y_i)}{w_i} = \min_{i\in N_\lambda} \frac{\p(\x_i)}{w_i} \le \min_{k\in N_\mu} \frac{\p(\x'_k)}{w_k},
\]
where the second inequality is due to \cref{lem:le-earning} as the wLE in $(\x',\p)$ lies in $N_\mu$. Therefore, \cref{alg:three-agent-types} will not proceed with a transfer from the wME group $N_\lambda$ to the wLE group $N_\mu$ in $(\x',\p)$. 

The next step of the algorithm will thus either be a chore transfer from $N_\beta$ or a payment drop of chores assigned to $N_\lambda \cup N_\mu$ immediately after which a chore is transferred from $N_\beta$. Therefore in every $\poly{m}$ steps a chore is taken from $N_\beta$.
\end{proof}

\bechange*
\begin{proof}
Since $N_1$ does not participate in any payment drops, only a chore transfer step can cause $N_1$ to stop being the wBE group. Let $(\x,\p)$ be the last allocation in the execution of \cref{alg:three-agent-types} in which $N_1$ is the wBE group. Let us suppose the transfer of a chore $j'\in M_1$ in $(\x,\p)$ results in the allocation $\x'$ where $N_1$ is no longer the wBE group. If $N_1$ is the wLE group in $(\x',\p)$ then \cref{lem:be-becomes-le} shows that $(\x',\p)$ is wpEF1. If $N_\lambda$ is the wBE group in $(\x,\p)$ then \cref{lem:le-to-be} implies $(\x',\p)$ is wpEF1. Therefore for the rest of the proof, we assume that

\begin{quote}
In $(\x',\p)$, $N_\lambda$ is the wLE group, $N_1$ is the wME group, and $N_\mu$ is the wBE group.
\end{quote}

\noindent Let $\ell\in N_\lambda$ and $b\in N_1$ be the wLE and wBE agents in $(\x,\p)$, and let $\ell'\in N_\lambda$ and $b'\in N_\mu$ be the wLE and wBE agents in $(\x',\p)$. We have two possibilities for the transfer of $j'$ in $(\x,\p)$.

\paragraph{Case 1.} Suppose $j'$ is transferred from $N_1$ to $N_\mu$ in $(\x,\p)$. Let $M_\mu$ and $M'_\mu$ be the set of chores allocated to agents in $N_\mu$ in allocations $\x$ and $\x'$ respectively. That is, $M_\mu = \bigcup_{i\in N_\mu} \x_i$, and $M'_\mu = \bigcup_{i\in N_\mu} \x'_i = M_\mu \cup j'$. The fact that \cref{alg:three-agent-types} performed the transfer of $j'$ from $N_1$ to $N_\mu$ implies that $|M_\mu \cap \mpb_\lambda| \neq\emptyset$, and for all $j\in M_\mu \cap \mpb_\lambda$ such that for $\y  = \wps(N_\mu, M_\mu\setminus j)$ as computed in Line 16, we have:
\begin{equation}\label{eq:case1}
\min_{k\in N_\mu} \frac{\p(\y_k)}{w_k} \le \frac{\p(\x_\ell)}{w_\ell}.
\end{equation}

Suppose $(\x',\p)$ is not wpEF1. Notice that in $(\x',\p)$, there is a chore $j$ which belongs to the wBE group $N_\mu$ and lies in the MPB set of the wLE group $N_\lambda$, i.e., $j\in M'_\mu \cap \mpb_\lambda$. This is because we know there exists some chore $j\in M_\mu \cap \mpb_\lambda$ and $M'_\mu = M_\mu \cup j'$. Thus as per Lines 12-14, \cref{alg:three-agent-types} will transfer chore $j$ from the wBE group $N_\mu$ to the wLE group $N_\lambda$ of $(\x',\p)$, resulting in an allocation $\x''$. We show that $(\x'',\p)$ is wpEF1 by considering the identity of the wBE group in $(\x'',\p)$ as follows.

\begin{itemize}
\item[(i)] In $(\x'',\p)$, $N_\lambda$ is the wBE group. This means that a chore transfer step caused the wLE group $N_\lambda$ of $(\x',\p)$ to become the wBE group in $(\x'',\p)$. By \cref{lem:le-to-be} the allocation $(\x'',\p)$ must be wpEF1.

\item[(ii)] In $(\x'',\p)$, $N_1$ is the wBE group. This contradicts the fact that $(\x,\p)$ was the last allocation in which $N_1$ is the wBE group.

\item[(iii)] In $(\x'',\p)$, $N_\mu$ is the wBE group. 
Let $\ell''$ be the wLE agent and $b''\in N_\mu$ be the wBE agent in $(\x'',\p)$. 

Let $\hat{\x}''$ be the allocation $\x''$ restricted to the agents in group $N_\mu$. Then we have $M''_\mu := \bigcup_{i\in N_\mu} \hat{\x}''_i = M'_\mu \setminus j = (M_\mu \setminus j) \cup j'$. Note that $\hat{\x}'' = \wps(N_\mu, M''_\mu)  = \wps(N_\mu, (M_\mu \setminus j) \cup j')$, and $\y = \wps(N_\mu, M_\mu \setminus j)$. Hence we can apply \cref{lem:wps-transfer-2} to the group $N_\mu$ and the allocations $\hat{\x}''$ and $\y$ to obtain: 
\[ 
\frac{\p_{-1}(\x''_{b''})}{w_{b''}} \le \min_{k\in N_\mu} \frac{\p(\y_k)}{w_k}.
\]
Putting this together with \cref{eq:case1}, we get:
\[ \frac{\p_{-1}(\x''_{b''})}{w_{b''}} \le \frac{\p(\x_\ell)}{w_\ell} \le \frac{\p(\x''_{\ell''})}{w_{\ell''}},\]
where the first inequality uses \cref{eq:case1} and the second inequality uses \cref{lem:le-earning} which shows that the weighted earning of the wLE agent is non-decreasing. The above inequality shows that the wBE does not wpEF1-envy the wLE in $(\x'',\p)$, hence \cref{lem:BEenvyLE} implies that $(\x'',\p)$ must be wpEF1. 
\end{itemize}
Thus in this case, after the last allocation $(\x,\p)$ where $N_1$ was the wBE group, \cref{alg:three-agent-types} halts in at most two subsequent chore transfers, since either allocation $(\x',\p)$ or $(\x'',\p)$ will be wpEF1.

\paragraph{Case 2.} Suppose $j'$ is transferred from $N_1$ to $N_\lambda$ in $(\x,\p)$. We show that $\x'$ is wEF1. Recall that agents belonging to the same group do not wpEF1-envy each other. We analyze the other possibilities below.
\begin{itemize}
\item[(i)] No agent $i\in N_\lambda$ wpEF1-envies another agent $h\in N$ in $(\x',\p)$. This is because:
\[ 
\frac{\p_{-1}(\x'_i)}{w_i} \le \frac{\p(\x'_{\ell'})}{w_{\ell'}} \le \frac{\p(\x'_h)}{w_h},
\]
where the first inequality uses that agents in $N_\lambda$ do not wpEF1-envy each other, and the second inequality uses the fact that $\ell'$ is a wLE agent in $(\x',\p)$.

\item[(ii)] No agent $i\in N_\mu$ wpEF1-envies an agent $h\in N_1$ in $(\x',\p)$. This is because:
\[ 
\frac{\p_{-1}(\x'_i)}{w_i} = \frac{\p_{-1}(\x_i)}{w_i} \le \frac{\p_{-1}(\x_{b})}{w_{b}} \le \frac{\p(\x'_h)}{w_h},
\]
where the equality uses $\x'_i = \x_i$ for $i\in N_\mu$, the first inequality uses that $b\in N_1$ is the wBE agent in $(\x,\p)$, and the second inequality uses \cref{lem:wps-transfer-1} applied to group $N_1$.
\item[(iii)] No agent $i\in N_1$ wpEF1-envies an agent $h\in N_\mu$ in $(\x',\p)$. This is because:
\[
\frac{\p_{-1}(\x'_i)}{w_i} \le \frac{\p_{-1}(\x'_{b'})}{w_{b'}} \le \frac{\p(\x'_h)}{w_h},
\]
where the first inequality uses that $b'$ is the wBE agent in $(\x',\p)$, and the second inequality uses that agents in $N_\mu$ do not wpEF1-each other.

At this point, it only remains to be shown that agents in $N_1$ or $N_\mu$ do not wEF1-envy agents in $N_\lambda$.

\item[(iv)] No agent $i\in N_\mu$ wEF1-envies an agent $\ell\in N_\lambda$. In the following, we refer to the $t^{th}$ step of \cref{alg:three-agent-types} as \textit{time-step} $t$. We use the notation $(\x^t,\p^t)$ to denote the allocation and payments \textit{just after} the execution of the time-step $t$. Thus the initial allocation is $(\x^0,\p^0)$. Let $t_4$ be the time-step such that $(\x^{t_4},\p^{t_4}) = (\x',\p)$.

Let $t_1$ be the most recent time-step in the execution of \cref{alg:three-agent-types} after which $N_\lambda$ becomes the wLE group. Formally, $t_1$ is the largest time-step $t$ s.t. in $(\x^t,\p^t)$, $N_\lambda$ is the wLE group but in $(\x^{(t-1)},\p^{(t-1)})$, $N_\lambda$ is not the wLE group. In other words, $N_\lambda$ is the wLE group during the time-interval $[t_1, t_4]$. Just before time-step $t_1$, the group $N_\mu$ was the wLE group and is the wME group at $t_1$. From \cref{lem:le-change}, we must have that an agent $i\in N_\mu$ (the former wLE group) cannot wpEF1-envy any agent in $N_\lambda$ (the former wME group) at time $t_1$. Thus:
\begin{equation}
\frac{\p_{-1}(\x^{t_1}_i)}{w_i} \le \frac{\p(\x^{t_1}_{\ell_1})}{w_{\ell_1}},
\end{equation}
where $\ell_1\in N_\lambda$ is the wLE agent in $(\x^{t_1},\p^{t_1})$. Thus at time $t_1$, no agent $i\in N_\mu$ wpEF1-envies any agent in $N_\lambda$. From time-step $t_1$ onwards, $N_\lambda$ does not lose any chores as it is the wLE group. Hence it is possible for the agent $i\in N_\mu$ to start wEF1-envying an agent in $N_\lambda$ only if $i$ gains a chore. Let $t_2\in[t_1,t_4]$ be the last time-step during which a chore $j_1$ is transferred from $N_1$ to $N_\mu$. Let $t_2^- := t_2-1$ be the previous time-step. Let $M_\mu = \bigcup_{i\in N_\mu} \x_i^{t_2^-}$, and $\mpb_\lambda$ denote the MPB set of $N_\lambda$ at $(\x^{t_2^-}, \p^{t_2^-})$. Because \cref{alg:three-agent-types} performed the transfer of $j_1$ from $N_1$ to $N_\mu$, it must be that $|M_\mu \cap \mpb_\lambda|\neq\emptyset$ and for every $j_2\in M_\mu \cap \mpb_\lambda$, we have for allocation $\y = \wps(N_\mu, M_\mu\setminus j_2)$ as computed in Line 16, we have:
\begin{equation}\label{eq:case2}
\min_{k\in N_\mu} \frac{\p^{t_2^-}(\y_k)}{w_k } \le \frac{\p^{t_2^-}(\x^{t_2^-}_\ell)}{w_\ell}, \text{for all $\ell\in N_\lambda$}.    
\end{equation} 

We now argue that if agent $i\in N_\mu$ wpEF1-envies an agent in $N_\lambda$ in the allocation $(\x^{t_2}, \p^{t_2})$, then \cref{alg:three-agent-types} will perform a chore transfer from $N_\mu$ to $N_\lambda$ at some later time-step $t_3\in[t_2,t_4]$ such that in $(\x^{t_3}, \p^{t_3})$ no agent in $N_\mu$ wpEF1-envies any agent in $N_\lambda$.

To see why, let us examine the chore transfers occuring in $[t_2, t_4]$. By definition of $t_2$, there are no $N_1$ to $N_\mu$ chore transfers in this interval. If there are also no $N_\mu$ to $N_\lambda$ chore transfers in this interval, then at time-step $t_4$, group $N_1$ ceases to be the wBE group due to a chore transfer from $N_1$ to $N_\lambda$. Hence by the analysis in Case (1), we have that $(\x^{t_4}, \p^{t_4})$ is wpEF1.

Let us therefore consider the earliest time-step $t_3\in[t_2,t_4)$ when a $N_\mu$ to $N_\lambda$ transfer takes place. With $t_3^-:= t_3-1$ and the definitions of $t_2$ and $t_3$, one notes that $\x^{t_3^-}_i = \x^{t_2}_i$ for every $i\in N_\mu$. By defining $\hat{\x}^{t_3}$ as the allocation $\x^{t_3}$ restricted to agents in $N_\mu$, we observe:

\begin{equation*}
\begin{aligned}
\hat{\x}^{t_3} &= \wps(N_\mu, \bigcup_{i\in N_\mu} \x_i^{t_3^-}\setminus j_2) \\
&= \wps(N_\mu, \bigcup_{i\in N_\mu} \x_i^{t_2}\setminus j_2) \\
&= \wps(N_\mu, (M_\mu\setminus j_2) \cup j_1).
\end{aligned}
\end{equation*}

Comparing this with $\y = \wps(N_\mu, M_\mu\setminus j_2)$, we can apply \cref{lem:wps-transfer-1} to the group $N_\mu$ to conclude for any $i\in N_\mu$:
\[ 
\frac{\p_{-1}^{t_2^-}(\x^{t_3}_i)}{w_i} \le \min_{k\in N_\mu}\frac{\p^{t_2^-}(\y_k)}{w_k}.
\]
Using this with \cref{eq:case2} we get that:
\[ 
\frac{\p_{-1}^{t_2^-}(\x^{t_3}_i)}{w_i} \le \min_{\ell\in N_\lambda}\frac{\p^{t_2^-}(\x^{t_2^-}_\ell)}{w_\ell}.
\]
Now notice that in the period $[t_2, t_3]$, the groups $N_\lambda$ and $N_\mu$ undergo payment drops together. Together with the fact that $\x^{t_2^-}_\ell \subseteq \x^{t_3}_\ell$ for all $\ell\in N_\lambda$, we also have:
\[ 
\frac{\p_{-1}^{t_3}(\x^{t_3}_i)}{w_i} \le \min_{\ell\in N_\lambda}\frac{\p^{t_3}(\x^{t_3}_\ell)}{w_\ell},
\]
showing that no agent $i\in N_\mu$ wpEF1-envies any agent in $N_\lambda$ at $(\x^{t_3}, \p^{t_3})$. Subsequent to $t_3$, the group $N_\mu$ does not gain any new chores and the group $N_\lambda$ will not lose any chores. Hence no agent in $N_\mu$ wEF1-envies any agent in $N_\lambda$ at $(\x^{t_4}, \p^{t_4})$ as well.

\item[(v)] We now argue that no agent $i\in N_1$ wEF1-envies an agent $\ell\in N_\lambda$ at $(\x',\p) = (\x^{t_4}, \p^{t_4})$. Observe that:
\begin{equation}\label{eq:case2v}
\begin{aligned}
\frac{\p_{-1}^{t_4}(\x^{t_4}_i)}{w_i} &\le \frac{\p_{-1}^{t_4}(\x^{t_4}_b)}{w_b} \\
&\le \max_{h\in N_\mu}\frac{\p_{-1}^{t_3}(\x^{t_3}_h)}{w_h} \\
&\le \min_{\ell\in N_\lambda}\frac{\p^{t_3}(\x^{t_3}_\ell)}{w_\ell},    
\end{aligned}
\end{equation}
where the first inequality uses that $b\in N_\mu$ is the wBE agent at $(\x^{t_4},\p^{t_4})$, the second inequality uses the fact that in the interval $[t_3, t_4]$, the group $N_\mu$ either undergoes payment drops or loses chores, but does not gain any chores, and the third inequality uses the fact from part (iv) that no agent $h\in N_\mu$ wpEF1-envies any agent $\ell\in N_\lambda$ at $(\x^{t_3}, \p^{t_3})$.

Now observe that no agent in $N_1$ ever undergoes a payment drop, hence the MPB ratio of all agents in $N_1$ is one. \cref{eq:case2v} therefore implies:
\begin{equation}\label{eq:case2v2}
\begin{aligned}
\min_{j\in \x^{t_4}_i} \frac{d_i(\x^{t_4}_i\setminus j)}{w_i} &= \frac{\p_{-1}^{t_4}(\x^{t_4}_i)}{w_i} \\
&\le \min_{\ell\in N_\lambda}\frac{\p^{t_3}(\x^{t_3}_\ell)}{w_\ell} \\
&\le \min_{\ell\in N_\lambda}\frac{d_i(\x^{t_3}_\ell)}{w_\ell}.
\end{aligned}
\end{equation}

Moreover since an agent $N_\lambda$ can only gain a chore in the interval $[t_3, t_4]$, we have $\x^{t_3}_\ell \subseteq \x^{t_4}_\ell$ for all $\ell\in N_\lambda$. Together with \cref{eq:case2v2} this implies that:
\begin{equation}
\min_{j\in \x^{t_4}_i} \frac{d_i(\x^{t_4}_i\setminus j)}{w_i} \le \min_{\ell\in N_\lambda}\frac{d_i(\x^{t_3}_\ell)}{w_\ell} \le \min_{\ell\in N_\lambda}\frac{d_i(\x^{t_4}_\ell)}{w_\ell},
\end{equation}
showing that in the allocation $\x^{t_4}$, no agent $i\in N_1$ wEF1-envies any agent $\ell\in N_\lambda$.
\end{itemize}
Since these cases are exhaustive, we conclude that the allocation $\x'$ is wEF1.
\end{proof}

\section{wEF1+fPO for Two-Chore-Types Instances}\label{sec:two-chore-types}
We now turn to the problem of existence and computation of a wEF1+fPO allocation for instances with two types of chores. We answer this question positively in this section. Thus, our result generalizes that of \cite{aziz2022twotypes}, who showed that EF1+fPO allocations can be computed in polynomial time for two-chore-types instances with unweighted agents. 

\begin{theorem}\label{thm:two-chore-types}
In any two chore type instance, a wEF1 and fPO allocation exists, and can be computed in polynomial time.
\end{theorem}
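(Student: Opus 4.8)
The plan is to exploit the fact that with only two chore types the set of fPO allocations has a simple ``ordered'' description, reducing the problem to a one-dimensional search. Write the two chore types as $a$ and $b$, and for each agent define the ratio $r_i = d_i(a)/d_i(b)$; sort the agents so that $r_1 \le r_2 \le \dots \le r_n$. In a competitive equilibrium with payments $p_a, p_b$ on the two types (identical copies necessarily carry equal payments), the pain-per-buck comparison of $d_i(a)/p_a$ against $d_i(b)/p_b$ shows that agent $i$ is on MPB for $a$ exactly when $r_i \le \rho$ and for $b$ exactly when $r_i \ge \rho$, where $\rho = p_a/p_b$. Consequently, in any such CE the low-ratio agents receive only copies of $a$, the high-ratio agents receive only copies of $b$, and only agents at the threshold ratio $\rho$ may receive both. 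Following \cite{aziz2022twotypes}, this lets me index fPO allocations by a \emph{pivot} agent $k$: I fix $\rho = r_k$, place $k$ at the threshold, and let the agents on either side absorb copies of the chore they are on MPB for. By \cref{lem:pEF1impliesEF1} it then suffices to drive such a CE to wpEF1.

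For a fixed pivot $k$ I would run a transfer process that always maintains a CE at price ratio $r_k$ (hence fPO). Initially all chores are assigned to $k$; since copies are identical, every agent's earning $\p(\x_i)$ and payment-up-to-one-chore $\p_{-1}(\x_i)$ depend only on how many copies of each type it holds, so the entire analysis is in terms of payments. I then repeatedly transfer a single chore away from the pivot: a copy of $a$ to an agent with $r_i \le r_k$, chosen by a weighted picking sequence among the $a$-side agents, or a copy of $b$ to an agent with $r_i \ge r_k$, chosen by WPS among the $b$-side agents. The WPS guarantees (\cref{lem:wps-transfer-1,lem:wps-transfer-2,lem:two-chore-type-wps}) ensure that these transfers keep the earnings balanced within each side, so that no two agents \emph{on the same side} ever wpEF1-envy one another, and that each transfer weakly lowers the pivot's weighted payment. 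The pivot therefore acts as a buffer absorbing the residual load, and I continue transferring until the allocation is wpEF1 --- yielding a wEF1 and fPO allocation --- or until the pivot can shed no further chore consistent with the ordered structure while envy persists, in which case this pivot is declared incorrect.

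The crux is to show that some pivot succeeds, which I would establish by a discrete intermediate-value argument along the sorted agents. The key monotone quantity is the gap between the balanced weighted-earning levels of the two sides: for a small-ratio pivot the $a$-side is forced to split all copies of $a$ among very few agents and is therefore over-loaded relative to the lightly loaded $b$-side, while for a large-ratio pivot the situation is exactly reversed. As the pivot index increases, the $a$-side grows and the $b$-side shrinks, so this side-imbalance changes sign monotonically; the pivot at which it crosses zero is the one for which both sides, bridged by the pivot, sit at a common weighted-earning level, and the WPS balancing confines the residual discrepancy to a single chore, giving wpEF1. Concretely, a ``pivot incorrect'' verdict caused by an over-loaded $a$-side tells the search to move to a larger-ratio pivot (and symmetrically for $b$), so the correct pivot can even be located by binary search.

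The main obstacle I anticipate is precisely this existence/monotonicity step: coupling the two otherwise independent WPS distributions through the shared pivot, proving the single-crossing behaviour of the side-imbalance, and verifying that at the crossing pivot the leftover imbalance is genuinely bounded by one chore in the weighted sense --- all while certifying that every intermediate transfer preserves the CE and the ordered structure and never introduces cross-side envy before the crossing is reached. Once this is in place, termination and the polynomial runtime are routine: for each fixed pivot at most $m$ chores are transferred away from it, each transfer together with its WPS recomputation is polynomial, and there are only $n$ candidate pivots (or $O(\log n)$ under binary search), giving an overall $\poly{n,m}$ bound.
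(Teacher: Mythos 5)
Your per-pivot machinery is essentially the paper's Algorithm 2: sort agents by $d_{iA}/d_{iB}$, fix a pivot $i$, set payments $p_A = d_{iA}$, $p_B = d_{iB}$ so that the MPB structure forces $A$-chores onto agents left of the pivot and $B$-chores onto agents right of it (this is the paper's fPO argument, \cref{lem:two-chore-types-fpo}), start with all chores on the pivot, and repeatedly shed a chore to a weighted least earner on the appropriate side until wpEF1, at which point \cref{lem:pEF1impliesEF1} yields wEF1 and fPO. However, the step you explicitly defer --- ``the crux is to show that some pivot succeeds'' --- is precisely the mathematical content of the theorem, and your sketch of it does not stand as a proof. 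You posit a scalar ``side-imbalance'' that changes sign monotonically in the pivot index, invoke a single-crossing/intermediate-value argument, and even propose binary search. Nothing in the proposal establishes this monotonicity, and it is not clear it is even well-posed: different pivots induce different payment vectors ($p_A=d_{iA},p_B=d_{iB}$ in Phase $i$ versus $p_A=d_{(i+1)A},p_B=d_{(i+1)B}$ in Phase $i+1$), so ``weighted earning levels'' of the two sides are measured in incomparable units across phases, and there is no obvious single quantity whose sign-crossing you can take for granted. Your claim that WPS balancing ``confines the residual discrepancy to a single chore'' at the crossing pivot is exactly the assertion that needs proof.

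The paper closes this gap with a strictly weaker, purely local statement that suffices for an inductive left-to-right sweep: \cref{lem:failures} shows that if pivot $i$ ends in a $B$-fail then pivot $i+1$ cannot end in an $A$-fail; since pivot $1$ cannot $A$-fail and pivot $n$ cannot $B$-fail, some pivot $i^*$ never fails and its phase terminates wpEF1 (using \cref{lem:i-envy}, which proves by induction over transfers --- not by a generic WPS guarantee, as you assert --- that the pivot remains the unique weighted big earner). Proving the local implication is itself the hard technical work you are missing: one must show that the shedding process (all chores on one agent, repeatedly transferred to the extremal-index wLE, Algorithm $\A$) produces the same allocation as $\wps$ on identical chores (\cref{lem:two-chore-type-wps}); this yields the containments $\x'_\ell \subseteq \x_\ell$ on the $B$-side and $\x_k \subseteq \x'_k$ on the $A$-side between the terminal allocations of consecutive phases, and chaining the resulting payment inequalities contradicts the ordering $d_{iA}/d_{iB} \le d_{(i+1)A}/d_{(i+1)B}$. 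Two further points: the tie-breaking in the choice of wLE (maximum index on the $A$-side, minimum on the $B$-side) is what makes the $\A=\wps$ identification go through, so it cannot be left unspecified; and even granting \cref{lem:failures}, it says nothing about pivots farther away, so your binary-search claim remains unsupported --- the paper's argument inherently sweeps all pivots in order, which is fine since $n$ phases of at most $m$ transfers is already polynomial.
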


To prove \cref{thm:two-chore-types}, we present a polynomial time algorithm, \cref{alg:two-chore-types}, that computes a wEF1 and fPO allocation for a given two-chore-types instance. We note that \cref{alg:two-chore-types} is also an alternative algorithm to that of \cite{aziz2022twotypes} for computing an EF1+PO allocation.

\begin{algorithm}[!t]
\caption{wEF1+fPO for two-chore-types instances}\label{alg:two-chore-types}
\textbf{Input:} Two-chore-types instance $(N,W,M,D)$\\
\textbf{Output:} An integral wEF1 and fPO allocation $\x$
\begin{algorithmic}[1]
\State Let $M = A\sqcup B$ be a partition of chores according to type
\For{$i=1$ to $n$}
\State $\x_i \gets M$, $\x_h \gets \emptyset$ for $h\neq i$
\State Set payments as $p_A = d_{iA}$ and $p_B = d_{iB}$
\While{$(\x,\p)$ is not wpEF1}
    \State $L \gets \{k: k\in \arg\min_{h \in [n]} \frac{\p(\x_h)}{w_h}\}$ \Comment{$L$ is the set of global wLE agents}
    \State $\ell_A \gets \max (L \cap [i-1])$ \Comment{Set $\ell_A$ as the maximum index wLE in $[i-1]$; $\max\emptyset=\infty$}
    \State $\ell_B \gets \min (L\cap [n]\setminus [i])$ \Comment{Set $\ell_B$ as the minimum index wLE in $[n]\setminus[i]$; $\min\emptyset=0$}
    \If{$\ell_A < i$ and $\x_i \cap A \neq \emptyset$} \Comment{Transfer $A$-chore from $i$ to $\ell_A$}
        \State Let $a \in \x_i \cap A$ be an $A$-chore in $\x_i$
        \State $\x_i \gets \x_i\setminus a$, $\x_{\ell_A} \gets \x_{\ell_A} \cup a$ 
    \ElsIf{$\ell_B > i$ and $\x_i \cap B \neq \emptyset$} \Comment{Transfer $B$-chore from $i$ to $\ell_B$}
        \State Let $b \in \x_i \cap B$ be a $B$-chore in $\x_i$
        \State $\x_i \gets \x_i\setminus b$, $\x_{\ell_B} \gets \x_{\ell_B} \cup b$ 
    \Else: Break \Comment{Failure in Phase $i$}
    \EndIf
\EndWhile
\If{$(\x,\p)$ is wpEF1}
\State \Return $\x$
\EndIf
\EndFor
\end{algorithmic}
\end{algorithm}

Let $M = A \sqcup B$ be a partition of the chores into two sets, each containing chores of the same type. For $X\in\{A,B\}$, we refer to a chore in set $X$ as an $X$-chore, and denote by $d_{iX}$ the disutility an agent $i$ has for any chore in set $X$. 

\subsection{Algorithm Description}
We first sort and re-label the agents in non-decreasing order of $d_{iA}/d_{iB}$, i.e., for $i < j$, $\frac{d_{iA}}{d_{iB}} \le \frac{d_{jA}}{d_{jB}}$. Roughly speaking, agents with a smaller index prefer to do $A$-chores over $B$-chores, and vice-versa. 

Our algorithm proceeds in at most $n$ phases. Starting from $i=1$, in Phase $i$ we select agent $i$ as the \textit{pivot} agent. The pivot agent $i$ is initially assigned all the chores (making $i$ the only agent with wpEF1-envy and the unique wBE), and the payments of the chores are set according to the disutilities of the pivot. In other words, we set the payment of a chore in set $X$ as $p_X = d_{iX}$ for $X\in\{A,B\}$.  While the allocation is not wpEF1, we attempt to reduce the wpEF1-envy of $i$ by transferring a chore from $i$ to a wLE agent (this maintains the property that while the allocation is not wpEF1, $i$ is the only agent with wpEF1-envy and is the unique wBE). Let $\ell_A$ be the index of a wLE agent among agents $1$ through $(i-1)$ if such an agent exists (otherwise let $\ell_A = \infty$), with ties broken in favour of larger index (Line 7). Similarly, let $\ell_B$ be the index of a wLE agent among agents $(i+1)$ through $n$ if such an agent exists (otherwise let $\ell_B = 0$), with ties broken in favour of smaller index (Line 8). We attempt to make a chore transfer as follows:
\begin{itemize}[leftmargin=*]
    \item[] (Lines 9-11) If $\ell_A < i$ and $\x_i \cap A \neq \emptyset$, then there is a wLE agent to the left of the pivot agent $i$ and $i$ has an $A$-chore which is on MPB for $\ell_A$. Thus, we transfer this $A$-chore from $i$ to $\ell_A$. 
    \item[] (Lines 12-14) If $\ell_B > i$ and $\x_i \cap B \neq \emptyset$, then there is a wLE agent to the right of pivot agent $i$ and $i$ has a $B$-chore which is on MPB for $\ell_B$. Thus, we transfer this $B$-chore from $i$ to $\ell_B$.
    \item[] (Line 15) If neither of the above cases are met, then we have encountered a failure in Phase $i$. It cannot be that a wLE exists on both sides of $i$, as then $i$ must have been able to transfer some chore to a wLE agent (since we maintain that $i$ is the wBE her bundle must be non-empty). This implies that either (i) a wLE agent exists only on the left side of $i$ but $i$ has no $A$-chores to transfer, or (ii) a wLE agent exists only on the right side of $i$ but $i$ has no $B$-chores to transfer. In the case of (i), we say that agent $i$ faces an $A$-fail in Phase $i$, and in the case of (ii) we say that agent $i$ faces a $B$-fail in Phase $i$. After facing a failure, no more transfers are performed in Phase $i$.
\end{itemize}
\cref{alg:two-chore-types} terminates either when a wpEF1 allocation is found, or when all phases are completed.

\subsection{Analysis of \cref{alg:two-chore-types}}
We now prove \cref{thm:two-chore-types} by arguing that \cref{alg:two-chore-types} finds a wpEF1 and fPO allocation. We first show:
\begin{restatable}{lemma}{lemTwoChoreFpo}\label{lem:two-chore-types-fpo}
Throughout the run of \cref{alg:two-chore-types}, every allocation is fPO.
\end{restatable}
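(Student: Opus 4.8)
The plan is to show that every allocation $(\x,\p)$ arising in \cref{alg:two-chore-types} is a competitive equilibrium, since by the First Welfare Theorem a CE allocation is fPO. Because $\x$ is always a complete integral partition of the chores, and because payments are only attached to chore \emph{types} (all $A$-chores share payment $p_A$, all $B$-chores share $p_B$), the only nontrivial CE condition to verify is the MPB condition: every agent receives only chores that are minimum-pain-per-buck for her. The minimum-payment condition $\p(\x_i)=e_i$ can be taken to \emph{define} the entitlements $e_i$ at each step, so it is automatically satisfied and need not be tracked.

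First I would set up the MPB structure explicitly. With two chore types, each agent $i$ has exactly two pain-per-buck ratios, $d_{iA}/p_A$ and $d_{iB}/p_B$, so $\mpb_i$ consists of $A$-chores iff $d_{iA}/p_A \le d_{iB}/p_B$, of $B$-chores iff the reverse inequality holds, and of both when equality holds. The key observation to exploit is the sorting: agents are relabelled so that $d_{iA}/d_{iB}$ is non-decreasing in $i$. At the start of Phase $i$ the payments are set to $p_A = d_{iA}$, $p_B = d_{iB}$, which makes both chore types MPB for the pivot $i$ (both ratios equal $1$); for an agent $k<i$ we have $d_{kA}/d_{kB}\le d_{iA}/d_{iB}$, which rearranges to $d_{kA}/p_A \le d_{kB}/p_B$, so $A$-chores are MPB for $k$; symmetrically $B$-chores are MPB for every agent $k>i$. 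This ``ordered'' MPB structure is exactly what the transfer rules respect.

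Next I would argue, by induction on the steps within a phase, that the MPB condition is preserved. Initially the pivot holds all chores, which is valid since everything is MPB for $i$. Each transfer step moves a single chore and never changes the payments $(p_A,p_B)$ within a phase, so the MPB sets of all agents are unchanged across a transfer; I only need to check that the \emph{recipient} of a transferred chore finds that chore MPB (the bundles of all other agents, and hence their MPB-validity, are untouched). In Lines~9--11 an $A$-chore is transferred to $\ell_A < i$, and by the ordered structure above $A$-chores are MPB for $\ell_A$; in Lines~12--14 a $B$-chore is transferred to $\ell_B > i$, and $B$-chores are MPB for $\ell_B$. The pivot loses a chore that it held validly, so its remaining bundle stays MPB. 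Hence $\x_k\subseteq\mpb_k$ is maintained for all $k$, and $(\x,\p)$ remains a CE throughout Phase $i$.

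The main obstacle I anticipate is being careful about the boundary and degenerate cases in the ordering argument, rather than any deep difficulty: I must handle ties in the ratios $d_{iA}/d_{iB}$ (where a chore type is MPB for agents on the ``wrong'' side, but weakly, so the inequalities still hold as equalities and MPB membership is preserved), and the edge cases $\ell_A=\infty$ or $\ell_B=0$ where no eligible recipient exists so no transfer of that type is made. I would also note that since payments are reset only at the start of each new phase, the argument resets cleanly phase by phase, so it suffices to prove CE-preservation within a single phase and invoke it for each $i$. Assembling these observations, every allocation in the run is a CE and therefore fPO, which establishes \cref{lem:two-chore-types-fpo}.
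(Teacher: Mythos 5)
Your proof is correct and takes essentially the same route as the paper's: both rest on the observation that with payments $p_A = d_{iA}$, $p_B = d_{iB}$ and agents sorted by $d_{iA}/d_{iB}$, $A$-chores are MPB for every agent $h<i$, $B$-chores are MPB for every agent $h>i$, and both types are MPB for the pivot, so the allocation stays on MPB throughout Phase $i$ and fPO follows from the welfare theorem. The only cosmetic difference is that you establish this by induction on transfer steps, whereas the paper verifies the MPB condition directly from the structural invariant that $\x_h \subseteq A$ for all $h<i$ and $\x_h \subseteq B$ for all $h>i$ at every point in Phase $i$.
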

\begin{proof}
Consider Phase $i$. Recall that the payment of a chore $j\in X$ is $p_X = d_{iX}$, for $X\in \{A,B\}$. Consider an allocation $\x$ during Phase $i$. By the nature of transfers performed by \cref{alg:two-chore-types}, we have $\x_h \subseteq A$ for all $h<i$, and $\x_h \subseteq B$ for all $h>i$. We show $(\x,\p)$ is on MPB. This is clear for agent $i$. For an agent $h < i$, the ordering of the agents implies that $\frac{d_{hA}}{d_{hB}} \le \frac{d_{iA}}{d_{iB}}$. Thus, $\frac{d_{hA}}{p_A} \le \frac{d_{hB}}{p_B}$, showing that $\x_h$ is on MPB for agent $h$. Likewise for an agent $h > i$, the ordering of the agents implies that $\frac{d_{iA}}{d_{iB}} \le \frac{d_{hA}}{d_{hB}}$. Thus, $\frac{d_{hB}}{p_B} \le \frac{d_{hA}}{p_A}$, once again showing that $\x_h$ is on MPB for agent $h$. By the Second Welfare Theorem we can conclude that $\x$ is fPO since $(\x, \p)$ is on MPB.
\end{proof}

To show that \cref{alg:two-chore-types} eventually finds a wpEF1 allocation, we first show:
\begin{restatable}{lemma}{lemienvy}\label{lem:i-envy}
Throughout the execution of Phase $i$, if the allocation is not wpEF1 then agent $i$ is the only weighted big earner.
\end{restatable}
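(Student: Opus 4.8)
The plan is to prove this by induction on the sequence of chore transfers performed in Phase $i$, maintaining the invariant that (a) agent $i$ is the unique weighted big earner and (b) agent $i$ is the only agent who can wpEF1-envy anyone, as long as the allocation is not yet wpEF1. First I would establish the base case: at the start of Phase $i$, agent $i$ holds all of $M$ while every other agent holds the empty bundle. Hence $\p(\x_h)=0$ for all $h\neq i$ and $\p_{-1}(\x_i)$ is the total payment minus the single highest-paying chore, which is strictly positive whenever $m\ge 2$ and the allocation is not wpEF1. Since every $h\neq i$ has $\p_{-1}(\x_h)/w_h = 0$, agent $i$ is trivially the unique wBE, and no agent other than $i$ can wpEF1-envy anyone (their $\p_{-1}$ value is zero).

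For the inductive step, I would assume the invariant holds before a given transfer and show it persists afterward. The key observation is the structure of the transfers: a chore is only ever moved \emph{from} the pivot $i$ \emph{to} a current weighted least earner (either $\ell_A$ or $\ell_B$). Thus each transfer strictly decreases $\p(\x_i)$ and increases the earning of exactly one other agent, namely a wLE. I would argue that after such a transfer, agent $i$ remains the unique wBE: the transfer lowers $i$'s earning-up-to-one-chore, while the recipient was a weighted least earner and so cannot leapfrog $i$ in a single step to become the wBE without the allocation already being wpEF1. More carefully, since the recipient $\ell$ satisfies $\p(\x_\ell)/w_\ell \le \p(\x_h)/w_h$ for all $h$ \emph{before} the transfer, and gains only one chore, I would show that its new $\p_{-1}(\x_\ell)/w_\ell$ cannot exceed $i$'s $\p_{-1}(\x_i)/w_i$ unless wpEF1 already holds — using that $i$'s bundle remains non-empty and dominant in payment while the allocation is not wpEF1.

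The main obstacle I anticipate is ruling out the possibility that, after a transfer, \emph{some agent other than $i$} becomes a new wBE who wpEF1-envies the current wLE, which would break part (b). I would address this by leveraging Lemma \ref{lem:BEenvyLE}: since only $i$ can wpEF1-envy, if $i$ itself does not wpEF1-envy a wLE then the allocation is already wpEF1 and the while-loop terminates, so the claim holds vacuously; otherwise $i$ is still the unique source of envy and hence still the unique wBE. The delicate point is verifying that a recipient agent cannot accumulate enough payment to envy while $i$ remains the pivot — here I would use the ordered structure from Lemma \ref{lem:two-chore-types-fpo} (agents $h<i$ hold only $A$-chores, agents $h>i$ hold only $B$-chores) together with the fact that the transfer always targets a global wLE, so any non-pivot agent's earning stays at or below the level that triggered the transfer. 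Concluding, since every transfer preserves both parts of the invariant and the loop only continues while the allocation is not wpEF1, agent $i$ remains the unique weighted big earner throughout Phase $i$.
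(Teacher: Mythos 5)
Your overall skeleton matches the paper's proof: induction over the chore transfers of Phase $i$, maintaining the invariant that, while the allocation is not wpEF1, agent $i$ is the only agent who wpEF1-envies anyone (from which unique-wBE status follows via Lemma~\ref{lem:BEenvyLE}: any other wBE would have $\p_{-1}$-ratio at least $i$'s and hence would also envy the wLE, contradicting the invariant). Your base case is fine, and your treatment of the recipient $\ell$ contains the right pieces: since $\x'_\ell=\x_\ell\cup\{j\}$, one has $\p_{-1}(\x'_\ell)\le\p(\x'_\ell\setminus j)=\p(\x_\ell)$; then $\ell$'s wLE status in $\x$ rules out her envy toward agents $h\notin\{i,\ell\}$, and the failure of wpEF1 before the transfer (i.e., $\p_{-1}(\x_i)/w_i>\p(\x_\ell)/w_\ell$) rules out her envy toward $i$.

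However, there is a genuine gap: you never verify that agents $h\notin\{i,\ell\}$ do not \emph{start} wpEF1-envying the pivot $i$ after the transfer. This is precisely the case where new envy can arise, because $i$'s earning strictly drops while $h$'s bundle is unchanged; your phrase that $i$'s bundle remains ``dominant in payment'' is an assertion, not an argument, and the ordered $A$/$B$ structure you import from Lemma~\ref{lem:two-chore-types-fpo} does not help here (the paper's proof of this lemma never uses it). The paper closes this case with the chain $\frac{\p_{-1}(\x'_h)}{w_h}=\frac{\p_{-1}(\x_h)}{w_h}\le\frac{\p_{-1}(\x_i)}{w_i}\le\frac{\p(\x_i\setminus j)}{w_i}=\frac{\p(\x'_i)}{w_i}$: the key observation is that $\p_{-1}(\x_i)$ discards $i$'s \emph{highest-paying} chore, so $i$'s earning after losing an arbitrary chore $j$ is still at least $\p_{-1}(\x_i)$, which in turn dominates every other agent's $\p_{-1}$ because $i$ is the wBE before the transfer. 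Without this step your induction does not close. Relatedly, your resolution of the ``main obstacle'' is circular as written: you invoke ``since only $i$ can wpEF1-envy'' \emph{after} the transfer, which is exactly what the inductive step must establish. Once all four envy comparisons are verified ($h$ vs.\ $k\ne i$, $h$ vs.\ $i$, $\ell$ vs.\ $h\notin\{i,\ell\}$, and $\ell$ vs.\ $i$), the appeal to Lemma~\ref{lem:BEenvyLE} becomes legitimate and the lemma follows as in the paper.
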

\begin{proof}
We prove this inductively. In the initial allocation where all chores are assigned to $i$, it is clear that $i$ is the only agent with wpEF1-envy and is the wBE. Suppose it is true that in an allocation $\x$ during the execution of Phase $i$, agent $i$ is the only agent with wpEF1-envy. Let $\x'$ be the allocation obtained by transferring one chore $j$ from $i$ to a wLE $\ell$. 
Thus $\x'_i = \x_i \setminus j$, $\x'_\ell = \x_\ell \cup j$, and for $h \notin \{i, \ell\}$, we have $\x'_h = \x_h$. 
We show that no agent $h\neq i$ has wpEF1-envy towards any other agent. We have the following four cases to consider:
\begin{itemize}[leftmargin=*]
\item Agents $h\notin\{i,\ell\}$ does not wpEF1-envy $k\neq i$. Note that $\x'_h = \x_h$ and $\x_k \subseteq \x'_k$. Since $h$ did not wpEF1-envy $k$ in $(\x,\p)$, $h$ will not wpEF1-envy agent $k$ in $(\x',\p)$ as well. 
\item Agent $h\notin\{i,\ell\}$ does not wpEF1-envy $i$ in $\x'$. Observe the following.
\[ \frac{\p_{-1}(\x'_h)}{w_h} = \frac{\p_{-1}(\x_h)}{w_h} \le \frac{\p_{-1}(\x_i)}{w_i} \le \frac{\p(\x_i \setminus j)}{w_i} = \frac{\p(\x'_i)}{w_i}.\]
Here: (i) the first equality holds because $\x'_h = \x_h$, (ii) the first inequality holds because $i$ is a wBE in $(\x,\p)$, (iii) the second inequality is due to the definition of $\p_{-1}(\cdot)$, and (iv) the last equality is because $\x'_i = \x_i \setminus j$.
\item Agent $\ell$ does not wpEF1-envy any agent $h\notin \{i, \ell\}$. Observe the following.
\[ \frac{\p_{-1}(\x'_\ell)}{w_\ell} \le \frac{\p(\x_\ell)}{w_\ell} \le \frac{\p(\x_h)}{w_h} = \frac{\p(\x'_h)}{w_h} .\]
Here: (i) the first inequality uses $\x'_\ell = \x_\ell \cup j$ and the definition of $\p_{-1}(\cdot)$, (ii) the second inequality holds because $\ell$ is a wLE in $(\x,\p)$, and (iii) the last equality uses $\x'_h = \x_h$.
\item Agent $\ell$ does not wpEF1-envy agent $i$. Observe the following.
\[ \frac{\p_{-1}(\x'_\ell)}{w_\ell} \le \frac{\p(\x_\ell)}{w_\ell} < \frac{\p_{-1}(\x_i)}{w_i} \le \frac{\p(\x'_i)}{w_i} .\]
Here: (i) the first inequality uses $\x'_\ell = \x_\ell \cup j$ and the definition of $\p_{-1}(\cdot)$, (ii) the second inequality uses the fact that $(\x,\p)$ was not wpEF1, and (iii) the last equality uses $\x'_i = \x_i \setminus j$.
\end{itemize}
Thus, only agent $i$ can wpEF1-envy another agent in $(\x', \p)$. Consequently, $i$ is the only wBE.
\end{proof}

The above lemma shows that if we did not find a wpEF1 allocation in Phase $i$, it must be because the only wBE $i$ could not transfer a chore to a wLE agent. That is, Phase $i$ was terminated due to $i$ facing either an $A$-fail or a $B$-fail. We next prove that:
\begin{restatable}{lemma}{lemfailures}\label{lem:failures}
If agent $i$ faces a $B$-fail, then agent $(i+1)$ cannot face an $A$-fail.
\end{restatable}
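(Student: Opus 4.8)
The plan is to argue by contradiction: suppose agent $i$ faces a $B$-fail in Phase $i$ and agent $i+1$ faces an $A$-fail in Phase $i+1$, and derive a violation of the sorting order $\frac{d_{iA}}{d_{iB}}\le\frac{d_{(i+1)A}}{d_{(i+1)B}}$. The first observation is that the two terminal allocations realize the \emph{same} type-split. At the $B$-fail the pivot $i$ has given away all its $B$-chores, so (together with the transfer structure recorded in the proof of \cref{lem:two-chore-types-fpo}) all $A$-chores lie on $S:=[i]$ and all $B$-chores on $T:=[n]\setminus[i]$, the global wLE lies in $T$, and no agent of $S$ is a wLE. At the $A$-fail the pivot $i+1$ has given away all its $A$-chores, so again all $A$-chores lie on $S=[i]$ and all $B$-chores on $T=[n]\setminus[i]$, but now the global wLE lies in $S$ and none lies in $T$. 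Write $(a_i,b_i)=(d_{iA},d_{iB})$ and $(a_{i+1},b_{i+1})=(d_{(i+1)A},d_{(i+1)B})$ for the two payment vectors, and let $x_h$ (resp. $y_h$) denote the number of $A$- (resp. $B$-)chores held by agent $h$.

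The key structural fact I would establish is that in each phase the \emph{non-pivot} group is filled by a weighted picking sequence, and hence attains the max-min weighted chore-count for that group. Indeed, in Phase $i$ a $B$-chore is handed to a group-$T$ agent only when that agent is the current global (hence within-$T$) wLE, and $T$-agents only ever receive chores; thus the $B$-chores accumulate on $T$ exactly in WPS order, and by \cref{lem:two-chore-type-wps} the terminal $T$-distribution maximizes $\min_{h\in T}\frac{y_h}{w_h}$ over all ways of splitting $n_B$ identical chores among $T$. Symmetrically, the terminal $S$-distribution in Phase $i+1$ maximizes $\min_{h\in S}\frac{x_h}{w_h}$.

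Let $\rho$ and $\sigma$ denote the global minimum weighted earnings at the two fails. From Phase $i$: every agent of $S$ strictly exceeds the global minimum $\rho$, so $\rho< a_i\min_{h\in S}\frac{x_h}{w_h}$; but the Phase-$i$ $S$-distribution is merely \emph{one} way of splitting $n_A$ chores among $S$, so its min weighted-count is at most the max-min value, which (by the previous paragraph) is realized in Phase $i+1$ and equals $\frac{\sigma}{a_{i+1}}$. Hence $\rho<\frac{a_i}{a_{i+1}}\,\sigma$. Running the symmetric argument on $T$ in Phase $i+1$ gives $\sigma<\frac{b_{i+1}}{b_i}\,\rho$. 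To chain these I also need $\rho,\sigma>0$: a $B$-fail forces every agent of $S$ to be non-empty (an empty $S$-agent would earn $0\le\rho$ and be a wLE), whence $n_A\ge|S|$, so the balanced split of $n_A$ chores over $S$ leaves no agent empty and $\sigma>0$; symmetrically the $A$-fail gives $n_B\ge|T|$ and $\rho>0$. Combining, $\rho<\frac{a_i}{a_{i+1}}\sigma<\frac{a_i b_{i+1}}{a_{i+1}b_i}\,\rho$, and cancelling $\rho>0$ yields $a_i b_{i+1}>a_{i+1}b_i$, i.e. $\frac{d_{iA}}{d_{iB}}>\frac{d_{(i+1)A}}{d_{(i+1)B}}$, contradicting the sorting order.

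The main obstacle, and the reason a naive approach fails, is integrality: bounding the balanced group's \emph{total} earning against its least earner loses up to one chore per agent and is too weak to close the argument. The fix is to avoid aggregation entirely and instead compare the two phases through the single scalar max-min weighted chore-count of each group: because $S$ carries the same $n_A$ chores in both phases (and $T$ the same $n_B$), this max-min value is a common quantity whose realization in one phase upper-bounds the least earner in the other. The technical work therefore lies in (i) proving the WPS/max-min characterization of the terminal non-pivot distributions — that transfers to the current within-group wLE, interleaved with transfers to the other group, realize a genuine picking sequence and hence the group's max-min value — and (ii) the bookkeeping showing that a one-sided fail forces its non-pivot group to be fully non-empty, which is exactly what licenses the final cancellation of $\rho$.
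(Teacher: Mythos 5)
Your proof is correct, and it reaches the contradiction by a genuinely different route than the paper's. The paper compares the two failed phases agent-by-agent: it shows that the group which is \emph{filled} during a phase ends at exactly the $\wps$ outcome, shows that the group containing the pivot is an \emph{intermediate} state of the transfer process $\A$ (\cref{alg:A}), and then uses the equivalence $\A=\wps$ (\cref{lem:two-chore-type-wps}) to extract the bundle inclusions $\x'_\ell\subseteq\x_\ell$ and $\x_k\subseteq\x'_k$ for the two wLE agents, which feed a four-term chain of weighted-earning inequalities under both payment vectors. You instead compress each group to a single scalar, its max-min weighted chore-count ($V_S$ over splits of the $A$-chores among $[i]$, $V_T$ over splits of the $B$-chores among $[n]\setminus[i]$): the filled group attains it, the pivot's group is just \emph{some} split and so is bounded by it, and the strict no-wLE-on-one-side condition at each fail yields $\rho<\frac{a_i}{a_{i+1}}\sigma$ and $\sigma<\frac{b_{i+1}}{b_i}\rho$, whence, after your (correct) positivity bookkeeping, $a_i b_{i+1}>a_{i+1}b_i$, contradicting the sorting. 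The main gain of your route is that you never need \cref{alg:A} or \cref{lem:two-chore-type-wps} --- the paper's most technical ingredient for this lemma --- but only the tie-breaking-independent fact that $\wps$ on identical chores maximizes $\min_h s_h/w_h$; the paper's version, by contrast, yields the stronger per-agent containments, which is what its earning-ratio chain consumes.

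Two details to supply before the argument is complete. First, the max-min optimality of $\wps$ is \emph{not} what \cref{lem:two-chore-type-wps} states (that lemma only asserts $\A=\wps$), so citing it there is a mis-attribution and you must prove the optimality separately; it is true and short: during $\wps$ the quantity $\min_k s_k/w_k$ is non-decreasing over iterations, and for any rival split $\{t_h\}$ of the same number of chores some agent $i$ has $t_i\le s_i-1$, while at the moment $\wps$ gave $i$ its $s_i$-th chore the running minimum equaled $(s_i-1)/w_i\ge t_i/w_i$, so the final $\wps$ minimum is at least $\min_h t_h/w_h$ (this argument is independent of tie-breaking, which you need since the $S$-side picking sequence breaks ties toward larger index). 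Second, your claim that \emph{every} agent of the pivot's group strictly exceeds the global minimum includes the pivot itself, which the fail condition --- phrased about the two sides of $i$ --- does not directly cover; it follows because at a failure the allocation is not wpEF1 and the pivot is the unique wBE (\cref{lem:i-envy}), so if the pivot were also a wLE the allocation would be wpEF1 by \cref{lem:BEenvyLE}, a contradiction. With these two points made explicit, your proof goes through.
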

Before proving \cref{lem:failures}, we note that we can now prove \cref{thm:two-chore-types}. By definition of a failure, agent $1$ cannot face an $A$-fail. Thus if agent $1$ faces a failure, it must be due to a $B$-fail. By \cref{lem:failures}, we know that agent $2$ cannot face an $A$-fail. Proceeding inductively, we conclude for each $i\in[n]$ that either agent $i$ faces a $B$-fail or does not face a failure at all. However, agent $n$ cannot face a $B$-fail by definition of failure. Thus it must be that there is some pivot agent $i^*\in[n]$ who did not face a failure. 

Therefore, by invoking \cref{lem:i-envy} we conclude that in Phase $i^*$, it was always possible to transfer chores from the \textit{only} wBE agent $i^*$ to a wLE agent, until the allocation became wpEF1. \cref{lem:two-chore-types-fpo} shows that all allocations encountered in \cref{alg:two-chore-types} are fPO. Thus, a wEF1 and fPO allocation will be found in Phase $i^*$. Since there are at most $n$ phases and there are at most $m$ chore transfers in each phase, \cref{alg:two-chore-types} terminates in polynomial time. This proves \cref{thm:two-chore-types}.

It only remains to prove the crucial \cref{lem:failures}. To do this, we 
relate the allocation returned by \cref{alg:two-chore-types} in Phase $i$ with the allocation returned by running the weighted picking sequence algorithm for assigning $A$-chores to $[i]$ and $B$-chores to $[n]\setminus [i]$. Let $\A$ be the algorithm given by \cref{alg:A}, which allocates identical chores to agents by initially assigning all chores to agent 1 and then transferring chores one by one to the smallest index wLE agent until the allocation is wpEF1. Thus $\A$ mimics the execution of Phase $i$ of \cref{alg:three-agent-types} for assigning $B$-chores, as well as $A$ chores but in reverse order. Our next lemma proves that the executions of $\A$ and $\wps$ return the same allocation for identical chores.

\begin{algorithm}[t]
\caption{Algorithm $\A$}\label{alg:A}
\textbf{Input:} Agents $N$, set of identical chores $M$ \\
\textbf{Output:} A wEF1 allocation $\x$
\begin{algorithmic}[1]
\State $\x_1 \gets M$; $\x_i \gets \emptyset$ for $i\neq 1$ 
\State Set payments as $p_j = 1$ for all $j\in M$
\While{$(\x,\p)$ is not wpEF1}
\State $i \gets \arg\min_{k\ge 2} \frac{\p(\x_k)}{w_k}$; ties broken in favour of smaller index
\State Transfer a single chore from $\x_1$ to $\x_i$
\EndWhile
\State \Return $\x$
\end{algorithmic}
\end{algorithm}

\begin{lemma}\label{lem:two-chore-type-wps}
Consider a set $N$ of agents and a set $M$ of identical chores. Then the allocations returned by \cref{alg:A} and $\wps$ are the same. That is, $\A(N, M) = \wps(N, M)$.
\end{lemma}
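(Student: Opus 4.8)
The plan is to exploit the fact that, because all chores are identical, an allocation is completely determined by its \emph{load vector} $(|\x_1|,\dots,|\x_n|)$, and that in both algorithms a chore is always routed to an agent minimizing $|\x_k|/w_k$ with ties broken toward the smaller index. (In $\A$ this is because every payment is $1$, so $\p(\x_k)=|\x_k|$, and in $\wps$ it is the rule $\arg\min_k s_k/w_k$ with $s_k=|\x_k|$.) Let $(t_1,\dots,t_n)$ be the load vector of $\wps(N,M)$ and put $r^{*}:=m-t_1=\sum_{k\ge 2}t_k$. I will show that $\A$ performs \emph{exactly} $r^{*}$ transfers and halts at precisely the $\wps$ allocation.

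The first step is a coupling of the two processes on the agent set $\{2,\dots,n\}$. In $\A$ agent $1$ only loses chores and each transfer goes to $\arg\min_{k\ge 2}|\x_k|/w_k$; hence after $r$ transfers the bundles of agents $2,\dots,n$ are exactly what $\wps$ run on $\{2,\dots,n\}$ produces from $r$ chores, while agent $1$ holds $m-r$. On the other side, in $\wps(N,M)$ a chore is handed to an agent $\ge 2$ precisely when agent $1$ is not the chosen minimizer, and in that case the recipient is the $\arg\min$ among $\{2,\dots,n\}$; since agent $1$ receiving a chore never changes the other agents' loads, the recipients among $\{2,\dots,n\}$ evolve exactly as $\wps$ on $\{2,\dots,n\}$ dictates. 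Therefore the sub-allocation of $\wps(N,M)$ to $\{2,\dots,n\}$ coincides with $\wps$ on $\{2,\dots,n\}$ run with $r^{*}$ chores, which shows that the state of $\A$ after $r^{*}$ transfers is identical to $\wps(N,M)$.

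It remains to prove that $\A$ first reaches a wpEF1 allocation at exactly step $r^{*}$. For the ``not earlier'' direction I exhibit, for every $r<r^{*}$, an explicit violation. Writing $\mu_r:=\min_{k\ge 2}|\x_k|/w_k$ for the step-$r$ state, agent $1$'s up-to-one weighted earning is $\frac{(m-r)-1}{w_1}$, and I claim
\[
\frac{(m-r)-1}{w_1}\ \ge\ \frac{t_1}{w_1}\ >\ \mu_{r^{*}-1}\ \ge\ \mu_r ,
\]
so agent $1$ strictly wpEF1-envies a weighted least earner and the allocation is not wpEF1. The first inequality holds since $r\le r^{*}-1$, and the last uses that the weighted minimum load is nondecreasing along $\wps$ (each new chore is placed on the current weighted minimizer). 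For the ``by then'' direction, $\wps(N,M)$ is wEF1 \cite{chak2020wef1,wu2023wef1}, and since here payments and disutilities are proportional (all equal up to a common scale), wEF1 coincides with wpEF1; thus the step-$r^{*}$ state is wpEF1 and $\A$ halts there, returning $\wps(N,M)$.

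The crux is the strict middle inequality $\frac{t_1}{w_1}>\mu_{r^{*}-1}=\frac{t_{\ell^{*}}-1}{w_{\ell^{*}}}$, where $\ell^{*}$ is the last agent in $\{2,\dots,n\}$ to receive a chore in $\wps(N,M)$. Consider the instant $\tau$ at which $\ell^{*}$ picks its final chore: it is the global minimizer, so $\frac{t_{\ell^{*}}-1}{w_{\ell^{*}}}\le \frac{s_1(\tau)}{w_1}\le \frac{t_1}{w_1}$, where $s_1(\tau)$ is agent $1$'s load at $\tau$. If agent $1$ receives any chore after $\tau$ then $s_1(\tau)<t_1$ and the inequality is strict; otherwise $\tau$ is the last step overall, agent $1$ sits at its final load $t_1$ yet was \emph{not} chosen at $\tau$, so the tie-breaking toward the smaller index (agent $1$) forces $\frac{t_{\ell^{*}}-1}{w_{\ell^{*}}}<\frac{t_1}{w_1}$. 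Either way strictness holds. I expect this tie-breaking boundary case to be the main obstacle: without it the inequality is only weak, and $\A$ could conceivably terminate one step early at a different wpEF1 allocation. An argument analogous to \cref{lem:i-envy} (only agent $1$ can be envious while it is the pivot) corroborates the picture but is not strictly needed, since for the lower direction exhibiting agent $1$'s envy already suffices.
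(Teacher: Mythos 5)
Your proof is correct, and it follows the paper's overall skeleton---couple $\A$ with $\wps$ and show that $\A$ cannot halt before reaching the $\wps$ allocation, then invoke wEF1 of $\wps$ (which coincides with wpEF1 here, since payments are unit and chores identical) to see that it halts exactly there---but you organize the two halves differently. The paper proves the coupling and the non-early-termination \emph{simultaneously}, by induction on the iterations of $\wps$: it maintains the invariant $y_i(t) = x_i(t - y_1(t))$ for all $i\ge 2$, and each time $\wps$ assigns a chore to some $k\neq 1$ it verifies on the spot that agent $1$ wpEF1-envies $k$ in the corresponding state of $\A$ (the strict inequality $y_k(t)/w_k < y_1(t)/w_1$ coming from the tie-break toward the smaller index, exactly the fact you use), so that $\A$ must continue and performs the matching transfer. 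You instead decouple the two steps: the projection observation (picks to $\{2,\dots,n\}$ in $\wps(N,M)$ evolve exactly as $\wps$ on $\{2,\dots,n\}$, as do $\A$'s transfers, since agent $1$'s picks do not alter the others' loads) identifies the states in one stroke, and non-early termination is a single global chain $\frac{m-r-1}{w_1}\ge\frac{t_1}{w_1}>\mu_{r^*-1}\ge\mu_r$, combining monotonicity of the restricted weighted minimum load with one strictness analysis at the last pick of $\ell^*$ (where you correctly split into the case that agent $1$ picks again after $\tau$, giving $s_1(\tau)<t_1$, and the case that $\tau$ is the final iteration, where the small-index tie-break forces strictness). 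The trade-off: your argument localizes the only delicate point---strictness at the boundary---to a single place and replaces per-step envy verification with one monotone potential, which is arguably cleaner; the paper's step-by-step induction instead certifies at every intermediate state of $\A$ that agent $1$ envies the next recipient, information your projection argument carries only implicitly. Both arguments ultimately rest on the same two ingredients: the left-shift/projection structure of WPS on identical chores, and strictness from tie-breaking in favor of agent $1$.
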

\begin{proof}
Let $\x(t)$ (resp. $\y(t)$) denote the allocation after $t$ iterations of $\A(N,M)$ (resp. $\wps(N, M)$). Since the chores are identical, an allocation is characterized simply by the number of chores assigned to each agent. Let $x_i(t)$ (resp. $y_i(t)$) denote the number of chores assigned to agent $i$ in $\x(t)$ (resp. $\y(t)$). Note that $\wps(N,M)$ runs for $m$ iterations. We claim that:
\begin{quote}
For all $0\le t\le m$ and $2\le i\le n$, $y_i(t) = x_i(t-y_1(t))$.
\end{quote}
We prove this claim by induction on $t$. For $t=0$ the claim holds since $y_i(0) = x_i(0) = 0$ for all $i\neq 1$. Suppose the claim is true for some $0\le t\le m-1$. Consider iteration $(t+1)$ of $\wps(N,M)$. We have two cases.
\begin{itemize}[leftmargin=*]
\item $\wps(N,M)$ assigns a chore to agent $1$ in iteration $(t+1)$. Then $y_1(t+1) = y_1(t)+1$ and $y_i(t+1) = y_i(t)$ for $i\neq 1$ . We then have for $i\neq 1$:
\[
x_i(t+1 - y_1(t+1)) = x_i(t-y_1(t)) = y_i(t) = y_i(t+1),
\]
where we used the induction hypothesis in the second equality.
\item $\wps(N,M)$ assigns a chore to agent $k\neq 1$ in iteration $(t+1)$. We argue that $k$ is a minimum-index wLE in $(\x(t-y_1(t)),\p)$; recall that $\A$ sets all payments equal to $1$. Since $\wps$ chose agent $k$ to be assigned a chore in iteration $(t+1)$, agent $k$ must be a minimum-index agent satisfying $\frac{y_k(t)}{w_k} \le \frac{y_h(t)}{w_h}$, for all $h\in N$. Then by the induction hypothesis we have $\frac{x_k(t-y_1(t))}{w_k} \le \frac{x_h(t-y_1(t))}{w_h}$, for all $h\in N$, implying $k$ is the minimum-index wLE in $(\x(t-y_1(t)), \p)$. 

We show that agent $1$ wpEF1-envies agent $k$ in $(\x(t-y_1(t)),\p)$ as follows:
\begin{align*}
&\frac{x_k(t-y_1(t))}{w_k} = \frac{y_k(t)}{w_k} \qquad\qquad\qquad\qquad\quad\;\;{\text{ (using IH)}}\\
&< \frac{y_1(t)}{w_1} \qquad\qquad\qquad\quad{\text{ (since $k$ is the min-index wLE)}} \\
&= \frac{t-\sum_{i\ge 2} y_i(t)}{w_1} \qquad\qquad\qquad\quad\quad\text{ ($\y(t)$ has $t$ chores)}\\
&\le \frac{t-\sum_{i\ge 2} x_i(t-y_1(t))}{w_1} \qquad\qquad\qquad\quad\quad\text{ (using IH)} \\
&= \frac{m-1-\sum_{i\ge 2} x_i(t-y_1(t))}{w_1} \qquad\;\;\text{ (since $t\le m-1$)} \\
&= \frac{x_1(t-y_1(t))}{w_1}. \;\;\text{(since $\x(t-y_1(t))$ has $m$ total chores)}    
\end{align*}
    
Thus $\A$ will perform a transfer of a chore from agent $1$ to agent $k$ in $(\x(t-y_1(t)),\p)$. This results in the allocation $\x(t+1-y_1(t))$, which equals $\x(t+1-y_1(t+1))$ since $y_1(t+1) = y_1(t)$. Using the induction hypothesis, the claim then follows by noting that for agent $i\in[n]\setminus\{1,k\}$:
\begin{equation*}
\begin{aligned}
x_i(t+1-y_1(t+1)) &= x_i(t+1-y_1(t)) \\
&=x_i(t-y_1(t) \\
&= y_i(t) = y_i(t+1).
\end{aligned}
\end{equation*}
and for agent $k$:
\begin{equation*}
\begin{aligned}
x_k(t+1-y_1(t+1)) &= x_k(t+1-y_1(t)) \\
&= x_k(t-y_1(t))+1 \\
&= y_k(t) + 1 = y_k(t+1).  
\end{aligned}
\end{equation*}
\end{itemize}

Thus by induction the claim holds. Plugging $t=m$ we obtain $x_i(m-y_1(m)) = y_i(m)$ for all $i\neq 1$. Then we have $x_1(m-y_1(m)) = m-\sum_{i\ge 2} x_i(m-y_1(m)) = m - \sum_{i\ge 2} y_i(m) = y_1(m)$. This shows that $\x(m-y_1(m)) = \y$. Since $\y=\wps(N,M)$ is wEF1, so is the allocation $\x(m-y_1(m))$, and $\A(N,M)$ halts and returns with $\x(m-y_1(m))$. This proves that $\A(N,M) = \wps(N,M)$.
\end{proof}

We now prove \cref{lem:failures}.
\lemfailures*
\begin{proof}
For the sake of contradiction, assume agent $i$ faces a $B$-fail and agent $(i+1)$ faces an $A$-fail. Let $(\x, \p)$ be the allocation when Phase $i$ was terminated and let $\ell > i$ be the minimum-index wLE. Moreover, all chores in $A$ are allocated to agents $1$ to $i$ while chores in $B$ are allocated to agents $(i+1)$ to $n$. 
Next, let $(\x', \p')$ be the allocation when Phase $(i+1)$ was terminated and let $k < i+1$ be the maximum index wLE. As in $\x$, in $\x'$ all chores in $A$ are allocated to agents $1$ to $i$ while chores in $B$ are allocated to agents $(i+1)$ to $n$.

Let us first examine the allocation of $B$ to agents $(i+1)$ to $n$. Let $\hat{\x}$ (resp. $\hat{\x}'$) be the allocation $\x$ (resp. $\x'$) when restricted to agents $(i+1)$ to $n$. Observe that $\hat{\x}$ is obtained by iteratively allocating a $B$-chore to the wLE in $[n]\setminus [i]$. Since the payment of each $B$-chore is the same, the wLE is the agent with the least ratio $t_\ell/w_\ell$, where $t_\ell$ is the number of $B$-chores allocated to agent $\ell$. Moreover since transfers are made to minimum-index wLE agents, the allocation $\hat{\x}$ is the same as $\wps([n]\setminus[i], B)$, the allocation obtained by using the weighted picking sequence algorithm applied to agents $(i+1)$ to $n$ with ties broken in favour of smaller index agents. Thus, $\hat{\x} = \wps([n]\setminus [i], B)$. On the other hand, $\hat{\x}'$ is obtained by initially allocating all chores in $B$ to agent $(i+1)$ and then iteratively transferring a chore to the minimum-index wLE. Thus, $\hat{\x}'$ is an allocation encountered in the run of algorithm $\A$ (Alg. \ref{alg:A}) applied to agents $(i+1)$ to $n$, with all chores initially assigned to $(i+1)$. Thus we have for all $h \in (i+1,n]$, $\hat{\x}'_h\subseteq \y_h$, where $\y = \A([n]\setminus[i], B)$. In \cref{lem:two-chore-type-wps} we show that $\wps([n]\setminus [i], B) = \A([n]\setminus[i], B)$. Thus we have for $h\in (i+1, n]$, $\hat{\x}'_h \subseteq \hat{\x}_h$. In particular, this implies $\x'_\ell \subseteq \x_\ell$.

Let us next examine the allocation of $A$ to agents $1$ to $i$. Let $\tilde{\x}$ (resp. $\tilde{\x}'$) be the allocation $\x$ (resp. $\x'$) when restricted to agents $1$ to $i$. Note that in $\x$, chores are transferred from $i$ to the maximum index wLE. Hence, by reversing the labels of agents in $[i]$ and using similar arguments, we have that $\tilde{\x}'$ is the same as $\wps([i], A)$, and $\tilde{\x}$ is an allocation encountered in the run of $\A$ applied to agents $1$ to $i$. Thus we have for all $h\in[1,i)$, $\tilde{\x}_h \subseteq \z_h$, where $\z = \A([i], A)$. \cref{lem:two-chore-type-wps} shows $\wps([i], A) = \A([i], A)$, hence $\tilde{\x}_h \subseteq \tilde{\x}'_h$ for $h\in[1,i)$. In particular, this implies $\x_k \subseteq \x'_k$.

We can now complete the proof of the lemma. Observe the following:
\begin{equation}\label{eq:A-fail-ineq}
\frac{\p'(\x_k)}{w_k} \le \frac{\p'(\x'_k)}{w_k} \le \frac{\p'(\x'_\ell)}{w_\ell} \le \frac{\p'(\x_\ell)}{w_\ell},
\end{equation}
where the first inequality uses $\x_k \subseteq \x'_k$, the second uses that $k$ is a wLE in $(\x',\p')$, and the third uses $\x'_\ell \subseteq \x_\ell$. Next, we note that since $\ell$ is the wLE in $(\x,\p)$ and $k$ is not, we have:
\begin{equation}\label{eq:B-fail-ineq}
\frac{\p(\x_k)}{w_k} > \frac{\p(\x_\ell)}{w_\ell},
\end{equation}

Using the above two inequalities \eqref{eq:B-fail-ineq} and \eqref{eq:A-fail-ineq}, we get $\frac{\p(\x_\ell)}{\p(\x_k)} < \frac{w_\ell}{w_k} \leq \frac{\p'(\x_\ell)}{\p'(\x_k)}$. Note that $\p(\x_\ell) = |\x_\ell|\cdot p_B = |\x_\ell|\cdot d_{iB}$ and $\p(\x_k) = |\x_k|\cdot p_A = |\x_\ell|\cdot d_{iA}$. Similarly, $\p'(\x_\ell) = |\x_\ell|\cdot p'_B = |\x_\ell|\cdot d_{(i+1)B}$ and $\p'(\x_k) = |\x_k|\cdot p'_A = |\x_k|\cdot d_{(i+1)A}$. Putting these together, we obtain $\frac{d_{iB}}{d_{iA}} < \frac{d_{(i+1)B}}{d_{(i+1)A}}$, which is a contradiction, since we sorted agents in non-decreasing order of $d_{iA}/d_{iB}$.
\end{proof}

\section{Discussion}
In this work we studied the problem of computing a wEF1 and fPO allocation of chores for agents with unequal weights or entitlements. We showed positive algorithmic results for this problem for instances with three types of agents, or two types of chores. The existence of EF1 and fPO allocations of chores in the symmetric case remains a hard open problem. Our results further our understanding of this problem by contributing to the body of positive non-trivial results. Together with the result of \cite{wu2023wef1} concerning bivalued chores, our paper shows that wEF1 and fPO allocations exists for every structured instance known so far that admits an EF1 and fPO allocation. Our idea of combining the competitive equilibrium framework with envy-resolving algorithms like the weighted picking sequence algorithms could be an important tool in settling the problem in its full generality.

\appendix
\section{Examples}~\label{app:examples}
\begin{example}
An unweighted EF1+PO allocation in an instance with copies of agents does not give a weighted wEF1+PO allocation.
\end{example}
Consider an instance with two agents $a$ and $b$ and three chores, with $w_a = 1$ and $w_b = 2$, and disutilities given by:
\begin{center}
\begin{tabular}{ |c||c|c|c| } 
 \hline
  & $j_1$ & $j_2$ & $j_3$ \\
 \hline
 $a$ & 1 & 10 & 10 \\ 
 \hline
 $b$ & 2 & 10 & 10 \\ 
 \hline
\end{tabular}
\end{center}

Suppose we create two copies of agent $b$: agents $b_1$ and $b_2$, and construct an EF1+PO allocation in the resulting unweighted instance with three agents $a, b_1$ and $b_2$. Observe that the following allocation $\x$ is EF1+PO:
\begin{itemize}
    \item $\x_a = \{j_1\}$,
    \item $\x_{b_1} = \{j_2\}$,
    \item $\x_{b_2} = \{j_3\}$.
\end{itemize}
Now consider the allocation $\y$ where agents $b_1$ and $b_2$ are combined to give $\y_b = \y_{b_1} \cup \y_{b_2}$, and $\y_a = \x_a$. However $\y$ is not wEF1 since $b$ wEF1 envies $a$ as $d_b(\y_b\setminus j)/w_b = 10/2 > 2 = d_b(\y_a)/w_a$ for $j\in\{j_2,j_3\}$.

\begin{example}
A weighted wEF1+PO allocation does not give an unweighted EF1+PO allocation with copies of agents. 
\end{example}
We observe the following weighted instance with two agents $a$ and $b$ and seven chores, where $w_a = 1$ and $w_b = 3$.
\begin{center}
\begin{tabular}{ |c||c|c|c|c|c|c|c| } 
 \hline
  & $j_1$ & $j_2$ & $j_3$ & $j_4$ & $j_5$ & $j_6$ & $j_7$\\
 \hline
 $a$ & 2 & 2 & 2 & 3 & 3 & 3 & 3 \\ 
 \hline
 $b$ & 100 & 100 & 100 & 99 & 99 & 99 & 99 \\ 
 \hline
\end{tabular}
\end{center}
Then, the following allocation $\x$ is wEF1+PO:
\begin{itemize}
    \item $\x_a = \{j_1, j_2, j_3\}$,
    \item $\x_b = \{j_4, j_5, j_6, j_7\}$.
\end{itemize}
However, this allocation does not give an unweighted EF1+PO allocation where agent $b$ is split into three copies $b_1$, $b_2$, and $b_3$, and $\x_{b_1} \cup \x_{b_2} \cup \x_{b_3} = \x_b$. Indeed, some copy of $b$ must receive at most one chore, so this agent will be EF1-envied by $a$.

\newpage
\bibliography{references}

\begin{thebibliography}{10}

\bibitem{amanatidis2023survey}
Georgios Amanatidis, Haris Aziz, Georgios Birmpas, Aris Filos-Ratsikas, Bo~Li, Hervé Moulin, Alexandros~A. Voudouris, and Xiaowei Wu.
\newblock Fair division of indivisible goods: Recent progress and open questions.
\newblock {\em Artificial Intelligence}, 322:103965, 2023.
\newblock URL: \url{https://www.sciencedirect.com/science/article/pii/S000437022300111X}, \href {https://doi.org/10.1016/j.artint.2023.103965} {\path{doi:10.1016/j.artint.2023.103965}}.

\bibitem{aziz2019chores}
Haris Aziz, Ioannis Caragiannis, Ayumi Igarashi, and Toby Walsh.
\newblock Fair allocation of indivisible goods and chores.
\newblock In {\em Proceedings of the 28th International Joint Conference on Artificial Intelligence (IJCAI)}, page 53–59, 2019.

\bibitem{aziz2022survey}
Haris Aziz, Bo~Li, Herv\'{e} Moulin, and Xiaowei Wu.
\newblock Algorithmic fair allocation of indivisible items: a survey and new questions.
\newblock {\em SIGecom Exch.}, 20(1):24–40, nov 2022.
\newblock \href {https://doi.org/10.1145/3572885.3572887} {\path{doi:10.1145/3572885.3572887}}.

\bibitem{aziz2022twotypes}
Haris Aziz, Jeremy Lindsay, Angus Ritossa, and Mashbat Suzuki.
\newblock Fair allocation of two types of chores.
\newblock In {\em Proceedings of the 2023 International Conference on Autonomous Agents and Multiagent Systems}, AAMAS '23, page 143–151, Richland, SC, 2023. International Foundation for Autonomous Agents and Multiagent Systems.

\bibitem{Barman18FFEA}
Siddharth Barman, Sanath~Kumar Krishnamurthy, and Rohit Vaish.
\newblock Finding fair and efficient allocations.
\newblock In {\em Proceedings of the 19th ACM Conference on Economics and Computation (EC)}, pages 557--574, 2018.

\bibitem{bhaskar2020chores}
Umang Bhaskar, A.~R. Sricharan, and Rohit Vaish.
\newblock {On Approximate Envy-Freeness for Indivisible Chores and Mixed Resources}.
\newblock In Mary Wootters and Laura Sanit\`{a}, editors, {\em Approximation, Randomization, and Combinatorial Optimization. Algorithms and Techniques (APPROX/RANDOM 2021)}, volume 207 of {\em Leibniz International Proceedings in Informatics (LIPIcs)}, pages 1:1--1:23, 2021.

\bibitem{caragiannis16nsw-ef1}
Ioannis Caragiannis, David Kurokawa, Herv\'{e} Moulin, Ariel~D. Procaccia, Nisarg Shah, and Junxing Wang.
\newblock The unreasonable fairness of maximum {N}ash welfare.
\newblock In {\em Proceedings of the 17th ACM Conference on Economics and Computation (EC)}, page 305–322, 2016.

\bibitem{chak2020wef1}
Mithun Chakraborty, Ayumi Igarashi, Warut Suksompong, and Yair Zick.
\newblock Weighted envy-freeness in indivisible item allocation.
\newblock In {\em Proceedings of the 19th International Conference on Autonomous Agents and MultiAgent Systems (AAMAS)}, page 231–239, 2020.

\bibitem{ebadian2021bivaluedchores}
Soroush Ebadian, Dominik Peters, and Nisarg Shah.
\newblock How to fairly allocate easy and difficult chores.
\newblock In {\em International Conference on Autonomous Agents and MultiAgent Systems (AAMAS)}, 2022.

\bibitem{FeigeST21}
Uriel Feige, Ariel Sapir, and Laliv Tauber.
\newblock A tight negative example for {MMS} fair allocations.
\newblock In {\em Web and Internet Economics - 17th International Conference, {WINE}}, volume 13112, pages 355--372, 2021.

\bibitem{foleyEF}
D.K. Foley.
\newblock Resource allocation and the public sector.
\newblock {\em Yale Economic Essays}, 7(1):45--98, 1967.

\bibitem{GargM21}
Jugal Garg and Aniket Murhekar.
\newblock On fair and efficient allocations of indivisible goods.
\newblock In {\em Proceedings of the 35th AAAI Conference on Artificial Intelligence (AAAI)}, 2021.

\bibitem{Garg_Murhekar_Qin_2022}
Jugal Garg, Aniket Murhekar, and John Qin.
\newblock Fair and efficient allocations of chores under bivalued preferences.
\newblock {\em Proceedings of the 36th AAAI Conference on Artificial Intelligence (AAAI)}, pages 5043--5050, 2022.

\bibitem{GMQ23chores}
Jugal Garg, Aniket Murhekar, and John Qin.
\newblock New algorithms for the fair and efficient allocation of indivisible chores.
\newblock In Edith Elkind, editor, {\em Proceedings of the Thirty-Second International Joint Conference on Artificial Intelligence, {IJCAI-23}}, pages 2710--2718. International Joint Conferences on Artificial Intelligence Organization, 8 2023.
\newblock Main Track.
\newblock URL: \url{https://doi.org/10.24963/ijcai.2023/302}.

\bibitem{Lipton04onapproximately}
Richard Lipton, Evangelos Markakis, Elchanan Mossel, and Amin Saberi.
\newblock On approximately fair allocations of indivisible goods.
\newblock In {\em In ACM Conference on Electronic Commerce (EC}, pages 125--131, 2004.

\bibitem{mas1995microeconomic}
A.~Mas-Colell, M.D. Whinston, and J.R. Green.
\newblock {\em Microeconomic Theory}.
\newblock Oxford University Press, 1995.

\bibitem{steinhaus}
H.~Steinhaus.
\newblock Sur la division pragmatique.
\newblock {\em Econometrica}, 17(1):315--319, 1949.

\bibitem{wu2023wef1}
Xiaowei Wu, Cong Zhang, and Shengwei Zhou.
\newblock Weighted {EF1} allocations for indivisible chores.
\newblock In {\em Proceedings of the 24th ACM Conference on Economics and Computation}, EC '23, page 1155, New York, NY, USA, 2023. Association for Computing Machinery.
\newblock \href {https://doi.org/10.1145/3580507.3597763} {\path{doi:10.1145/3580507.3597763}}.

\end{thebibliography}

\end{document}